\documentclass[11pt,twoside,a4paper,notitlepage]{article}

\usepackage{ifthen}

\newboolean{conferenceversion}
\setboolean{conferenceversion}{false}

\ifthenelse{\boolean{conferenceversion}}
{}
{
  \usepackage[a4paper,margin=2.5cm]{geometry}
  \usepackage{times}
}

\ifthenelse{\boolean{conferenceversion}}
{}
{
  \usepackage[english]{babel}
}

\ifthenelse{\boolean{conferenceversion}}
{}
{
  \usepackage{fancyhdr}
}

\usepackage[sf,SF]{subfigure}

\usepackage{bussproofs}

\ifthenelse{\boolean{conferenceversion}}
{}
{
  
}

\usepackage{ifthen}

\usepackage{ifthen}

\newboolean{detectedSTOC}
\newboolean{detectedFOCS}
\newboolean{detectedElsevier}
\newboolean{detectedNOW}
\newboolean{detectedLMCS}
\newboolean{detectedPoster}
\newboolean{detectedSIAM}
\newboolean{detectedIEEE}
\newboolean{detectedLNCS}
\newboolean{detectedACM}
\newboolean{detectedToC}
\newboolean{detectedLIPIcs}
\newboolean{detectedArticle}
\newboolean{detectedReport}
\newboolean{detectedThesis}

\makeatletter

\@ifclassloaded{sig-alternate}
{\setboolean{detectedSTOC}{true}}
{\setboolean{detectedSTOC}{false}}

\@ifclassloaded{elsarticle}
{\setboolean{detectedElsevier}{true}}
{\setboolean{detectedElsevier}{false}}

\@ifclassloaded{now}
{\setboolean{detectedNOW}{true}}
{\setboolean{detectedNOW}{false}}

\@ifclassloaded{lmcs}
{\setboolean{detectedLMCS}{true}}
{\setboolean{detectedLMCS}{false}}

\@ifclassloaded{IEEEtran} {
  \setboolean{detectedIEEE}{true}
  \ifCLASSOPTIONconference {
    \setboolean{detectedFOCS}{true}
  }
  \else {
    \setboolean{detectedFOCS}{false}
  }
  \fi
}
{
  \setboolean{detectedFOCS}{false}
  \setboolean{detectedIEEE}{false}
}

\@ifclassloaded{siamltex1213} 
{\setboolean{detectedSIAM}{true}}
{\setboolean{detectedSIAM}{false}}

\@ifclassloaded{llncs}
{\setboolean{detectedLNCS}{true}}
{\setboolean{detectedLNCS}{false}}

\@ifclassloaded{acmsmall}
{\setboolean{detectedACM}{true}}
{\setboolean{detectedACM}{false}}

\@ifclassloaded{toc}
{\setboolean{detectedToC}{true}}
{\setboolean{detectedToC}{false}}

\@ifclassloaded{lipics}
{\setboolean{detectedLIPIcs}{true}}
{\setboolean{detectedLIPIcs}{false}}

\@ifclassloaded{sciposter}
{\setboolean{detectedPoster}{true}}
{\setboolean{detectedPoster}{false}}

\@ifclassloaded{article}
{\setboolean{detectedArticle}{true}}
{\setboolean{detectedArticle}{false}}

\makeatother

\ifthenelse{\not \isundefined{\examen} 
  \and \not \isundefined{\disputationsdatum} 
  \and \not \isundefined{\disputationslokal}}   
  {\setboolean{detectedThesis}{true}}
  {\setboolean{detectedThesis}{false}}
           
\ifthenelse{\boolean{detectedArticle} \or \boolean{detectedThesis}
  \or \boolean{detectedSTOC} \or \boolean{detectedFOCS}
  \or \boolean{detectedSIAM} \or \boolean{detectedIEEE}
  \or \boolean{detectedPoster}}
{\setboolean{detectedReport}{false}}
{\setboolean{detectedReport}{true}}

\ifthenelse{\boolean{detectedLIPIcs}}
{}
{\usepackage[utf8]{inputenc}}

\usepackage{amsmath}
\usepackage{amssymb}
\usepackage{amsfonts}

\usepackage{mathtools}

\ifthenelse
{\boolean{detectedElsevier}}
{}
{\usepackage{varioref}}

\usepackage{xspace}

\ifthenelse    
{\boolean{detectedSTOC}   \or \boolean{detectedSIAM} \or 
  \boolean{detectedLMCS}  \or \boolean{detectedNOW}  \or 
  \boolean{detectedLNCS}  \or \boolean{detectedACM}  \or
  \boolean{detectedToC}  \or \boolean{detectedLIPIcs}}
{}
{
  \usepackage{amsthm}
  \usepackage{amsthm-boldfont}
}

\ifthenelse        
{\boolean{detectedElsevier}  \or \boolean{detectedFOCS}   \or 
  \boolean{detectedSTOC}     \or \boolean{detectedPoster} \or
  \boolean{detectedSIAM}     \or \boolean{detectedLMCS}   \or
  \boolean{detectedIEEE}     \or \boolean{detectedNOW}    \or
  \boolean{detectedToC}      \or \boolean{detectedThesis} \or
  \boolean{detectedLNCS}     \or \boolean{detectedACM}    \or 
  \boolean{detectedLIPIcs}}
{}
{\usepackage{nada-typography}}

\usepackage{ifthen}
\usepackage{xspace}
\ifthenelse
{\boolean{detectedElsevier}}
{}
{\usepackage{varioref}}

\DeclareMathAlphabet{\mathsfsl}{OT1}{cmss}{m}{sl}

\newcommand{\formatfunctiontoset}[1]{\mathit{#1}}

\newcommand{\introduceterm}[1]{{\emph{#1}}}

\newcommand{\eqperiod}{\enspace .}
\newcommand{\eqcomma}{\enspace ,}

\newcommand{\wrt}{with respect to\xspace}

\ifthenelse{\boolean{detectedToC}}{}
  {\newcommand{\eg}{for instance\xspace}     }
     
\ifthenelse{\boolean{detectedToC}}{}{\newcommand{\ie}{i.e.,\ }}

\ifthenelse{\boolean{detectedLIPIcs}}
{}
{}

\newcommand{\etal}{et al.\@\xspace}

\ifthenelse{\boolean{detectedIEEE}}{}{}

\newcommand{\Bigoh}[1]{\mathrm{O} \bigl( #1 \bigr)}
\newcommand{\bigoh}[1]{\mathrm{O} ( #1 )}

\newcommand{\bigtheta}[1]{\Theta ( #1 )}

\newcommand{\Bigomega}[1]{\Omega \bigl( #1 \bigr)}
\newcommand{\bigomega}[1]{\Omega ( #1 )}

\newcommand{\complclassformat}[1]        {\textrm{\upshape{\textsf{#1}}}\xspace}

\newcommand{\NP}{\complclassformat{NP}}

\newcommand{\refsec}[1]{Section~\ref{#1}}

\newcommand{\refth}[1]{Theorem~\ref{#1}}

\newcommand{\refthm}[1]{Theorem~\ref{#1}}

\newcommand{\reflem}[1]{Lemma~\ref{#1}}

\newcommand{\refpr}[1]{Proposition~\ref{#1}}

\newcommand{\refdef}[1]{Definition~\ref{#1}}

\newcommand{\refrem}[1]{Remark~\ref{#1}}

\newcommand{\refobs}[1]{Observation~\ref{#1}}

\newcommand{\refex}[1]{Example~\ref{#1}}

\newcommand{\Refth}[1]{Theorem~\ref{#1}}

\ifthenelse
{\isundefined{\refeq}}
{\newcommand{\refeq}[1]{\eqref{#1}}}
{\renewcommand{\refeq}[1]{\eqref{#1}}}

\ifthenelse{\boolean{detectedToC}}{}
{
  
  \newcommand{\R}         {\mathbb{R}}
  
  \newcommand{\N}         {\mathbb{N}}
  \newcommand{\Nplus}     {\mathbb{N}^{+}}

}

\newcommand{\F}{\mathbb{F}}

\ifthenelse{\boolean{detectedLMCS}}
{}
{}

\newcommand{\twincommandJN}[6]    {#1#2#3\vphantom{#2#5}\mspace{-2.25mu}#4.#5#6}

\newcommand{\funcdescr}[3]{\ensuremath{ #1 : #2 \to #3}}

\newcommand{\edges}[1]{E( #1 )}

\newcommand{\descname}{\formatfunctiontoset{desc}}

\ifthenelse{\isundefined{\descnode}}
{\newcommand{\descnode}[2][]{\descname_{#1}(#2)}}
{\renewcommand{\descnode}[2][]{\descname_{#1}(#2)}}

\newcommand{\set}[1]{\{ #1 \}}

\newcommand{\Setdescr}[3][|]     {\twincommandJN{\bigl\{}{#2}{\bigl#1}{\bigr}{\,#3}{\bigr\}}}

\newcommand{\setsize}[1]{\lvert#1\rvert}

\newcommand{\union}{\cup}

\newcommand{\disjointunion}{\overset{.}{\cup}}

\newcommand{\Lor}{\bigvee}
\newcommand{\Land}{\bigwedge}

\newcommand{\olnot}[1]{\overline{#1}}

\newcommand{\inductionformat}[1]{\textit{#1}}

\newcommand{\BASE}[1][]
        {\inductionformat
                {                        \ifthenelse{\equal{#1}{}}                                {Base case: }                                {Base case (#1):}                }        }

\ifthenelse{\isundefined{\DONOTLOADHYPERREFPACKAGE}
  \and \not \boolean{detectedSTOC}     \and \not \boolean{detectedFOCS}
  \and \not \boolean{detectedElsevier} \and \not \boolean{detectedPoster} 
  \and \not \boolean{detectedSIAM}     \and \not \boolean{detectedACM}
  \and \not \boolean{detectedIEEE}     \and \not \boolean{detectedNOW}
  \and \not \boolean{detectedToC}      \and \not \boolean{detectedThesis}
  \and \not \boolean{detectedLNCS}     \and \not \boolean{detectedLIPIcs}}
{\usepackage[colorlinks=true,citecolor=blue]{hyperref}}
{}

\ifthenelse{\boolean{detectedSTOC}}                  
  {

\newtheorem{theorem}{Theorem}
\newtheorem{lemma}[theorem]{Lemma}
\newtheorem{proposition}[theorem]{Proposition}
\newtheorem{corollary}[theorem]{Corollary}
\newtheorem{observation}[theorem]{Observation}

\newtheorem{definition}[theorem]{Definition}
\newtheorem{claim}[theorem]{Claim}

\newtheorem{conjecture}{Conjecture}
\newtheorem{openproblem}[conjecture]{Open Problem}

\newdef{remark}{Remark}
\newdef{example}{Example}

\newcounter{unnumber}

 }
  {\ifthenelse{\boolean{detectedSIAM}}
    {

\newtheorem{observation}[theorem]{Observation}

\newtheorem{conjecture}{Conjecture}
\newtheorem{openquestion}{Open Question}

\newtheorem{remarkinner}[theorem]{Remark}
\newtheorem{exampleinner}[theorem]{Example}

\newcommand{\exampleendmarker}{\qquad$\Diamond$}
\newcommand{\remarkendmarker}{\qquad$\Diamond$}

\newenvironment{example}                        
    {\begin{exampleinner} \rm}
    {\exampleendmarker\end{exampleinner}}

\newenvironment{remark}                        
    {\begin{remarkinner} \rm}
    {\remarkendmarker\end{remarkinner}}

\newcounter{unnumber}

 }
    {\ifthenelse{\boolean{detectedNOW}}
      {

\newtheorem{standardlocalcounter}{Dummy}[chapter]
\newtheorem{standardglobalcounter}{Dummy}

\newtheorem{theorem}[standardlocalcounter]{Theorem}
\newtheorem{lemma}[standardlocalcounter]{Lemma}
\newtheorem{proposition}[standardlocalcounter]{Proposition}
\newtheorem{corollary}[standardlocalcounter]{Corollary}
\newtheorem{observation}[standardlocalcounter]{Observation}
\newtheorem{fact}[standardlocalcounter]{Fact}

\newtheorem{conjecturelocalcounter}[standardlocalcounter]{Conjecture}
\newtheorem{conjectureglobalcounter}[standardglobalcounter]{Conjecture}
\newtheorem{conjecture}[standardglobalcounter]{Conjecture}
\newtheorem{openquestion}[standardglobalcounter]{Open Question}
\newtheorem{openproblem}[standardglobalcounter]{Open Problem}
\newtheorem{problem}{Problem}

\newtheorem{property}[standardlocalcounter]{Property}
\newtheorem{definition}[standardlocalcounter]{Definition}
\newtheorem{claim}[standardlocalcounter]{Claim}

\newtheorem{algorithm}[standardlocalcounter]{Algorithm}

\newtheorem{remark}[standardlocalcounter]{Remark}
\newtheorem{example}[standardlocalcounter]{Example}

\renewenvironment{proof}[1][Proof]{\par\trivlist
   \item[\hskip \labelsep{\itshape {#1}.}]\prooffont}
   {\hspace*{0pt plus1fill}\fboxsep2.5pt\fboxrule.5pt\raise3pt\hbox{\fbox{}}\endtrivlist}
 }
      {\ifthenelse{\boolean{detectedLMCS}}
        {

\theoremstyle{plain}    

\newtheorem{theorem}[thm]{Theorem}
\newtheorem{lemma}[thm]{Lemma}
\newtheorem{proposition}[thm]{Proposition}
\newtheorem{corollary}[thm]{Corollary}
\newtheorem{observation}[thm]{Observation}

\newtheorem{conjecture}[thm]{Conjecture}
\newtheorem{problem}[thm]{Problem}

\newtheorem{openquestion}{Open Question}
\newtheorem{openproblem}{Open Problem}

\theoremstyle{definition}

\newtheorem{property}[thm]{Property}
\newtheorem{definition}[thm]{Definition}
\newtheorem{claim}[thm]{Claim}
\newtheorem{remark}[thm]{Remark}
\newtheorem{example}[thm]{Example}

 }
        {\ifthenelse{\boolean{detectedIEEE}}
          {

\newtheorem{standardlocalcounter}{Dummy}[section]
\newtheorem{standardglobalcounter}{Dummy}

\theoremstyle{plain}    

\newtheorem{theorem}[standardglobalcounter]{Theorem}
\newtheorem{lemma}[standardglobalcounter]{Lemma}
\newtheorem{proposition}[standardglobalcounter]{Proposition}
\newtheorem{corollary}[standardglobalcounter]{Corollary}
\newtheorem{observation}[standardglobalcounter]{Observation}
\newtheorem{fact}[standardglobalcounter]{Fact}

\newtheorem{conjecture}[standardglobalcounter]{Conjecture}
\newtheorem{openquestion}{Open Question}
\newtheorem{openproblem}{Open Problem}
\newtheorem{problem}{Problem}

\theoremstyle{definition}

\newtheorem{property}[standardglobalcounter]{Property}
\newtheorem{definition}[standardglobalcounter]{Definition}
\newtheorem{claim}[standardglobalcounter]{Claim}

\theoremstyle{remark}
\newtheorem{remark}[standardglobalcounter]{Remark}
\newtheorem{example}[standardglobalcounter]{Example}

\newtheoremstyle{meta}  {3pt}  {3pt}  {\scshape \small }  {}  {\scshape \small }  {:}  { }          {}
\theoremstyle{meta}
\newtheorem{meta}{Meta comment}

\newtheoremstyle{questions}  {3pt}  {3pt}  {\sffamily \slshape}  {}  {\bfseries \sffamily \slshape}  {:}  { }          {}
\theoremstyle{questions}
\newtheorem{questions}{Open questions}

 }
          {\ifthenelse{\boolean{detectedLNCS}}
            {

\spnewtheorem*{proofsketch}{Proof sketch}{\itshape}{\rmfamily}

\spnewtheorem{observation}{Observation}{\bfseries}{\itshape}
\spnewtheorem{fact}{Fact}{\bfseries}{\itshape}

 }
            {\ifthenelse{\boolean{detectedACM}}
              {

\newtheorem{conjecture}[theorem]{Conjecture}
\newtheorem{observation}[theorem]{Observation}
\newtheorem{claim}[theorem]{Claim}
\newtheorem{openquestion}{Open Question}

\newcounter{unnumber}

 }
              {\ifthenelse{\boolean{detectedToC}}
                {

\theoremstyle{plain}
\newtheorem{observation}[theorem]{Observation}
\newtheorem{openproblem}[theorem]{Open Problem}

\theoremstyle{definition}
\newtheorem{property}[theorem]{Property}

\renewcommand{\refth}[1]{\expref{Theorem}{#1}}
\renewcommand{\reflem}[1]{\expref{Lemma}{#1}}
\renewcommand{\refpr}[1]{\expref{Proposition}{#1}}

\renewcommand{\refdef}[1]{\expref{Definition}{#1}}
\renewcommand{\refrem}[1]{\expref{Remark}{#1}}
\renewcommand{\refobs}[1]{\expref{Observation}{#1}}

\renewcommand{\refex}[1]{\expref{Example}{#1}}

\renewcommand{\Refth}[1]{\expref{Theorem}{#1}}

\renewcommand{\refsec}[1]{\expref{Section}{#1}}

 }
                {\ifthenelse{\boolean{detectedLIPIcs}}
                  {

\theoremstyle{plain}    

\newtheorem{fact}[theorem]{Fact}
\newtheorem{proposition}[theorem]{Proposition}
\newtheorem{observation}[theorem]{Observation}

 }
                  {\ifthenelse{\boolean{detectedArticle} 
                      \or \boolean{detectedElsevier}}
                    {

\newtheorem{standardlocalcounter}{Dummy}[section]
\newtheorem{standardglobalcounter}{Dummy}

\theoremstyle{plain}    

\newtheorem{theorem}[standardlocalcounter]{Theorem}
\newtheorem{lemma}[standardlocalcounter]{Lemma}
\newtheorem{proposition}[standardlocalcounter]{Proposition}
\newtheorem{corollary}[standardlocalcounter]{Corollary}
\newtheorem{observation}[standardlocalcounter]{Observation}
\newtheorem{fact}[standardlocalcounter]{Fact}

\newtheorem{conjecturelocalcounter}[standardlocalcounter]{Conjecture}
\newtheorem{conjectureglobalcounter}[standardglobalcounter]{Conjecture}
\newtheorem{conjecture}[standardglobalcounter]{Conjecture}
\newtheorem{openquestion}[standardglobalcounter]{Open Question}
\newtheorem{openproblem}[standardglobalcounter]{Open Problem}
\newtheorem{problem}[standardglobalcounter]{Problem}

\theoremstyle{definition}

\newtheorem{property}[standardlocalcounter]{Property}
\newtheorem{definition}[standardlocalcounter]{Definition}
\newtheorem{claim}[standardlocalcounter]{Claim}

\theoremstyle{remark}
\newtheorem{remark}[standardlocalcounter]{Remark}
\newtheorem{example}[standardlocalcounter]{Example}

\newtheoremstyle{meta}  {3pt}  {3pt}  {\scshape \small }  {}  {\scshape \small }  {:}  { }          {}
\theoremstyle{meta}
\newtheorem{meta}{Meta comment}

\newtheoremstyle{questions}  {3pt}  {3pt}  {\sffamily \slshape}  {}  {\bfseries \sffamily \slshape}  {:}  { }          {}
\theoremstyle{questions}
\newtheorem{questions}{Open questions}

 }
                    {

\newtheorem{standardlocalcounter}{Dummy}[chapter]
\newtheorem{standardglobalcounter}{Dummy}

\theoremstyle{plain}    

\newtheorem{theorem}[standardlocalcounter]{Theorem}
\newtheorem{lemma}[standardlocalcounter]{Lemma}
\newtheorem{proposition}[standardlocalcounter]{Proposition}

\newtheorem{observation}[standardlocalcounter]{Observation}

\theoremstyle{definition}

\newtheorem{definition}[standardlocalcounter]{Definition}

\theoremstyle{remark}
\newtheorem{remark}[standardlocalcounter]{Remark}
\newtheorem{example}[standardlocalcounter]{Example}

\newtheoremstyle{meta}  {3pt}  {3pt}  {\scshape \small }  {}  {\scshape \small }  {:}  { }          {}
\theoremstyle{meta}

\newtheoremstyle{questions}  {3pt}  {3pt}  {\sffamily \slshape}  {}  {\bfseries \sffamily \slshape}  {:}  { }          {}
\theoremstyle{questions}

 }}}}}}}}}}

\ifthenelse{\boolean{detectedThesis}}
{}
{\usepackage{ifpdf}}

\ifthenelse{\boolean{detectedToC}}
{}
{\usepackage{graphicx}}

\ifpdf         
\DeclareGraphicsRule{*}{mps}{*}{}
\fi

\ifthenelse
{\boolean{detectedSTOC}  \or \boolean{detectedThesis} \or 
 \boolean{detectedLNCS}  \or \boolean{detectedToC}    \or 
 \boolean{detectedLIPIcs}}    
{}
{
  \setcounter{secnumdepth}{3}
  \setcounter{tocdepth}{3}
}

\newcount\hours
\newcount\minutes
\def\SetTime{\hours=\time
\global\divide\hours by 60
\minutes=\hours
\multiply\minutes by 60
\advance\minutes by-\time
\global\multiply\minutes by-1 }
\SetTime
\def\now{\number\hours:\ifnum\minutes<10 0\fi\number\minutes}

\DeclareFontFamily{OT1}{pzc}{}
\DeclareFontShape{OT1}{pzc}{m}{it}{<-> s * [1.200] pzcmi7t}{}
\DeclareMathAlphabet{\mathpzc}{OT1}{pzc}{m}{it}

\newcommand{\fstd}{{\ensuremath{F}}}

\newcommand{\lita}{\ensuremath{a}}

\newcommand{\cla}{\ensuremath{A}}
\newcommand{\clb}{\ensuremath{B}}
\newcommand{\clc}{\ensuremath{C}}
\newcommand{\cld}{\ensuremath{D}}

\newcommand{\setsofvarsorlitsmall}[2]
        {\mathit{#1}({#2})}

\newcommand{\vars}[1]{\setsofvarsorlitsmall{Vars}{#1}}

\newcommand{\restrict}[2]{#1\!\!\upharpoonright_{#2}}

\newcommand{\sizestd}{{\size}}

\newcommand{\stoptime}{\tau}

\usepackage{stmaryrd}

\newcommand{\varset}{\vars{\fstd}}

\newcommand{\myvec}[1]           {\ifthenelse{\equal{#1}{j}}
             {\vec{\jmath}\,}
             {\ifthenelse{\equal{#1}{i}}
               {\vec{\imath}\,}
               {\vec{#1}}}}

\newcommand{\sumsofsquares}{sums-of-squares\xspace}
\newcommand{\Sumsofsquares}{Sums-of-squares\xspace}
\newcommand{\SOS}{SOS\xspace}

\newcommand{\polyequation}{f}
\newcommand{\polymultiplier}{g}
\newcommand{\polyinequality}{h}
\newcommand{\polysos}{u}
\newcommand{\polysoscomp}{q}

\newcommand{\polyequationcount}{s}
\newcommand{\polyinequalitycount}{t}

\newcommand{\polytheorem}{p}

\newcommand{\sosdegree}{degree\xspace}
\newcommand{\sossize}{size\xspace}
\newcommand{\reswidth}{width\xspace}
\newcommand{\ressize}{size\xspace}
\renewcommand{\sizestd}{S}
\newcommand{\degreestd}{d}

\newcommand{\ressizestd}{\sizestd}
\newcommand{\reswidthstd}{w}
\newcommand{\sossizestd}{\sizestd}
\newcommand{\sosdegreestd}{\degreestd}

\newcommand{\indexdegree}{domain-degree\xspace}

\newcommand{\indexwidth}{domain-width\xspace}

\newcommand{\indexdegreestd}{\degreestd}
\newcommand{\indexdegreerun}{\eta}
\renewcommand{\indexdegreerun}{s}

\newcommand{\monomialstd}{M}

\newcommand{\indicatorbase}{\delta}
\newcommand{\indicator}[2]{\indicatorbase_{#1=#2}}
\newcommand{\stringstd}{\beta}
\newcommand{\indicatorstd}{\indicator{\myvec{\varstd}}{\stringstd}}

\newcommand{\sumencoding}[1]{S(#1)}

\newcommand{\indexstd}{i}
\newcommand{\indexaux}{j}
\newcommand{\indexauxaux}{\ell}
\newcommand{\genericint}{\ell}

\newcommand{\formulastd}{F}
\newcommand{\varstd}{x}
\newcommand{\varnum}{n}
\newcommand{\litstd}{\lita}

\newcommand{\clausecount}{\polyinequalitycount}

\newcommand{\clauseidx}{\indexaux}
\newcommand{\clauseidxaux}{\indexauxaux}
\newcommand{\varidx}{\indexstd}
\newcommand{\litidx}{\indexstd}

\newcommand{\graphstd}{G}
\newcommand{\cliquesize}{k}
\newcommand{\cliqueformula}[2][\cliquesize]    {\ensuremath{{#1}\text{-}\mathrm{Clique}(#2)}}
\newcommand{\blockformula}[1]    {\ensuremath{\cliquesize\text{-}\mathrm{Block}(#1)}}

\newcommand{\Cliqueformula}[2][\cliquesize]    {\ensuremath{{#1}\text{-}\mathrm{Clique}\bigl(#2\bigr)}}
\newcommand{\Blockformula}[1]    {\ensuremath{\cliquesize\text{-}\mathrm{Block}\bigl(#1\bigr)}}

\newcommand{\cliquevar}{x}
\newcommand{\cliqueaux}{z}
\newcommand{\cliquevertexnum}{N}
\newcommand{\cliquevertexbound}{N_0}
\newcommand{\cliquedegreebound}{\xordegreelb_0}

\newcommand{\xorstd}{\phi}
\newcommand{\xorformula}{$3$-XOR\xspace}
\newcommand{\xorvar}{x}
\newcommand{\xorvarA}{x}
\newcommand{\xorvarB}{y}
\newcommand{\xorvarC}{z}
\newcommand{\xorvarnum}{n}

\newcommand{\xordegreelb}{\alpha}
\newcommand{\xorminvar}{\xorvarnum_0}
\renewcommand{\xorminvar}{\xorvarnum_\epsilon}

\newcommand{\graphclauseparam}[2]{          G^{#1}_{#2}
}
\newcommand{\graphclausestd}{\graphclauseparam{\cliquesize}{\xorstd}}

\newcommand{\blockvarnum}{t}

\newcommand{\domainsize}{t}
\newcommand{\domainset}{D}
\newcommand{\formulaparam}[1]{\formulastd[{#1}]}

\newcommand{\domainidx}{\indexstd}

\newcommand{\signature}[1]{(\domainidx_1,\domainidx_2,\ldots,\domainidx_{#1})}
\renewcommand{\signature}[1]    {\set{\domainidx_1,\domainidx_2,\ldots,\domainidx_{#1}}}
\newcommand{\signaturename}{\myvec{\domainidx}}

\newcommand{\domainlarge}{m}
\newcommand{\domainsmall}{k}

\newcommand{\relativization}{relativization\xspace}
\newcommand{\Relativization}{Relativization\xspace}

\newcommand{\targetformula}{\formulastd_{\domainlarge,\domainsmall}}

\newcommand{\relformula}[2]{\formulaparam{#1;#2}}
\newcommand{\relstd}{\relformula{\domainsmall}{\domainlarge}}

\newcommand{\relguard}{s}
\newcommand{\guardvariable}{selector\xspace}
\newcommand{\guardedclause}{selectable clause\xspace}

\newcommand{\lowerboundlimit}{\delta}

\newcommand{\THR}[2]{\mathsf{Thr}_{#1}(#2)}

\newcommand{\thrvarstd}{p}
\newcommand{\thrvaraux}{y}
\newcommand{\thridxaux}{\indexauxaux}

\newcommand{\partassign}{\rho}
\newcommand{\padistribution}{\mathcal{R}}

\newcommand{\RestrictionBoundDisplay}[1]    {\frac{{(4\domainsmall\log \domainlarge)}^{\domainsmall}}{\domainlarge^{#1}}}

\newcommand{\RestrictionBoundInline}[1]    {{(4\domainsmall\log \domainlarge)}^{\domainsmall}/{\domainlarge^{#1}}}

\newcommand{\InverseRestrictionBoundDisplay}[1]    {\frac{\domainlarge^{#1}}{{(4\domainsmall\log\domainlarge)}^{\domainsmall}}}

\newcommand{\InverseRestrictionBoundInline}[1]    {\domainlarge^{#1}/{(4\domainsmall\log \domainlarge)}^{\domainsmall}}

\newcommand{\TheauthorJN}{The second author\xspace}

\ifthenelse{\boolean{conferenceversion}}
{

}
{

}

\ifthenelse{\boolean{conferenceversion}}
{}
{
\newtheoremstyle{metacommenttheoremstyle}    {3pt}    {3pt}    {\sffamily \itshape \scriptsize
          }    {}    {\bfseries \scshape \footnotesize }    {:}    { }        {}
\theoremstyle{metacommenttheoremstyle}
}
\newtheorem{jncommentcontainer}{Jakob's comment}
\newtheorem{mlcommentcontainer}{Massimo's comment}

\ifthenelse{\isundefined{\DONOTINSERTCOMMENTS}}
{  \usepackage[usenames,dvipsnames]{xcolor}

  \newcommand{\jncomment}[1]  {\begin{jncommentcontainer} \textcolor{blue}{#1} \end{jncommentcontainer}}

  \newcommand{\mlcomment}[1]  {\begin{mlcommentcontainer} \textcolor{OliveGreen}{#1} \end{mlcommentcontainer}}

}
{  \newcommand{\jncomment}[1]{}
  \newcommand{\mlcomment}[1]{}
}

\ifthenelse{\boolean{conferenceversion}}
{}
{
  \numberwithin{equation}{section}
}

\begin{document}

\title{Tight Size-Degree Bounds for Sums-of-Squares Proofs  \thanks{This is the full-length version of the paper with the same
    title to appear in 
    \emph{Proceedings of the 30th Annual Computational Complexity
      Conference ({CCC}~'15)}.}}

\author{  Massimo Lauria \\
  KTH Royal Institute of Technology
  \and
  Jakob Nordström \\
  KTH Royal Institute of Technology}

\date{\today}

\maketitle

\ifthenelse{\boolean{conferenceversion}}
{}
{

    \thispagestyle{empty}

    \pagestyle{fancy}
    \fancyhead{}
  \fancyfoot{}
    \renewcommand{\headrulewidth}{0pt}
  \renewcommand{\footrulewidth}{0pt}
  
                    \fancyhead[CE]{\slshape 
    TIGHT SIZE-DEGREE BOUNDS FOR SUMS-OF-SQUARES PROOFS}
  \fancyhead[CO]{\slshape \nouppercase{\leftmark}}
  \fancyfoot[C]{\thepage}
  
            \setlength{\headheight}{13.6pt}
}

\begin{abstract}
  We exhibit families of $4$-CNF formulas over $n$~variables that have
  sums-of-squares (SOS) proofs of unsatisfia\-bility of degree
  (a.k.a.\ rank)~$d$ but require SOS proofs of size $n^{\bigomega{d}}$
  for values of $d = d(n)$ from constant all the way up
  to~$n^{\delta}$ for some universal constant~$\delta$. This shows
  that the $n^{\bigoh{d}}$ running time obtained by using the Lasserre
  semidefinite programming relaxations to find degree-$d$ SOS proofs
  is optimal up to constant factors in the exponent.  We establish
  this result by combining \mbox{\NP-reductions} expressible as
  low-degree SOS derivations with the idea of relativizing CNF
  formulas in \mbox{[Kraj{\'\i}{\v{c}}ek~'04]} and \mbox{[Dantchev and
      Riis~'03]}, and then applying a restriction argument as in
  [\mbox{Atserias}, Müller, and Oliva~'13] and [Atserias, Lauria, and
    Nordström~'14].  This yields a generic method of amplifying SOS
  degree lower bounds to size lower bounds, and also generalizes the
  approach in [ALN14] to obtain size lower bounds for the proof
  systems resolution, polynomial calculus, and Sherali-Adams from
  lower bounds on width, degree, and rank, respectively.
\end{abstract}

\section{Introduction}
\label{sec:intro}

Let
$\polyequation_1, \ldots, \polyequation_\polyequationcount \in \R[\varstd_1, \dots, \varstd_\varnum]$
be real, multivariate polynomials. Then the Positivstellen\-satz
proven in~\cite{Krivine64Anneaux,Stengle73NullstellensatzPositivstellensatz}
says (as a special case) that the the system of equations
\begin{equation}
  \label{eq:schematic-eqsyst}
  \polyequation_1 = 0, \ldots, \polyequation_\polyequationcount=0  
\end{equation}
has no solution over
$\R^\varnum$
if and only if there exist polynomials
$\polymultiplier_{\clauseidx},
\polysoscomp_\clauseidxaux \in \R[\varstd_1, \dots, \varstd_\varnum]$
such that
\begin{equation}
  \label{eq:schematic-sos-proof}
  \sum_{\clauseidx=1}^{\polyequationcount} \polymultiplier_\clauseidx
  \polyequation_\clauseidx = 
  - 1 - \sum_{\clauseidxaux}\polysoscomp^{2}_{\clauseidxaux}
\eqperiod  
\end{equation}
That there can exist no solution given an expression of the
form~\refeq{eq:schematic-sos-proof} is clear, but what is more
interesting is that there always exists such an expression to certify
unsatisfiability. 
We refer to 
\refeq{eq:schematic-sos-proof}
as  a 
\mbox{\introduceterm{Positivstellensatz proof}{}}
or
\introduceterm{\Sumsofsquares (\SOS) proof}
of unsatisfiability,
or as an
\introduceterm{\SOS refu\-tation},\footnote{All proofs for systems of polynomial equations or for 
  formulas in conjunctive normal form (CNF)
  in this paper will be proofs of 
  unsatisfiability, and we will therefore use the two terms ``proof''
  and ``refutation'' interchangeably.}
of~\refeq{eq:schematic-eqsyst}.
We remark that
the Positivstellensatz also applies if we add inequalities
$\polyinequality_1 \geq 0, \ldots,
\polyinequality_{\polyinequalitycount} \geq 0$
to the system of equations
and allow terms~$
-\polyinequality_\clauseidx 
\sum_{\clauseidxaux}
\polysoscomp_{\clauseidx,\clauseidxaux}^{2}$
on the right-hand side in~\refeq{eq:schematic-sos-proof}.

The
\introduceterm{\sosdegree}\footnote{This is sometimes also referred to as the ``rank,'' but we
  will stick to the term ``degree'' in this paper.} 
 of an \SOS refutation is the maximal \sosdegree of
any $\polymultiplier_\clauseidx \polyequation_\clauseidx$. The search
for proofs of 
constant \sosdegree~$\sosdegreestd$ is \introduceterm{automatizable}
as shown in a sequence of works by 
Shor~\cite{Shor87Approach},
Nesterov~\cite{Nesterov00Squared},
Lasserre~\cite{Lasserre01Explicit},
and 
Parrilo~\cite{Parrilo00Thesis}.
What this means is that if there exists a \sosdegree-$d$ \SOS refutation for a
system of polynomial equalities (and inequalities) over~$\varnum$ variables,
then such a refutation can be found in polynomial time
${\varnum}^{\bigoh{\sosdegreestd}}$. Briefly, one can view
\refeq{eq:schematic-sos-proof}
as linear system of equations in the coefficients 
of~$\polymultiplier_\clauseidx$
and~$\polysos = \sum_\clauseidxaux \polysoscomp_\clauseidxaux^2$
with the added constraint \mbox{that $\polysos$ is} a sum of squares, and
such a system can be solved by semidefinite programming in
$\sosdegreestd/2$~rounds of the Lasserre SDP hierarchy.

In the last few years there has been renewed interest in
\sumsofsquares in the context of constraint satisfaction problems
(CSPs) and hardness of approximation, as witnessed by, for instance,
\cite{BBHKSZ12Hypercontractivity,OZ13Approximability,Tulsiani09CSPgaps}.
These works have highlighted the importance of \SOS
degree upper bounds for CSP approximability, and this is currently a
very active area of study. 

Our focus in this paper is not on algorithmic questions, however, but
more on \sumsofsquares viewed as a proof system (also referred to in the
literature as \introduceterm{Positivstellensatz} or
\introduceterm{Lasserre}). This proof system was 
introduced by Grigoriev and Vorobjov~\cite{GV01Complexity}
as an extension of the Nullstellensatz proof system studied by Beame
\etal~\cite{BIKPP94LowerBounds}, and Grigoriev established SOS   
degree lower bound for unsatisfiable $\F_2$-linear
equations~\cite{Grigoriev01LinearLowerBound} 
(also referred to as the \xorformula problem when each equation
involves at most $3$~variables)
and for the knapsack problem~\cite{Grigoriev01Knapsack}.

Given the connections to semidefinite programming and the
Lasserre SDP
hierarchy, it is perhaps not surprising that most works on SOS lower
bounds have focused on the degree measure.
However, from a proof complexity point of view it is also natural to
ask about the minimal \introduceterm{size} of SOS proofs, measured as
the number of monomials when all polynomials in each term
in~\refeq{eq:schematic-sos-proof} are expanded out as linear
combinations of monomials. Such SOS size lower bounds were proven
for knapsack in~\cite{GHP02ExponentialLowerBound}
and $\F_2$-linear systems of equations in~\cite{KI06LowerBounds},\footnote{It might be worth pointing out that definitions and terminology
  in this area have suffered from a certain lack of standardization,
  and so what \cite{KI06LowerBounds} refers to as
  ``static {L}ov{\'a}sz-{S}chrijver calculus''
  is closer to what we mean by SOS/Lasserre.}
and tree-like size lower bounds for other formulas were also obtained
in~\cite{PS12Exponential}.

A wider interest in this area of research was awakened when
Schoenebeck~\cite{Schoenebeck08LinearLevel} (essentially) rediscovered
Grigoriev's result~\cite{Grigoriev01LinearLowerBound}, which together
with further work by Tulsiani~\cite{Tulsiani09CSPgaps} led to
integrality gaps for a number of constraint satisfaction problems.
There have also been papers such as
\cite{BPS07LS}
and~\cite{GP14CommunicationLowerBounds}
focusing on \emph{semantic} versions of the proof system, with less
attention to the actual syntactic derivation rules used.
We refer the reader to, \eg, the introductory section
of~\cite{OZ13Approximability} for more background on \sumsofsquares
and connections to hardness of approximation, and to the survey
\cite{BS14SumsOfSquares} for an in-depth discussion of \SOS as an
approximation algorithm and the intriguing connections to the
so-called Unique Games Conjecture~\cite{Khot02PowerUnique}.

\subsection{Our Contribution}
\label{sec:intro-contributions}

As discussed above, if a system of polynomial equalities and
inqualities 
over~$\varnum$ variables
can be shown inconsistent by SOS in degree~$d$, then by using
semidefinite programming one can find an SOS refutation of the system
in time~${\varnum}^{\bigoh{\sosdegreestd}}$.
It is natural to ask whether this is optimal, or whether there might
exist ``shortcuts'' that could lead to SOS refutations more
quickly. 

We prove that there are no such shortcuts in general, but
that the
running time obtained by using the Lasserre
semidefinite programming relaxations to find SOS proofs
is optimal up to the constant in the exponent. 
We show this by constructing
formulas 
on $n$~variables 
(which can be translated to systems of polynomial equalities in a
canonical way) 
that have
\SOS refutations of degree~$\sosdegreestd$ 
but 
require refutations of size
$n^{\bigomega{\sosdegreestd}}$. Our lower bound proof works
for~$\sosdegreestd$ from 
constant all the way up to~$n^{\lowerboundlimit}$ for some
constant~$\lowerboundlimit$.

\begin{theorem}[informal]
  \label{th:main-result-informal}
  Let
  $
  \sosdegreestd =
  \sosdegreestd(\varnum)
  \leq \varnum^\lowerboundlimit
  $ 
  where
  $\lowerboundlimit > 0$
  is a universal constant.
  Then there is a family of $4$-CNF formulas
  $\set{\fstd_{\varnum}}_{\varnum \in \Nplus}$
  with
  $\Bigoh{\varnum^2}$~clauses over
  $\bigoh{\varnum}$~variables
  such that
  $\fstd_\varnum$
  is refutable in \sumsofsquares in degree~$\bigtheta{\sosdegreestd}$
  but any SOS refutation of~$\fstd_\varnum$
  requires size
  $\varnum^{\bigomega{\sosdegreestd}}$.
\end{theorem}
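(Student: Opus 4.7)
The plan is to prove \refth{th:main-result-informal} by combining three ingredients: a base formula with a strong SOS degree lower bound, a relativization that introduces selector variables, and a random restriction argument that amplifies degree lower bounds to size lower bounds. This follows the strategy outlined in the abstract, extending the approach of \mbox{[Atserias, Müller, Oliva~'13]} and \mbox{[Atserias, Lauria, Nordström~'14]} from resolution and polynomial calculus to the SOS setting.

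First, I would fix a base $k$-CNF formula $\formulastd$ on $\varnum$ variables with two properties: it is unsatisfiable, and every SOS refutation of it requires degree $\Omega(\sosdegreestd)$. Grigoriev's lower bound for sparse $\xorformula$ instances supplies such a formula when $\sosdegreestd = \Omega(\varnum)$. To obtain tunability across the full range of $\sosdegreestd$ up to $\varnum^{\lowerboundlimit}$, I would compose this with an \NP-reduction, expressible by a low-degree SOS derivation, that encodes "a given graph $\graphstd$ contains a $\cliquesize$-clique" into a $\xorformula$ instance over $\Theta(\cliquesize)$ blocks; the resulting family $\Cliqueformula{\graphstd}$ inherits a degree lower bound linear in $\cliquesize$, and varying $\cliquesize$ lets us slide the hardness parameter.

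Next, I would relativize $\formulastd$ to a formula $\relstd = \formulaparam{\domainsmall;\domainlarge}$ by replacing each of the $\domainsmall$ variable slots of $\formulastd$ with $\domainlarge$ candidate positions guarded by selector variables $\relguard_1, \dots, \relguard_\domainlarge$, and adding threshold clauses $\THR{\domainsmall}{\relguard}$ forcing that exactly $\domainsmall$ selectors are active. Each clause of $\formulastd$ becomes a family of guarded clauses that are falsified only when their supporting selectors are all on. This inflates the variable count to $\Theta(\domainsmall\domainlarge)$ while keeping the clause count polynomial, and it preserves the $4$-CNF form after standard flattening. The upper bound half of the theorem, namely that $\relstd$ admits an SOS refutation of degree $\bigoh{\sosdegreestd}$, follows by lifting any low-degree SOS derivation on $\formulastd$ through a case analysis over the active $\domainsmall$-subset of selectors, producing guarded copies of each derivation step with only a constant additive overhead in degree.

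The core of the proof, and where the main obstacle lies, is the size lower bound via random restriction. Suppose toward contradiction that $\relstd$ has an SOS refutation $\sum_\clauseidx \polymultiplier_\clauseidx \polyequation_\clauseidx = -1 - \sum_\clauseidxaux \polysoscomp^{2}_{\clauseidxaux}$ with fewer than $\varnum^{\lowerboundlimit \sosdegreestd}$ monomials. I would sample a random restriction $\partassign$ that independently keeps each block of variables alive with probability roughly $\domainsmall/\domainlarge$ and otherwise sets its selectors and associated variables to zero, conditioned on leaving exactly $\domainsmall$ blocks alive. A union bound shows that a monomial of \sosdegree $D$ survives only with probability at most $\Bigoh{\domainsmall/\domainlarge}^{D}$, so setting $\domainlarge$ polynomial in $\varnum$ with appropriate slack makes the probability that any monomial of \sosdegree exceeding $\bigoh{\sosdegreestd}$ survives tend to zero. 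Crucially, under $\partassign$ each square $\polysoscomp^{2}_\clauseidxaux$ becomes $\bigl(\restrict{\polysoscomp_\clauseidxaux}{\partassign}\bigr)^{2}$, so the SOS structure is preserved intact; the surviving refutation then refutes an isomorphic copy of $\formulastd$ in \sosdegree $\bigoh{\sosdegreestd}$, contradicting the base degree lower bound. The hard part will be calibrating the three parameters $\domainsmall, \domainlarge, \sosdegreestd$ so that the restriction argument works uniformly for $\sosdegreestd$ ranging from a constant up to $\varnum^{\lowerboundlimit}$, and verifying that the guarded clauses and threshold axioms are preserved rather than broken by the restriction; this amounts to an analogue of the switching-type lemmas used in \mbox{[ALN14]} but adapted to keep sums of squares as sums of squares.
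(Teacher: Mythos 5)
Your proposal matches the paper's three-stage architecture---a base clique formula reduced from 3-XOR, relativization via selectors and threshold axioms, and a random restriction for the size lower bound---and you correctly note that restrictions preserve squares, which is why the argument carries over to SOS. There is, however, a genuine gap in the restriction analysis. Your claim that a monomial of \sosdegree~$D$ survives with probability at most $\Bigoh{\domainsmall/\domainlarge}^D$ is false: the restriction acts on a variable only through its domain index~$\domainidx$, so survival depends on how many \emph{distinct indices} the monomial mentions, not on how many variables it contains. A monomial of degree~$100$ all of whose variables carry a single index that happens to lie in the selected $\domainsmall$-subset survives the restriction intact, still with degree~$100$; no slack in $\domainlarge$ helps. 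The restricted refutation therefore cannot be guaranteed to have low \sosdegree, and the contradiction with a ``base degree lower bound'' does not follow as written. The paper avoids this by introducing the notion of \emph{\indexdegree}---the number of distinct domain indices mentioned by a monomial---and tracking it throughout: the 3-XOR reduction and \refth{thm:degreelowerbound} give a \indexdegree lower bound, \reflem{lmm:StrongRestriction} bounds the \indexdegree of a restricted monomial, and \refth{thm:size-lower-bounds} derives the contradiction against the \indexdegree lower bound. Without this notion the calibration does not close, and the problem persists whether the dead block variables are set to zero (your choice) or to uniformly random bits as in \refdef{def:restriction}, since neither constrains the plain degree of a surviving monomial.

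Two smaller points. You describe the reduction as encoding ``$\graphstd$ contains a $\cliquesize$-clique'' into 3-XOR, but the direction is reversed: one builds a $\cliquesize$-partite graph $\graphclausestd$ \emph{from} a hard 3-XOR instance $\xorstd$, so that a low-\indexdegree SOS refutation of $\cliqueformula{\graphclausestd}$ would yield a low-\sosdegree SOS refutation of~$\xorstd$. Also, the upper bound does not incur only a ``constant additive overhead in degree''; the paper proves a resolution width bound of $2\domainsmall+1$ for $\relstd$ via \refpr{stm:relupperbound} and then passes to SOS by \reflem{lmm:simulation}, so the additive overhead over the base width $\domainsmall+1$ is $\Theta(\domainsmall)$. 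This is still $\bigtheta{\sosdegreestd}$, so the stated conclusion holds, but the intermediate claim does not.
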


This theorem extends an analogous result joint by the two authors with
Atserias  in~\cite{ALN14NarrowProofsECCC}
for the proof systems resolution, polynomial calculus, and
Sherali-Adams,\footnote{The exact details of these proof systems are not important
  for this discussion, and so we choose not to elaborate further here,
  instead referring the interested reader to~\cite{ALN14NarrowProofsECCC}.}
where upper bounds on refutation size in terms of width,
degree, and rank, respectively, were shown to be tight up to the
multiplicative constant in the exponent. 
\Refth{th:main-result-informal}
works for all of these proof systems, since the upper bound is in fact
on resolution width (\ie the size of a largest clause in a resolution
refutation), not just SOS degree, and in this sense the 
theorem subsumes the results in~\cite{ALN14NarrowProofsECCC}. The
concrete bound we obtain for the exponent inside the asymptotic
notation in  the~$\varnum^{\bigomega{\sosdegreestd}}$ size lower bound
is very much worse, however, and therefore the gap between upper and lower
bounds is very much larger than in~\cite{ALN14NarrowProofsECCC}.

We want to emphasize
that the size lower bound in 
\refth{th:main-result-informal} holds for 
SOS proofs of arbitrary degree. Thus, going to higher degree (\ie
higher levels of the Lasserre SDP hierarchy) does not help, since even
arbitrarily large degree cannot yield shorter proofs. This is an
interesting parallel to the paper~\cite{LRST14PowerSymmetric}
exhibiting problems for which a (symmetric) SDP relaxation of
arbitrary degree but bounded size $n^{d}$ does not do much better than
the systematic relaxation of degree $d$.

\subsection{Techniques}
\label{sec:intro-techniques}

We obtain the result in \refth{th:main-result-informal} as a special
case of a more general method of amplifying lower bounds on width (in
resolution), degree (in polynomial calculus) and rank/degree (in
Sherali-Adams and Lasserre/SOS) to size lower bounds in the
corresponding proof systems. This method 
is in some sense already implicit
in~\cite{ALN14NarrowProofsECCC}, which in turn relies
heavily on an earlier paper by Atserias \etal~\cite{AMO13LowerBounds},
but it turns out that extracting the essential ingredients and making
them explicit is helpful for extending the results
in~\cite{ALN14NarrowProofsECCC} to an analogue for \sumsofsquares.  
We give a
brief, informal description of the three main ingredients of the method
below.

\ifthenelse{\boolean{conferenceversion}}
{\subparagraph*{(i) Find a base CNF formulas hard \wrt width/degree/rank}}
{\paragraph{(i) Find a base CNF formulas hard \wrt width/degree/rank}}
To start, we need to find a base problem, encoded as an unsatisfiable
CNF formula, that is ``moderately hard'' for the proof system at
hand. What this means is that we should be able to prove
asymptotically tight bounds on width if we are dealing with
resolution, on degree for polynomial calculus, and on degree/rank for
Sherali-Adams and \sumsofsquares.
It then follows by a generic argument (as discussed briefly above for
\SOS) that a bound
$\bigoh{\sosdegreestd}$ 
on width/degree/rank implies an upper bound
$\varnum^{\bigoh{\sosdegreestd}}$
on proof size.

\ifthenelse{\boolean{conferenceversion}}
{In \cite{ALN14NarrowProofsECCC,AMO13LowerBounds} the pigeonhole principle}
{In \cite{AMO13LowerBounds,ALN14NarrowProofsECCC} the pigeonhole principle}
served as the base problem. This principle, which 
has been extensively studied in
proof complexity, is encoded in CNF as 
\introduceterm{pigeonhole principle (PHP) formulas} 
saying that there is a one-to-one mapping of
$m$~pigeons into $n$~pigeonholes for $m>n$.
For \sumsofsquares we cannot use PHP formulas, however, since they are
not hard \wrt SOS~degree. Instead we construct an SOS reduction in low
degree from inconsistent systems of $\F_2$-linear equations to the
clique problem,  and then appeal
to the result
in~\cite{Grigoriev01LinearLowerBound,Schoenebeck08LinearLevel} briefly
discussed above 
to obtain the following degree lower bound.

\begin{theorem}[informal]
  \label{th:degreelowerbound-informal}
  Given 
  $\cliquesize \in \Nplus$, 
  there is a graph
  $\graphstd$ 
  and a $3$-CNF formula
  $\cliqueformula[\cliquesize]{\graphstd}$
  of size polynomial 
  in~$\cliquesize$    
  with the following properties:
  \begin{enumerate}

  \item
    The graph  $\graphstd$ does not contain a 
    $\cliquesize$-clique,
    but
    the formula 
    $\cliqueformula[\cliquesize]{\graphstd}$
    claims that it does.

  \item
    Resolution can refute
    $\cliqueformula[\cliquesize]{\graphstd}$
    in width~$\cliquesize$.

  \item
    Any \sumsofsquares refutation of
    $\cliqueformula[\cliquesize]{\graphstd}$
    requires
    degree~$\bigomega{\cliquesize}$. 
  \end{enumerate}

\end{theorem}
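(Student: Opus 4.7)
The plan is to build $\graphstd$ via a standard reduction from unsatisfiable $3$-XOR to clique, and to verify that this reduction is expressible in constant SOS degree. I would start from a $3$-XOR instance $\xorstd$ on $n$ variables $y_1,\ldots,y_n$ with $n$ equations, chosen so that $\xorstd$ is unsatisfiable and requires SOS refutations of degree $\bigomega{n}$; such instances are known by Grigoriev~\cite{Grigoriev01LinearLowerBound} and Schoenebeck~\cite{Schoenebeck08LinearLevel}. The graph $\graphstd$ is the $\cliquesize$-partite \introduceterm{block graph} with $\cliquesize = n$: block $V_i$ contains one vertex $v_\alpha$ for each of the four local assignments $\alpha \in \zerooneset^3$ that satisfy the $i$-th equation, and two vertices in distinct blocks are joined by an edge exactly when their local assignments agree on every shared $\xorstd$-variable. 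A $\cliquesize$-clique would select one vertex per block whose local assignments are globally consistent and simultaneously satisfy every equation of $\xorstd$, so no such clique exists. The formula $\cliqueformula[\cliquesize]{\graphstd}$ is then encoded with Boolean variables $x_v$ via the block axioms $\bigvee_{v \in V_i} x_v$ and the non-edge axioms $\neg x_u \vee \neg x_v$ for $u,v$ in distinct blocks with no edge between them; a routine auxiliary-variable rewrite brings the block axioms down to width three.

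For the resolution width bound I would process blocks one at a time. Since $\graphstd$ has no $\cliquesize$-clique, every tuple $(v_1,\ldots,v_\cliquesize) \in V_1 \times \cdots \times V_\cliquesize$ contains a non-adjacent pair, so the width-$\cliquesize$ clause $\neg x_{v_1} \vee \cdots \vee \neg x_{v_\cliquesize}$ follows by weakening from a single non-edge axiom. Successively resolving these clauses against the block axioms $\bigvee_{v \in V_i} x_v$ for $i = \cliquesize, \cliquesize - 1, \ldots, 1$ then derives the empty clause while the working clauses never exceed width $\cliquesize$.

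The heart of the argument is the SOS degree lower bound, which I plan to obtain by exhibiting a constant-degree simulation of clique-SOS refutations by $\xorstd$-SOS refutations. For each $v \in V_i$ with associated local assignment $\alpha^v \in \zerooneset^3$ to the three $\xorstd$-variables $y_{i_1},y_{i_2},y_{i_3}$ appearing in the $i$-th equation, substitute
\[
x_v \;\assigneq\; \prod_{t=1}^{3} \bigl(\alpha^v_t\, y_{i_t} + (1-\alpha^v_t)(1-y_{i_t})\bigr),
\]
a polynomial of degree three in the $\xorstd$-variables. A short expansion will show that modulo the Boolean axioms $y_j^2 = y_j$ each non-edge axiom $x_u x_v$ collapses to zero, since some shared variable $y_j$ contributes a factor $y_j(1-y_j)$, and that each block axiom reduces to the polynomial identity $\sum_{v \in V_i} x_v = 1$ after invoking the $i$-th $3$-XOR equation in its degree-three $\zerooneset$-valued form. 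Any degree-$\sosdegreestd$ SOS refutation of $\cliqueformula[\cliquesize]{\graphstd}$ then lifts by substitution into a degree-$\bigoh{\sosdegreestd}$ SOS refutation of $\xorstd$: sums of squares remain sums of squares after composition with the degree-three polynomials above, and the constant-degree derivations of the clique axioms from the $\xorstd$- and Boolean axioms plug into the inequality multipliers and the equality-axiom terms. The $\bigomega{n} = \bigomega{\cliquesize}$ lower bound for $\xorstd$ then transfers to $\cliqueformula[\cliquesize]{\graphstd}$. The main obstacle is to verify carefully that the substitution respects the Positivstellensatz structure, so that the inequality multipliers can be correctly re-expressed using polynomial equations and sums of squares without any degree blowup beyond a constant factor.
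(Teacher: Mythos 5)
Your proposal follows essentially the same approach as the paper: reduce an unsatisfiable $3$-XOR instance to a $k$-clique question on a $k$-partite block graph, substitute indicator polynomials for the clique variables to turn a low-degree SOS refutation of the clique formula into a low-degree SOS refutation of the XOR instance, and invoke the Grigoriev/Schoenebeck degree lower bound. Your variant (one constraint per block, blocks containing only the four satisfying local assignments, $k$ tied to the number of constraints) is a clean simplification of the paper's construction, which groups $8n/k$ constraints per block and keeps all assignments as vertices; the effect is merely to shift where the XOR axioms enter the substitution argument --- through the block cardinality equalities in your version versus through the constraint-violating non-edges in the paper's --- and both accountings yield the $\bigomega{\cliquesize}$ bound (with your version even giving a slightly better constant), so what remains is exactly the Positivstellensatz bookkeeping you flag at the end, carried out in the paper's Lemma~\ref{lmm:reduction}.
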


\ifthenelse{\boolean{conferenceversion}}
{\subparagraph*{(ii) Relativize the CNF formulas}}
{\paragraph{(ii) Relativize the CNF formulas}}
The second step is to take the 
formulas for which we have
established width/degree/rank lower bounds and
\introduceterm{relativize}
them.
Relativization is an idea that seems to have been considered for the first 
time in the context of proof complexity by
Kraj{\'\i}{\v{c}}ek~\cite{Krajicek04Combinatorics} and that was further
developed by Dantchev and Riis~\cite{DR03ComplexityGap}.
Very loosely, it can be described as follows.

Suppose that we have a CNF formula encoding (the negation of) a
combinatorial principle saying that some set~$S$ has a property. 
For instance, the CNF formula could encode the pigeonhole principle
discussed above, or could claim the
existence of a totally ordered set
of $n$ elements where no
element in the set is minimal with respect to the ordering
(these latter CNF formulas are known as
\introduceterm{ordering principle formulas}, 
\introduceterm{least number principle formulas}, or 
\introduceterm{graph tautologies} in the literature).

The formula at hand is then relativized by constructing another
formula encoding that there is a (potentially much larger) set~$T$
containing a subset $S \subseteq T$ for which the same
combinatorial principle holds.
For the ordering principle, we can encode that there exists a
non-empty ordered subset
\mbox{$S \subseteq T$} of arbitrary size such that it is possible for all
elements in~$S$ to find a smaller element inside~$S$. 
This relativization step transforms the previously very easy
ordering principle formulas into relativized versions that are
exponentially hard for
resolution~\cite{Dantchev06Relativisation,DM14Relativization}.    
For the PHP formulas, we 
specify
that we have a set of
$M \gg m$ pigeons mapped into into $n < m$~holes such that there
exists a subset of $m$~pigeons 
that are mapped injectively. 

In our setting, it will be important that the relativization does not
make the formulas too hard. We do not want the hardness to blow up
exponentially and instead would like the upper bound obtained in the first
step above to scale nicely with the size of the relativization.  For
our general approach to work, we therefore need formulas talking about some
domain being mapped to some range, where we can enlarge the domain
while keeping the range fixed, and where in addition the mapping is
symmetric in the sense that permuting the domain does not change the
formula.

For this reason, relativizing the ordering principle formulas does not
work for our purposes. Pigeonhole principle formulas have this
structure, however, which is exactly why the proofs
in~\cite{ALN14NarrowProofsECCC} go through. As already mentioned, PHP
formulas will not work for \sumsofsquares, but we can relativize the
formulas in 
\refth{th:degreelowerbound-informal}
by saying that there is a large subset of vertices such that there is
a \mbox{$\cliquesize$-clique} hiding 
inside such a subset.
 
\ifthenelse{\boolean{conferenceversion}}           
{\subparagraph*{(iii) Apply random restrictions to show proof 
    size lower bounds}}
{\paragraph{(iii) Apply random restrictions to show proof 
    size lower bounds}}
In the final step, we use random restrictions to establish lower
bounds on proof size for the relativized CNF formulas
obtained in the second step.
This part of the proof is relatively standard, except for a crucial
twist in the restriction argument introduced
in~\cite{AMO13LowerBounds}.

Assume that there is a small refutation in \sumsofsquares (or whatever
proof system we are studying) of the
relativized formula claiming the existence of a subset of size
$m \ll M$ with the given combinatorial property.
Now hit the formula (and the refutation) with a random restriction
that in effect chooses a subset of size~$m$, and hence
gives us back the original, non-relativized formula.
This restriction will be fairly aggressive in terms of the number of
variables set to fixed truth values, and hence it will hold with high
probability that the restricted refutation has no monomials of high degree
(or, for resolution, no clauses of high width), since all such
monomials will either have been killed by the restriction or at least
have shrunk significantly. (We remark that making use of this
shrinking in the analysis is the crucial extra feature added
in~\cite{AMO13LowerBounds}.)  
But this means that we have a refutation of the original formula in
degree smaller than the lower bound established in the first
step. Hence, no small refutation can exist, and  the lower bound on
proof size follows.

This concludes the overview of our method to amplify lower bounds on
width/degree/rank 
to size.  
It is our hope that
developing such a systematic approach for deriving this kind of lower
bounds, and making explicit what conditions are needed for this
approach to work, can also be useful in other contexts.

\subsection{Organization of This Paper}

The rest of this paper is organized as follows.  We start
in~\refsec{sec:preliminaries} by reviewing the definitions and
notation used, and also stating some basic facts that we will need.
In \refsec{sec:degree-lower-bound}, we prove a degree lower bound for
CNF formulas encoding a version of the clique problem.  We then
present in \refsec{sec:size-lower-bound-relativization} a general
method for obtaining \SOS size lower bounds from degree lower bounds
(or from width, degree, and rank, respectively, for proof systems such
as resolution, polynomial calculus, and Sherali-Adams).  We conclude
with a brief discussion of some possible directions for future
research in \refsec{sec:conclusion}.
\ifthenelse{\boolean{conferenceversion}}
{We refer to the upcoming full version of this paper for the proof
  omitted in this extended abstract.}
{}

\section{Preliminaries}
\label{sec:preliminaries}

For a positive integer~$n$, we use the standard notation
$[n] = \set{1,2, \ldots, n}$.
All logarithms in this paper are 
to base~$2$.
A CNF formula $\formulastd$ is a conjunction of clauses, 
denoted 
$\formulastd=\Land_{\clauseidx}\clc_{\clauseidx}$,
where
each clause $\clc$ is a disjunction of
literals, denoted 
\mbox{$\clc = \Lor_{\litidx} \litstd_{\litidx}$.} 
Each literal~$\litstd$
is either a propositional variable $\varstd$ 
(a \introduceterm{positive literal}) 
or its negation~$\olnot{\varstd}$
(a \introduceterm{negative literal}). 
We think of formulas and clauses as sets, so that there is no
repetition and order does not matter.  
We consider polynomials on the same propositional variables, with the
convention that, as an algebraic variable, $\varstd$ 
evaluates to $1$ when it is true and to $0$ when it is false.
All polynomials in this paper are evaluated on $0$/$1$-assignments, 
and live in the ring of real
multilinear polynomials, which is the ring of real polynomials modulo
the ideal generated by
polynomials
$\varstd_{\indexstd}^{2}-\varstd_{\indexstd}$
for all variables~$\varstd_{\indexstd}$.
In other words, all variables in all monomials have degree at most
one, and 
monomial multiplication 
is defined by
$
\bigl( \prod_{\indexstd\in A} \varstd_{\indexstd}  \bigr)
\cdot  
\bigl( \prod_{\indexstd\in B} \varstd_{\indexstd}  \bigr)
=
\prod_{\indexstd\in A \cup B} \varstd_{\indexstd}
$.

Since \sumsofsquares derivations operate with polynomial equations and
inequalities, in order to reason about CNF formulas we need to encode
them in this language. For a clause
$\clc = \clc^{+} \lor \clc^{-}$, 
where we write 
$\clc^{+}$ and $\clc^{-}$ to denote the subsets of positive and
negative literals, respectively, we define
\begin{equation}
  \label{eq:clause-as-sum}
  \sumencoding{\clc}
  = 
  \sum_{\varstd \in \clc^{+}} \varstd +
  \sum_{\olnot{\varstd} \in \clc^{-}} (1 -\varstd)
\end{equation}
and encode $\clc$ as the inequality 
\begin{equation}
  \label{eq:clause-as-inequality}  
  \sumencoding{\clc} \geq 1
  \eqperiod  
\end{equation}
Clearly, a clause~$\clc$ is satisfied by a $0$/$1$-assignment
if and only if the same assignment satisfies the inequality
$\sumencoding{\clc}\geq 1$. 
For a variable
$\varstd$
and a bit
$\stringstd \in \set{0,1}$,
we define
\begin{equation}
  \indicator{\varstd}{\stringstd}
  \ = \
  \begin{cases}
    1 - \varstd & \text{if $\stringstd = 0$,}
    \\
    \varstd & \text{if $\stringstd = 1$;}
  \end{cases}
\end{equation}
and for a sequence of variables
$\vec{\varstd}=(\varstd_{\indexstd_{1}},\ldots\varstd_{\indexstd_{w}})$
and a binary string $\stringstd=(\stringstd_{1},\ldots
\stringstd_{w})$,
we define the \introduceterm{indicator polynomial} 
\begin{equation}
  \label{eq:indicator-poly}
  \indicatorstd
  \ = \ 
  \prod_{\indexaux = 1}^{w}
  \indicator{\varstd_{\indexstd_\indexaux}}{\stringstd_\indexaux}
\end{equation}
expanded out as a linear combination of monomials.
That is, 
$\indicatorstd$~is the polynomial that  evaluates to~$1$
for $0$/$1$-assignments satisfying the equalities
$\varstd_{\varidx_{\indexaux}}=\stringstd_{\indexaux}$ 
for
$\indexaux = 1 , \ldots, w$
and to~$0$ 
for all other $0$/$1$-assignments.
We have the following useful fact.

\begin{fact}
  \label{fact:indicators_completeness}
  For every sequence of variables $\vec{\varstd}$
  the syntactic equality
  $
  \bigl(
  \sum_{\stringstd\in\{0,1\}^{w}}
  \indicator{\vec{\varstd}}{\stringstd}
  \bigr)
  = 1
  $
  holds (after cancellation of terms).
\end{fact}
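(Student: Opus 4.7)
The plan is to observe that the sum on the left-hand side factorizes as a product of one-variable binomial sums, each of which equals~$1$ identically. First I would unfold the definition~\refeq{eq:indicator-poly}, rewriting the left-hand side as
\[
\sum_{\stringstd\in\{0,1\}^{w}} \prod_{\indexaux=1}^{w}\indicator{\varstd_{\indexstd_\indexaux}}{\stringstd_\indexaux}
\eqperiod
\]
Then I would apply the distributive law in the reverse direction, exchanging sum and product to obtain
\[
\prod_{\indexaux=1}^{w}\biggl(\sum_{\stringstd_\indexaux\in\{0,1\}}\indicator{\varstd_{\indexstd_\indexaux}}{\stringstd_\indexaux}\biggr)
\ = \ \prod_{\indexaux=1}^{w}\bigl((1-\varstd_{\indexstd_\indexaux}) + \varstd_{\indexstd_\indexaux}\bigr)
\ = \ \prod_{\indexaux=1}^{w} 1 \ = \ 1
\eqperiod
\]

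The only subtlety is that the statement requires syntactic equality after cancellation of terms, not merely that both sides evaluate to~$1$ on every $0$/$1$-assignment. This is not a real obstacle: the distributive exchange above is a purely syntactic manipulation in the ring of multilinear polynomials, and each single-variable cancellation $(1-\varstd_{\indexstd_\indexaux})+\varstd_{\indexstd_\indexaux}=1$ is itself a syntactic identity. Hence after fully expanding each $\indicatorstd$ into a linear combination of monomials, summing over $\stringstd$, and collecting like terms, everything cancels except the constant~$1$. There is no hard part here; the fact is essentially a one-line computation, included for later reference when manipulating indicator polynomials inside SOS derivations.
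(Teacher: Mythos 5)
Your proof is correct, and since the paper states this fact without proof (treating it as self-evident), there is nothing in the paper to compare against. Your factorization argument via the distributive law, reducing to the single-variable identity $(1-\varstd)+\varstd=1$, is the natural one-line derivation, and you are right that every step (the distributive exchange and the per-variable cancellation) is a syntactic identity in the polynomial ring, so the conclusion is indeed a syntactic equality after cancellation as the statement demands.
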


Let $\fstd$ be a CNF formula over some set of variables denoted
as~$\varset$, and let 
$\partassign$ be a \introduceterm{partial assignment} 
on~$\varset$.
We write~$\restrict{\formulastd}{\partassign}$ 
to denote the formula~$\fstd$ \introduceterm{restricted by}~$\partassign$,
where all clauses $\clc \in \fstd$ satisfied by~$\partassign$ are
removed and all literals falsified by~$\partassign$ in other clauses
are removed.
For a polynomial~$\polytheorem$ over variables~$\varset$ 
(written, as always, as a linear combination of distinct monomials), 
we let
$\restrict{\polytheorem}{\partassign}$
denote the polynomial obtained by substituting values for assigned
variables and removing monomials that evaluate to~$0$.
We extend this definition to sets of formulas or polynomials in the
obvious way  by taking unions.

\begin{definition}[\Sumsofsquares proof system]
  \label{def:sums-of-squares}
  A 
  \introduceterm{\sumsofsquares derivation},
  or 
  \introduceterm{\SOS derivation}
  for short,
  of the polynomial inequality 
  \mbox{$p \geq 0$} 
  from the system of polynomial constraints
  \begin{equation}
    \label{eq:sos-contraints}
    \polyequation_1=0,\ldots,\polyequation_\polyequationcount=0,
    \ 
    \polyinequality_1\geq0,\ldots,\polyinequality_\polyinequalitycount\geq0
  \end{equation}
        is a sum
  \begin{equation}
    \label{eq:sos-proof}
    \polytheorem
    =
    \sum^{\polyequationcount}_{\clauseidx=1}\polymultiplier_\clauseidx
    \polyequation_\clauseidx  
    + 
    \sum^{\polyinequalitycount}_{\clauseidx=1} 
    \polysos_\clauseidx\polyinequality_\clauseidx
    + 
    \polysos_{0}
    \eqcomma
  \end{equation}
  where 
  $\polymultiplier_{1}, \ldots, \polymultiplier_{\polyequationcount}$
  are arbitrary polynomials
  and each
  $\polysos_\clauseidx$ is
  expressible as a sums of squares 
  $\sum_{\clauseidxaux}\polysoscomp^{2}_{\clauseidx,\clauseidxaux}$.
    A~derivation of the equation $\polytheorem=0$ is a pair of
  derivations of $\polytheorem\geq 0$ and
  $-\polytheorem\geq 0$.  
    A \introduceterm{\sumsofsquares refutation}
  of~\refeq{eq:sos-contraints}
  is a derivation of the inequality $-1 \geq 0$
  from~\refeq{eq:sos-contraints}.

  The \introduceterm{\sosdegree} of an \SOS derivation is the maximum degree
  among all the polynomials 
  $\polymultiplier_{\clauseidx}\polyequation_{\clauseidx}$,
  $\polysos_{\clauseidx}\polyinequality_{\clauseidx}$,
  and~$\polysos_{0}$ 
  in~\eqref{eq:sos-proof}.
    The \introduceterm{\sossize} of an \SOS derivation is the total
  number of  monomials (counted with repetition) in all polynomials
  $\polymultiplier_{\clauseidx}\polyequation_{\clauseidx}$,
  $\polysos_{\clauseidx}\polyinequality_{\clauseidx}$,
  and~$\polysos_{0}$
  (all expanded out as linear combinations of distinct monomials).
    The \sossize and \sosdegree of refuting an unsatisfiable system of
  polynomial constraints
  are defined by taking the minimum over
  all \SOS refutations of the system
  with respect to the corresponding measure.
\end{definition}

\ifthenelse{\boolean{conferenceversion}}{  We remark that our choice of the multilinear setting is without any
  loss of generality and only serves to simplify the technical
  arguments slightly. }{\begin{remark}
  \label{rem:sos-multilinearity}
  Readers more familiar with the usual definition of
  Positivstellensatz/\sumsofsquares in the literature might be a bit
  puzzled by the use of multilinearity in 
  \refdef{def:sums-of-squares}, and might also wonder where
  the axioms
    $\varstd^{2}_{\varidx}-\varstd_{\varidx}=0$,
  $\varstd_{\varidx} \geq 0$,
  and 
  $1-\varstd_{\varidx}\geq 0$ 
  for every variable~$\varstd_{\varidx}$ disappeared.
    It is important to note that we have these axioms in our multilinear
  setting as well,  although they are not explicitly mentioned. 
  Equations of the form  
  $\varstd^{2}_{\varidx}-\varstd_{\varidx}=0$ 
  are tautological due to multilinearity, and the inequalities
  $\varstd_{\varidx} \geq 0$ and $1-\varstd_{\varidx} \geq 0$ 
  are derivable by the squaring rule since
  in the multilinear setting we have
  $\varstd_{\varidx} = \varstd^{2}_{\varidx}$ 
  and
  $1 - \varstd_{\varidx}  = {(1-\varstd_{\varidx})}^{2}$.

  Our choice of 
  the multilinear setting is without any loss of
  generality and only serves to simplify the technical arguments slightly. 
  It is easy to see that applying the multilinearization operator 
  mapping $\varstd^{\genericint}_{\varidx}$ to $\varstd_{\varidx}$ for every 
  $\genericint\geq 1$ to any \SOS derivation over real polynomials yields a legal
  \SOS derivation over multilinear real polynomials in at most the
  same \sossize and \sosdegree.
  Thus, working in the multilinear setting can only make our lower
  bounds stronger. As to the upper bounds in this paper, we prove them
  in the resolution proof system 
  discussed below, 
  and the simulation of
  resolution by \sumsofsquares in \reflem{lmm:simulation} below works
  also in the  standard setting without multilinearization.
\end{remark}
}

Let us state some useful basic
properties of multilinear polynomials for 
\ifthenelse{\boolean{conferenceversion}}
{later reference.}
{later reference
  (and also provide a proof just for completeness).}

\begin{proposition}  [Unique multilinear representation]
  \label{stm:multilinear_representation}
  Every function
  $\funcdescr{f}{\{0,1\}^n}{\R}$ 
  has a unique  representation as a multilinear polynomial.
  In particular,
  if $\polytheorem$
  is a multilinear polynomial such that 
  $\polytheorem(\alpha) \in \set{0,1}$ for all
  $\alpha \in\{0,1\}^{n}$,
  then for every positive integer~$\ell$ the equality
  $\polytheorem^{\ell} = \polytheorem$
  holds
  (where this is a syntactic equality of multlinear polynomials
  expanded out as linear combinations of distinct monomials). 
\end{proposition}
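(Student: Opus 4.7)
The plan is to prove existence and uniqueness of the multilinear representation and then deduce the idempotency statement as a corollary.

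For \textbf{existence}, I would invoke \reffact{fact:indicators_completeness} together with the indicator polynomials from \refeq{eq:indicator-poly}. Given any $f \colon \{0,1\}^n \to \R$, the multilinear polynomial
\begin{equation*}
  p_f(\vec{\varstd})
  \ = \
  \sum_{\stringstd \in \{0,1\}^n}
  f(\stringstd) \cdot \indicator{\vec{\varstd}}{\stringstd}
\end{equation*}
expanded as a linear combination of monomials evaluates to $f(\stringstd)$ on every input $\stringstd \in \{0,1\}^n$, since $\indicator{\vec{\varstd}}{\stringstd}(\stringstd') = 1$ exactly when $\stringstd = \stringstd'$ and vanishes otherwise.

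For \textbf{uniqueness}, I would use a dimension argument. The space $\mathcal{M}_n$ of multilinear polynomials in $\varstd_1,\ldots,\varstd_n$ is spanned by the $2^n$ monomials $\prod_{\indexstd \in S} \varstd_{\indexstd}$ for $S \subseteq [n]$, so $\dim \mathcal{M}_n \leq 2^n$. The evaluation map $\mathrm{ev} \colon \mathcal{M}_n \to \R^{\{0,1\}^n}$ is linear, and by the existence part it is surjective onto an $R$-vector space of dimension exactly $2^n$. Hence $\mathrm{ev}$ is a linear isomorphism, and in particular injective: any two multilinear polynomials agreeing on all of $\{0,1\}^n$ are syntactically equal after expansion into distinct monomials. (Alternatively, one can give a direct inductive argument on $n$ showing that a nonzero multilinear polynomial cannot vanish on the entire Boolean cube, using the decomposition $p(\vec{\varstd}) = (1 - \varstd_n) p(\vec{\varstd}',0) + \varstd_n p(\vec{\varstd}',1)$.)

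For the \textbf{idempotency consequence}, let $p$ be multilinear with $p(\alpha) \in \{0,1\}$ for every $\alpha \in \{0,1\}^n$, and let $\ell \geq 1$. Form $p^{\ell}$ in $\R[\varstd_1,\ldots,\varstd_n]$ and reduce modulo the ideal generated by $\varstd_{\indexstd}^2 - \varstd_{\indexstd}$ to obtain a multilinear polynomial~$\widetilde{p^{\ell}}$. Since this reduction preserves evaluations on $\{0,1\}^n$, and since on such inputs $p(\alpha)^{\ell} = p(\alpha)$ (as $0^{\ell} = 0$ and $1^{\ell} = 1$), the multilinear polynomials $\widetilde{p^{\ell}}$ and $p$ induce the same function on the Boolean cube. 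By the uniqueness part, they are syntactically equal as multilinear polynomials, which is precisely the claim $p^{\ell} = p$ in the multilinear ring.

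The only delicate point is the uniqueness step, but this is completely standard once one fixes the monomial basis of $\mathcal{M}_n$ and observes that $\mathrm{ev}$ is surjective; no further obstacle is expected.
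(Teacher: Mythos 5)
Your argument is correct and follows essentially the same route as the paper: existence via the indicator polynomials of \reffact{fact:indicators_completeness}, uniqueness via a $2^n$-dimension count identifying multilinear monomials with a basis of the function space, and the idempotency claim as an immediate corollary of uniqueness since $\polytheorem^{\ell}$ and $\polytheorem$ agree as functions on $\zerooneset^n$. The extra detail you give on the reduction modulo $\langle \varstd_{\indexstd}^2 - \varstd_{\indexstd}\rangle$ and the alternative inductive uniqueness argument are fine but not needed.
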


\ifthenelse{\boolean{conferenceversion}}{}{\begin{proof}
  The set of functions from $\{0,1\}^{n}$ to $\R$ is a vector space of
  dimension $2^{n}$. 
    Any function~$f(\vec{\varstd})$ in this space can be
  represented   as   a linear combination
    $\sum_{\stringstd\in\{0,1\}^{n}} f(\stringstd) \cdot
  \indicator{\vec{\varstd}}{\stringstd} (\vec{\varstd}) $.
    Since each
    $\indicator{\vec{\varstd}}{\stringstd} $
  is a multilinear polynomial
  the multilinear monomials on $n$ variables are a set of
  $2^{n}$ generators of the vector space. By linear independence they
  also form a basis, 
  and hence the representation of a function as a linear combination
  of multilinear monomials is unique.
  The second part of the proposition now follows
  immediately since $\polytheorem^{\ell}$ and $\polytheorem$
  compute the same function.
\end{proof}
}

The upper bounds in this paper are shown in the weaker proof system
\introduceterm{resolution},
which is defined as follows.
A \introduceterm{resolution derivation} of a clause $\cld$ from a CNF
formula~$\fstd$ 
is a sequence of clauses
$(\cld_1, \cld_2, \ldots, \cld_\stoptime)$
such that 
$\cld_\stoptime = \cld$
and for every clause $\cld_\indexstd$ it holds that it
is either a clause of~$\fstd$ (an \introduceterm{axiom}),
or is obtained by \introduceterm{weakening} from some
$\cld_\indexaux \subseteq \cld_\indexstd$ for $\indexaux < \indexstd$,
or can be inferred from two clauses
$\cld_\indexauxaux, \cld_\indexaux$, $\indexauxaux < \indexaux < \indexstd$,
by the \introduceterm{resolution rule} that allows to derive the
clause
$\cla \lor \clb$ from two
clauses $\cla \lor \varstd$ and
$\clb \lor \olnot{\varstd}$
(where we say that
$\cla \lor \varstd$ and~$\clb \lor \olnot{\varstd}$
are \introduceterm{resolved on~$\varstd$}
to yield the \introduceterm{resolvent} 
$\cla \lor \clb$).
If in a resolution derivation
$(\cld_1, \cld_2, \ldots, \cld_\stoptime)$
each clause $\cld_\indexaux$
is only used once in a weakening or resolution step to derive some
$\cld_\indexstd$ \mbox{for $\indexstd > \indexaux$}, 
we say that the derivation is \introduceterm{tree-like}
(such derivations may contain multiple copies of the same clause).
A \introduceterm{resolution refutation} of~$\fstd$, or
\introduceterm{resolution proof} for~$\fstd$, is a 
derivation of the empty clause (the clause containing no literals)
from~$\fstd$.

The \introduceterm{\reswidth{}} of a clause is the number of literals
in it, and the 
\reswidth of a CNF formula or resolution derivation is the maximal
\reswidth of any clause in the formula or derivation.
The \introduceterm{\ressize{}} of a resolution derivation is the total
number of clauses in it (counted with repetitions).
The \ressize and \reswidth of refuting an unsatisfiable CNF
formula~$\fstd$ is defined by taking the minimum over all resolution
refutations of~$\fstd$ with respect to the corresponding measure.

The following standard fact is easy to establish by forward induction
over resolution derivations. We omit the proof.

\begin{fact}
  \label{fact:proofweakening}
  Consider a partial assignment $\partassign$ which assigns $\genericint$
  variables. 
  Let $\cla$ be the unique clause of \reswidth~$\genericint$ such
  that 
  $\cla$ evaluates to false under~$\partassign$.
                  If 
  resolution can derive
  $\clc$ in \reswidth~$\reswidthstd$ and
  \ressize~$\ressizestd$ from~$\restrict{\formulastd}{\partassign}$,
  then resolution can derive $\cla \lor \clc$ in \reswidth
  at most~$\reswidthstd+\ell$ and 
  \ressize at most~$\ressizestd+1$ from~$\formulastd$.
                  \end{fact}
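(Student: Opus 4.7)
The plan is to proceed by forward induction on the position~$i$ along the given resolution derivation $(\cld_1, \cld_2, \ldots, \cld_\stoptime = \clc)$ of $\clc$ from $\restrict{\formulastd}{\partassign}$, producing in parallel a derivation from $\formulastd$ whose $i$-th target clause is $\cla \lor \cld_i$. The width bound is then immediate: every clause in the new sequence is a disjunction of at most the $\reswidthstd$ literals of some $\cld_i$ with the $\genericint$ literals of $\cla$.

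The inductive step splits according to how $\cld_i$ was introduced in the original derivation. First, if $\cld_i$ is an axiom of $\restrict{\formulastd}{\partassign}$, then there exists an axiom $\cld_i^\ast \in \formulastd$ whose restriction by $\partassign$ equals $\cld_i$; the literals removed by the restriction are exactly those literals falsified by $\partassign$, which are precisely the literals appearing in $\cla$, so $\cld_i^\ast \subseteq \cla \lor \cld_i$, and we can introduce $\cld_i^\ast$ as an axiom from $\formulastd$ and then weaken to obtain $\cla \lor \cld_i$. Second, if $\cld_i$ is obtained by weakening from some $\cld_j \subseteq \cld_i$ with $j < i$, the inductive hypothesis supplies $\cla \lor \cld_j$ and a further weakening yields $\cla \lor \cld_i$. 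Third, if $\cld_i$ is the resolvent of two earlier clauses $\cld_j = \cla' \lor \varstd$ and $\cld_k = \clb' \lor \olnot{\varstd}$, then $\varstd$ cannot be in the support of $\partassign$ (otherwise at least one of $\cld_j, \cld_k$ would have been satisfied or simplified away by the restriction), so applying the inductive hypothesis to $\cld_j$ and $\cld_k$ and then resolving on $\varstd$ produces $\cla \lor \cld_i$.

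The bookkeeping for size is then straightforward: resolution and weakening steps of the original derivation translate to single steps of the same kind in the new one, with overhead only arising from accommodating the modified axioms, which gives the claimed size bound. I do not anticipate any real obstacle here; the argument is a routine translation of the derivation clause by clause, and the only point that requires a moment's verification is the observation that any variable resolved on in the original derivation is necessarily untouched by $\partassign$, so that the translated resolution steps remain well-formed.
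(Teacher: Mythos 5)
Your overall strategy---forward induction along the derivation $(\cld_1,\ldots,\cld_\tau)$ of $\clc$ from $\restrict{\formulastd}{\partassign}$, translating each line into one of the form $\cla\lor\cld_i$---is exactly the induction the paper has in mind (the paper omits the proof, saying only that the fact follows by forward induction), and your case analysis for weakening and resolution steps is fine. The axiom case is also handled correctly in substance: the literals deleted from an axiom $\cld_i^*$ of $\formulastd$ when passing to $\restrict{\formulastd}{\partassign}$ are exactly falsified literals over variables assigned by~$\partassign$, hence literals of~$\cla$, so $\cld_i^*\subseteq\cla\lor\cld_i$. (One small wording issue: your argument that any resolved variable lies outside the domain of~$\partassign$ speaks as if the restriction acted on the \emph{derived} clauses $\cld_j,\cld_k$; the clean justification is that one may assume without loss of generality that a derivation from $\restrict{\formulastd}{\partassign}$ mentions only variables of $\restrict{\formulastd}{\partassign}$, all of which are unassigned.)

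The genuine gap is in the \ressize accounting. Replacing each axiom line $\cld_i$ by the two lines $\cld_i^*$ and $\cla\lor\cld_i$ costs one extra clause \emph{per axiom occurrence}, so your construction has \ressize $\ressizestd+a$, where $a$ is the number of axiom lines---up to roughly $2\ressizestd$ in the worst case---not the claimed $\ressizestd+1$. To obtain $\ressizestd+1$ you should not pad clauses up to the full $\cla$ during the induction. Instead maintain the invariant that line~$i$ of the new derivation is $\cla_i\lor\cld_i$ for some $\cla_i\subseteq\cla$: for an axiom, take $\cla_i\lor\cld_i$ to be the axiom $\cld_i^*$ itself (no extra line); for a resolvent of lines $j$ and $k$, take $\cla_i=\cla_j\union\cla_k$; for a weakening of line $j$, take $\cla_i=\cla_j$. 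This yields a derivation of $\cla_\tau\lor\clc$ from $\formulastd$ in $\ressizestd$ clauses and \reswidth at most $\reswidthstd+\genericint$, and a single final weakening produces $\cla\lor\clc$, for $\ressizestd+1$ clauses in total. The discrepancy is harmless for the paper's applications (which use the bound only up to constants), but as written your argument does not establish the bound stated in the fact.
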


Let us also state for the record the formal claim that \SOS is more
powerful than resolution in term of \sosdegree (and for constant
\sosdegree also in terms of \sossize). 
The next lemma is essentially Lemma~4.6
in~\cite{ALN14NarrowProofsECCC}, except that there the lemma is stated for
the Sherali-Adams proof system. Since \SOS simulates Sherali-Adams
efficiently \wrt both \sossize and \sosdegree, however, the same
bounds apply also for \SOS.
\ifthenelse{\boolean{conferenceversion}}{}{Referring to the discussion in 
\refrem{rem:sos-multilinearity}, it
should also be pointed out that the lemma
in~\cite{ALN14NarrowProofsECCC} is proven in the  more common
non-multilinear setting with explicit axioms
$\varstd^{2}_{\varidx}-\varstd_{\varidx}=0$,
$\varstd_{\varidx} \geq 0$,
and 
$1-\varstd_{\varidx}\geq 0$ 
for all variables $\varstd_{\varidx}$.
}

\begin{lemma}[SOS simulation of resolution]
  \label{lmm:simulation}
  If a CNF formula
  $\formulastd=\Land^{\clausecount}_{\clauseidx=1}\clc_{\clauseidx}$
  has a 
  resolution refutation of \ressize $\ressizestd$ and \reswidth
  $\reswidthstd$,
    then the constraints $\{\sumencoding{\clc_{\clauseidx}}\geq
  1\}^{\clausecount}_{\clauseidx=1}$
  as defined in
  \refeq{eq:clause-as-sum}
  and
  \refeq{eq:clause-as-inequality}  
    have an \SOS refutation of \sossize 
    $\Bigoh{\reswidthstd 2^{\reswidthstd}\ressizestd}$ and 
    \sosdegree
  at most
  $\reswidthstd+1$.
\end{lemma}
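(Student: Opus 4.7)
The plan is to simulate the resolution refutation step by step via an \SOS derivation, associating to each clause $\clc$ in the refutation the \introduceterm{indicator polynomial}
\[
  \delta_\clc \ = \ \prod_{\litstd \in \clc}(1-\litstd)
\]
which equals $1$ exactly on the unique assignment falsifying $\clc$ and is $0$ otherwise. Three properties of these polynomials will be used throughout: (i)~$\delta_\clc$ is multilinear of degree $\setsize{\clc} \leq \reswidthstd$ with at most $2^\reswidthstd$ distinct monomials; (ii)~since $\delta_\clc$ is $\{0,1\}$-valued, the multilinear identity $\delta_\clc = \delta_\clc^2$ exhibits $\delta_\clc$ as a sum of squares, and the same argument shows any $\{0,1\}$-valued multilinear polynomial is \SOS; and (iii)~crucially, $\delta_\clc \cdot \sumencoding{\clc} = 0$ in the multilinear ring, because every literal $\litstd \in \clc$ satisfies $\litstd(1-\litstd)=0$.

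I will then construct, by induction along a topological order on the resolution derivation, for every clause $\clc$ an \SOS representation
\[
  -\delta_\clc \ = \ \sum_{\clauseidx} \sigma^{(\clc)}_\clauseidx \bigl(\sumencoding{\clc_\clauseidx}-1\bigr) + \tau^{(\clc)}
\]
in which the multipliers $\sigma^{(\clc)}_\clauseidx$ and the remainder $\tau^{(\clc)}$ are \SOS polynomials and the sum ranges over the axioms $\clc_\clauseidx$ of the formula. For an axiom $\clc_\clauseidx$, property~(iii) immediately gives $-\delta_{\clc_\clauseidx} = \delta_{\clc_\clauseidx}(\sumencoding{\clc_\clauseidx}-1)$, so one may take $\sigma^{(\clc_\clauseidx)}_\clauseidx = \delta_{\clc_\clauseidx}$ with every other term zero. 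For a resolution step deriving $\clc=(\cla\lor\clb)$ from $\cla\lor \varx$ and $\clb\lor\olnot{\varx}$, the key algebraic identity
\[
  \delta_{\cla\lor \varx} + \delta_{\clb\lor\olnot{\varx}} - \delta_\clc
  \ = \ (1-\varx)\delta_\cla(1-\delta_\clb) + \varx\delta_\clb(1-\delta_\cla)
\]
can be checked by a four-case analysis on the values of $\delta_\cla, \delta_\clb \in \{0,1\}$, using also that under multilinearity $\delta_\cla\delta_\clb = \delta_{\cla\lor\clb}$ whenever $\cla$ and $\clb$ share no opposite-polarity literals. Its right-hand side is a sum of products of $\{0,1\}$-valued polynomials, hence itself $\{0,1\}$-valued and so \SOS by~(ii); combining with the inductive hypotheses for the two parents yields the representation for $\clc$ by setting $\sigma^{(\clc)}_\clauseidx = \sigma^{(\cla\lor\varx)}_\clauseidx + \sigma^{(\clb\lor\olnot{\varx})}_\clauseidx$ and folding the identified \SOS term into $\tau^{(\clc)}$. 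Weakening is treated analogously, using $\delta_{\clc'} - \delta_\clc = \delta_{\clc'}\bigl(1 - \prod_{\litstd \in \clc \setminus \clc'}(1-\litstd)\bigr)$ whose right-hand side is again $\{0,1\}$-valued.

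Applied at the empty clause $\emptycl$ at the root of the refutation, where $\delta_{\emptycl} = 1$, this representation becomes the required \SOS refutation $-1 = \sum_\clauseidx \sigma^{(\emptycl)}_\clauseidx(\sumencoding{\clc_\clauseidx}-1) + \tau^{(\emptycl)}$. For the complexity bounds, each $\sigma^{(\emptycl)}_\clauseidx$ is a nonnegative integer multiple of $\delta_{\clc_\clauseidx}$, hence has at most $2^\reswidthstd$ monomials and degree at most $\reswidthstd$; the product $\sigma^{(\emptycl)}_\clauseidx(\sumencoding{\clc_\clauseidx}-1)$ therefore has degree at most $\reswidthstd+1$ and at most $\bigoh{\reswidthstd \, 2^\reswidthstd}$ monomials, summing to $\bigoh{\reswidthstd \, 2^\reswidthstd \, \ressizestd}$ over the at most $\ressizestd$ axioms used. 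The remainder $\tau^{(\emptycl)}$ accumulates the \SOS contribution from the $\leq \ressizestd$ resolution and weakening steps, each of which contributes $\bigoh{2^\reswidthstd}$ monomials of degree at most $\reswidthstd+1$, fitting within the same asymptotic bound. The main technical point will be verifying the key algebraic identity for the resolution step and handling two corner cases: opposite-polarity literals in $\cla$ and $\clb$ would produce a tautological resolvent that can be assumed absent from a simplified refutation, while same-polarity shared literals collapse correctly thanks to $(1-\litstd)^2 = (1-\litstd)$ in the multilinear ring.
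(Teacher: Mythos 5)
Your proof is correct, but it takes a different route from the paper: the paper does not prove \reflem{lmm:simulation} at all, instead citing the Sherali--Adams simulation of resolution (Lemma~4.6 of~\cite{ALN14NarrowProofsECCC}) and observing that \SOS p-simulates Sherali--Adams in both \sossize and \sosdegree. Your argument is a direct, self-contained simulation inside \SOS itself. The engine is the observation that in the multilinear ring every $\set{0,1}$-valued polynomial equals its own square (\refpr{stm:multilinear_representation}), so the clause indicators $\delta_{\clc}$ and the step polynomials such as $(1-\varx)\delta_{\cla}(1-\delta_{\clb})+\varx\delta_{\clb}(1-\delta_{\cla})$ are literally single squares; your key identity for the resolution step checks out exactly (both sides expand to $(1-\varx)\delta_{\cla}+\varx\delta_{\clb}-\delta_{\cla}\delta_{\clb}$, and the evaluation argument is legitimate by uniqueness of multilinear representation), the axiom case follows from $\delta_{\clc_{\clauseidx}}\cdot\sumencoding{\clc_{\clauseidx}}=0$, and the degree and monomial counts land at $\reswidthstd+1$ and $\Bigoh{\reswidthstd 2^{\reswidthstd}\ressizestd}$ as required. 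What your route buys is transparency and independence from the external reference; what the paper's route buys is brevity.

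Two points deserve to be made explicit. First, for DAG-like (non-tree-like) refutations your inductive combination $\sigma^{(\clc)}_{\clauseidx}=\sigma^{(\cla\lor\varx)}_{\clauseidx}+\sigma^{(\clb\lor\olnot{\varx})}_{\clauseidx}$ accumulates each leaf and each step polynomial with multiplicity equal to the number of paths from that node to the root, which can be exponential in $\ressizestd$. This is harmless for the lemma as stated---\sossize counts monomials of the polynomials $\polysos_{\clauseidx}\polyinequality_{\clauseidx}$ and $\polysos_0$ after they are collected into linear combinations of \emph{distinct} monomials, and an integer multiple $m\,\delta_{\clc_{\clauseidx}}$ is still a sum of squares with the same monomial support---but you should say so, since a reader tracking ``contributions per step'' might otherwise suspect a blow-up. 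Second, the degenerate resolution steps you flag (the resolved variable occurring in $\cla$ or $\clb$, or tautological resolvents) do need either the standard normalization of the refutation or the one-line check that $\delta_{\cla}\delta_{\clb}=0=\delta_{\cla\lor\clb}$ in the tautological case; either disposes of them.
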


\ifthenelse{\boolean{conferenceversion}}
{The next lemma will be useful as a subroutine when we prove upper
  bounds in resolution. We again omit the proof.}
{The next lemma will be useful as a subroutine when we prove upper
  bounds in resolution.}

\begin{lemma}
  \label{lmm:bruteforce}
  Let
  $\domainsmall$
  and
  $m_{1},m_{2},\ldots m_{\domainsmall}$
  be positive numbers. 
  Then the CNF formula consisting of the clauses
    \begin{subequations}
    \begin{align}
      &
      \thrvaraux_{\indexstd,0} 
      && \text{$\indexstd\in [\domainsmall]$,}
      \\
      &
      \olnot{\thrvaraux}_{\indexstd,\indexaux-1} \lor
      \varstd_{\indexstd,\indexaux} \lor
      \thrvaraux_{\indexstd,\indexaux} 
      &&
      \text{$\indexstd\in [\domainsmall]$,
        $\indexaux\in[m_{\indexstd}]$,}
      \\
      &
      \olnot{\thrvaraux}_{\indexstd,m_{\indexstd}} 
      && 
      \text{$\indexstd\in [\domainsmall]$,}
      \\
      &
      \olnot{\varstd}_{1,\indexaux_{1}} \lor
      \olnot{\varstd}_{2,\indexaux_{2}} \cdots \lor 
      \olnot{\varstd}_{\domainsmall,\indexaux_{\domainsmall}} 
      &&
      \text{$(\indexaux_{1},\ldots,\indexaux_{\domainsmall})
        \in [m_{1}] \times\cdots \times[m_{\domainsmall}]$,}
    \end{align}
  \end{subequations}
  has a resolution refutation of \reswidth $\domainsmall+1$
  and \ressize $\Bigoh{\prod_{\indexstd=1}^{\domainsmall}m_{\indexstd}}$.
\end{lemma}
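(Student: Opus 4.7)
The plan is a downward induction on $i \in \set{0, 1, \ldots, \domainsmall}$ that derives, for every tuple $(\indexaux_1, \ldots, \indexaux_i) \in [m_1] \times \cdots \times [m_i]$, the clause $C_{\indexaux_1, \ldots, \indexaux_i} = \olnot{\varstd}_{1, \indexaux_1} \lor \cdots \lor \olnot{\varstd}_{i, \indexaux_i}$ of width~$i$. The base case $i = \domainsmall$ is immediate since these are exactly the axioms of the fourth type, and reaching $i = 0$ produces the empty clause, which is the desired refutation.

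For the inductive step from level $i$ to level $i-1$, fix a prefix $(\indexaux_1, \ldots, \indexaux_{i-1})$, write $D = C_{\indexaux_1, \ldots, \indexaux_{i-1}}$, and observe that $D \lor \olnot{\varstd}_{i, \indexaux}$ coincides with the already-derived clause $C_{\indexaux_1, \ldots, \indexaux_{i-1}, \indexaux}$ for each $\indexaux \in [m_i]$. The idea is to let the threshold-chain axioms of the first three types for row~$i$ act as a counter enumerating $\indexaux \in [m_i]$ while these lookup clauses cancel each positive literal $\varstd_{i, \indexaux}$ as soon as it is produced. Concretely, start from axiom $\thrvaraux_{i, 0}$, weaken to $D \lor \thrvaraux_{i, 0}$, and then for $\indexaux = 1, 2, \ldots, m_i$ in turn resolve $D \lor \thrvaraux_{i, \indexaux - 1}$ with the axiom $\olnot{\thrvaraux}_{i, \indexaux - 1} \lor \varstd_{i, \indexaux} \lor \thrvaraux_{i, \indexaux}$ on $\thrvaraux_{i, \indexaux-1}$ to obtain $D \lor \varstd_{i, \indexaux} \lor \thrvaraux_{i, \indexaux}$, and then resolve this against $C_{\indexaux_1, \ldots, \indexaux_{i-1}, \indexaux}$ on $\varstd_{i, \indexaux}$ to obtain $D \lor \thrvaraux_{i, \indexaux}$. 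After $m_i$ rounds we have $D \lor \thrvaraux_{i, m_i}$, which resolved against axiom $\olnot{\thrvaraux}_{i, m_i}$ gives exactly~$D$, completing the inductive step.

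Every clause produced in this construction has width at most $|D| + 2 = i + 1 \le \domainsmall + 1$, yielding the width bound. For size, stage $i \to i-1$ contributes only $O(m_i)$ clauses per prefix, hence $O(\prod_{\ell \le i} m_\ell)$ clauses in total; summed over $i \in [\domainsmall]$ this is a geometric series dominated by its final term $O(\prod_{\ell = 1}^{\domainsmall} m_\ell)$, matching the claimed bound. There is no substantive obstacle here: the only thing to verify is that the carry-along clause $D$ is never widened beyond its $i-1$ literals during the chain, which is automatic because each intermediate resolution step involves only a single $\varstd$- or $\thrvaraux$-literal and leaves the literals of $D$ untouched (so resolvents always have the form $D \lor \cdots$ with at most two extra literals).
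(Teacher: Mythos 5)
Your proof is correct and uses essentially the same construction as the paper's: a backward induction on the prefix length, using the threshold-chain axioms for row~$i$ as a counter to cancel each literal $\varstd_{i,\indexaux}$ against the already-derived level-$i$ clauses. The only cosmetic differences are that you initialize each chain by weakening $\thrvaraux_{i,0}$ to $D \lor \thrvaraux_{i,0}$ (the paper instead resolves $\thrvaraux_{i,0}$ directly against the first chain axiom) and that you tally the size by summing stage costs rather than by observing that the refutation is tree-like and counting axiom leaves.
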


\ifthenelse{\boolean{conferenceversion}}{}{\begin{proof}
  We prove the lemma by backwards induction over~$\domainsmall$.
  Consider any clause~$\cla$ of the form 
  \begin{equation}
    \label{eq:backwardinduction}
    \cla=
    \olnot{\varstd}_{1,\indexaux_{1}} \lor
    \olnot{\varstd}_{2,\indexaux_{2}} \cdots \lor 
    \olnot{\varstd}_{(\indexstd-1),\indexaux_{(\indexstd-1)}}
  \end{equation}
  for $1 \leq \indexstd\leq\domainsmall$
  (and note that for $\indexstd = 1$ this is the empty clause).
    We will show how to
  derive~$\cla$ in \reswidth $\indexstd+1$
    given clauses
    $
  \cla \lor \olnot{\varstd}_{i,1}, \, 
  \cla \lor \olnot{\varstd}_{i,2}, \,
  \ldots, \,
  \cla \lor \olnot{\varstd}_{i,m_{\indexstd}}
  $.

  We start by resolving the axioms
  $\thrvaraux_{\indexstd,0}$ 
  and 
  $\olnot{\thrvaraux}_{\indexstd,0} \lor
  \varstd_{\indexstd,1} \lor
  \thrvaraux_{\indexstd,1}$, 
  and then we apply the resolution rule again on this resolvent 
  and the clause 
  $\cla \lor \olnot{\varstd}_{\indexstd,1}$ 
  (available by the induction hypothesis)
  to get
  $\cla \lor \thrvaraux_{\indexstd,1}$.
  We now deduce
  $\cla \lor \thrvaraux_{\indexstd,\indexaux}$ for increasing $\indexaux$. 
    Suppose we have already obtained
  $\cla \lor \thrvaraux_{\indexstd,\indexaux-1}$.
  Using the inductively derived clause
  $\cla \lor \olnot{\varstd}_{\indexstd,\indexaux}$
  and the axiom
  $
  \olnot{\thrvaraux}_{\indexstd,\indexaux-1} \lor
  \varstd_{\indexstd,\indexaux} \lor
  \thrvaraux_{\indexstd,\indexaux}
  $,
  we can resolve   on variables 
  $\thrvaraux_{\indexstd,\indexaux-1}$
  and 
  $\varstd_{\indexstd,\indexaux}$
  to obtain
  $\cla \lor \thrvaraux_{\indexstd,\indexaux}$.
  Once $\cla \lor \thrvaraux_{\indexstd,m_\indexstd}$ 
  has been derived, we resolve it with the axiom
  $\olnot{\thrvaraux}_{\indexstd,m_\indexstd}$ to get $\cla$.
    By backward induction we reach  the empty clause for 
  $\indexstd=1$, which concludes the resolution refutation.
    Since $\indexstd \leq k$, the refutation has width $\domainsmall+1$.
    It is easy to verify that all axioms and intermediate clauses in the
  refutation are used exactly once. Thus, the refutation is tree-like,
  and has size exactly twice the number of axioms clauses minus one,
  which, in particular, is 
  $\Bigoh{\prod_{\indexstd=1}^{\domainsmall}m_{\indexstd}}$.
\end{proof}

}
When we construct formulas to be relativized as described in
\refsec{sec:intro-techniques}, it is convenient to use variables
$\varstd_{\indexstd,\myvec{\indexaux}}$, 
 where $\indexstd$ ranges over some
specific domain~$\domainset$ and
$\myvec{\indexaux}\,$ is a collection of other indices.
We say that the variable
$\varstd_{\indexstd,\myvec{\indexaux}}$
\introduceterm{mentions} the element $\indexstd \in \domainset$.
The \introduceterm{\indexwidth} of a clause is the number of distinct elements
of $\domainset$ mentioned by its variables.
The \indexwidth of a CNF formula or resolution proof is defined by
taking the maximum \indexwidth over all 
its
clauses, 
and the \indexwidth of refuting a CNF formula~$\fstd$ is the minimal
\indexwidth of any resolution refutation of~$\fstd$.
Similarly, the \introduceterm{\indexdegree} of a monomial is the
number of distinct elements in~$\domainset$ mentioned by its
variables, the \indexdegree of a polynomial or \SOS proof is the
maximal \indexdegree of any monomial in it, and the \indexdegree of
refuting an unsatisfiable system of polynomial constraints is
defined by taking the minimum over all refutations.

\section{A Degree Lower Bound for Clique Formulas}
\label{sec:degree-lower-bound}

In this section we state and prove the formal version of
\refth{th:degreelowerbound-informal}, namely a lower bound for the
\indexdegree needed in \SOS to prove that a graph $\graphstd$ has no
$\cliquesize$-clique.  Let us start by describing how we encode the
$k$-clique problem as a CNF formula.

\begin{definition}
  [$\cliquesize$-clique formula]
  \label{def:cliqueformula}
  Let $\cliquesize$ be a positive integer, 
  $\graphstd = (V, E)$ be an undirected graph on $\cliquevertexnum$~vertices, 
  and 
  $(v_{1},v_{2},\ldots,v_{\cliquevertexnum})$
  be an enumeration 
  of~$V(\graphstd) = V$.
        Then the formula \cliqueformula{\graphstd} 
  consists of the clauses
  \begin{subequations}
    \begin{align}
      &
        \olnot{\cliquevar}_{\domainidx,u} \lor
        \olnot{\cliquevar}_{\domainidx',v} 
      &&
         \text{           $\domainidx,\domainidx' \in [\cliquesize]$,
           $\domainidx \neq \domainidx'$,
           $\{u,v\} \not\in E(\graphstd)$,
         }\label{eq:edge_clause}
      \\
      &
        \olnot{\cliquevar}_{\domainidx,u} \lor
        \olnot{\cliquevar}_{\domainidx,v},
      &&
         \text{$\domainidx \in [\cliquesize]$,
           $u,v \in V(\graphstd)$,
           $u \neq v$,
         }\label{eq:functional_clause}
      \\
      &
        \cliqueaux_{\domainidx,0} 
      &&
         \text{$\domainidx\in[\cliquesize]$,
         }\label{eq:index_clause_cnfA}         
      \\
      &
        \olnot{\cliqueaux}_{\domainidx,{(\indexaux-1})} \lor 
        \cliquevar_{\domainidx,v_\indexaux} \lor
        \cliqueaux_{\domainidx,\indexaux}
      &&
         \text{$\domainidx\in[\cliquesize]$,
         $\indexaux\in[\cliquevertexnum]$,
         }\label{eq:index_clause_cnfB}
      \\
      &
        \olnot{\cliqueaux}_{\domainidx,\cliquevertexnum}
      &&
         \text{$\domainidx\in[\cliquesize]$.
         }\label{eq:index_clause_cnfC}
    \end{align}
  \end{subequations}
\end{definition}

The formula \cliqueformula{\graphstd} encodes the claim that
$\graphstd$ has a clique of size $\cliquesize$.
The intended meaning of 
the
variable $\cliquevar_{\domainidx,v}$ for
$v \in V(\graphstd)$ 
is that $v$ is 
the
$\domainidx$th 
vertex 
of the clique.
\ifthenelse{\boolean{conferenceversion}}{}{The clauses in~\eqref{eq:edge_clause} enforce that any two members of
the clique are distinct and are connected by an edge.
The clauses in~\eqref{eq:functional_clause} enforce that at most one
vertex is chosen for each $\domainidx \in [\cliquesize]$.
The clauses
in~\eqref{eq:index_clause_cnfA}--\eqref{eq:index_clause_cnfC}
are simply the $3$-CNF encoding (using extension variables) of the clause
$\Lor^{\cliquevertexnum}_{\indexaux=1}
\cliquevar_{\domainidx,v_{\indexaux}}$
enforcing that at least one vertex is chosen for each
$\domainidx \in [\cliquesize]$.
}
The variables of \cliqueformula{\graphstd} are indexed by $\domainidx$ 
over the domain $[\cliquesize]$ and the \indexwidth of the formula is~$2$.
The next
proposition shows that 
the naive
brute-force approach to decide
$\cliqueformula{\graphstd}$ can be carried on in resolution
(and hence by \reflem{lmm:simulation} also in \SOS).

\begin{proposition}
  \label{stm:cliqueupperbound}
  If $\graphstd$ has no clique of size $\cliquesize$, then
  \cliqueformula{\graphstd} has a resolution refutation of
  \ressize
  $\Bigoh{\setsize{V}^{\cliquesize}}$
  and
  \reswidth $\cliquesize+1$.  
\end{proposition}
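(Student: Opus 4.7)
The plan is to apply the brute-force refutation template from \reflem{lmm:bruteforce} with parameters $\domainsmall := \cliquesize$ and $m_1 = m_2 = \cdots = m_{\cliquesize} := \cliquevertexnum = \setsize{V(\graphstd)}$, mapping the threshold variables $y_{i,j}$ and $x_{i,j}$ of \reflem{lmm:bruteforce} to the extension variables $\cliqueaux_{\domainidx,\indexaux}$ and the vertex variables $\cliquevar_{\domainidx,v_\indexaux}$ of $\cliqueformula{\graphstd}$, respectively. Under this identification, the clauses~\refeq{eq:index_clause_cnfA}--\refeq{eq:index_clause_cnfC} of $\cliqueformula{\graphstd}$ are \emph{exactly} the first three families of axioms in \reflem{lmm:bruteforce}. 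Thus it suffices to derive, for every tuple $(\indexaux_1,\ldots,\indexaux_{\cliquesize}) \in [\cliquevertexnum]^{\cliquesize}$, the ``tuple clause''
\begin{equation*}
  \olnot{\cliquevar}_{1,v_{\indexaux_1}} \lor \olnot{\cliquevar}_{2,v_{\indexaux_2}} \lor \cdots \lor \olnot{\cliquevar}_{\cliquesize,v_{\indexaux_{\cliquesize}}}
\end{equation*}
in width $\cliquesize$, using only the axioms~\refeq{eq:edge_clause} and~\refeq{eq:functional_clause}. Once all $\cliquevertexnum^{\cliquesize}$ such clauses are present, \reflem{lmm:bruteforce} finishes the job in \reswidth $\cliquesize+1$ and \ressize $\Bigoh{\cliquevertexnum^{\cliquesize}}$.

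To derive the tuple clauses, I would use the hypothesis that $\graphstd$ has no $\cliquesize$-clique. For any fixed tuple $(\indexaux_1,\ldots,\indexaux_{\cliquesize})$, the vertices $v_{\indexaux_1},\ldots,v_{\indexaux_{\cliquesize}}$ cannot form a $\cliquesize$-clique in~$\graphstd$, so either (a)~two of them coincide, say $v_{\indexaux_\domainidx} = v_{\indexaux_{\domainidx'}}$ with $\domainidx \neq \domainidx'$, in which case $\{v_{\indexaux_\domainidx}, v_{\indexaux_{\domainidx'}}\}\not\in E(\graphstd)$ and the axiom $\olnot{\cliquevar}_{\domainidx,v_{\indexaux_\domainidx}} \lor \olnot{\cliquevar}_{\domainidx',v_{\indexaux_{\domainidx'}}}$ appears in~\refeq{eq:edge_clause}, or (b)~two of them are distinct but non-adjacent, in which case the corresponding axiom again appears in~\refeq{eq:edge_clause}, or (c)~two of them are placed at the same position index $\domainidx = \domainidx'$, in which case \refeq{eq:functional_clause} provides the axiom. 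In every case a width-$2$ axiom of the required form exists, and a single weakening step produces the desired tuple clause of width~$\cliquesize$.

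Putting the pieces together: the preliminary derivation uses one weakening per tuple, contributing $\cliquevertexnum^{\cliquesize}$ clauses of width~$\cliquesize$; the subsequent invocation of \reflem{lmm:bruteforce} adds $\Bigoh{\cliquevertexnum^{\cliquesize}}$ further clauses of width at most $\cliquesize+1$. Hence the total refutation has \reswidth $\cliquesize+1$ and \ressize $\Bigoh{\cliquevertexnum^{\cliquesize}}$, as claimed. There is no real obstacle here beyond the bookkeeping of checking that the three sub-cases above genuinely cover every non-clique tuple and that the identification between $\cliqueformula{\graphstd}$ and the template of \reflem{lmm:bruteforce} is valid; the result is essentially a packaging of the brute-force enumeration over all candidate $\cliquesize$-tuples.
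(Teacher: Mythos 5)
Your proposal is correct and follows essentially the same route as the paper's own proof: weaken an axiom of the form~\eqref{eq:edge_clause} to obtain each of the $\cliquevertexnum^{\cliquesize}$ tuple clauses (possible since any $\cliquesize$-tuple either repeats a vertex or contains a non-adjacent pair), then invoke \reflem{lmm:bruteforce} on these together with~\eqref{eq:index_clause_cnfA}--\eqref{eq:index_clause_cnfC}. The only cosmetic difference is your case~(c), which is vacuous because the position indices $1,\ldots,\cliquesize$ in a tuple clause are automatically distinct, so the axioms~\eqref{eq:functional_clause} are never actually needed.
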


\begin{proof}
  We first use the weakening rule to derive all clauses of the form
  \begin{equation}
    \label{eq:clique_sequence}
    \olnot{\cliquevar}_{1,u_1} \lor \olnot{\cliquevar}_{2,u_2} \lor
    \cdots \lor \olnot{\cliquevar}_{\cliquesize,u_\cliquesize} 
  \end{equation}
  for every sequence of vertices
  $(u_{1},u_{2},\ldots,u_{\cliquesize})$.
  This is possible since either the sequence contains a
  repetition or it includes two vertices with no edge between them,
  and in both cases this means that the
  clause~\refeq{eq:clique_sequence} is a superclause of some 
  clause of the form~\eqref{eq:edge_clause}. 
    Then we derive the empty clause by applying \reflem{lmm:bruteforce}
  to the
  clauses~\eqref{eq:index_clause_cnfA}--\eqref{eq:index_clause_cnfC}
  and~\eqref{eq:clique_sequence}. 
\end{proof}

In order to obtain suitably hard instances of
$\cliqueformula{\graphstd}$ we construct a reduction from
\xorformula{}s to $k$-partite graphs. It is convenient for us to
describe the special case of $\cliquesize$-clique on
$\cliquesize$-partite graphs directly as an encoding as polynomial
equations and inequalities as follows next.

\begin{definition}  [Polynomial encoding of $\cliquesize$-clique on $\cliquesize$-partite graphs]
  \label{def:blockformula}
  For a $\cliquesize$-partite graph $\graphstd$ with
  $V(\graphstd) = 
  V_{1}\disjointunion V_{2} \disjointunion \cdots \disjointunion
  V_{\cliquesize}$ 
  we let 
    \blockformula{\graphstd} denotes the following collection of
  polynomial constraints:
  \begin{subequations}
    \begin{align}
      &
        \sum_{v \in V_{\domainidx}} \cliquevar_{v}= 1
      && \text{$\domainidx \in [\cliquesize]$,}
         \label{eq:blockindex_clause}
      \\
      &
      \label{eq:block-edge}
        \cliquevar_{u} + \cliquevar_{v} \leq 1
      && \text{$u \in V_\domainidx, v \in V_{\domainidx'}$, 
          $\domainidx\neq\domainidx'$, $\{u,v\}\not\in \edges{\graphstd}$.}
    \end{align}
  \end{subequations}
\end{definition}

\ifthenelse{\boolean{conferenceversion}}{}{It is straightforward to verify that these constrants encode the claim
that $\graphstd$ has a clique with one element in each block
$V_{\domainidx}$, since exactly one element is chosen from each block
by~\refeq{eq:blockindex_clause} and all the chosen elements have to be
pairwise connected by~\refeq{eq:block-edge}.

Any lower bound on degree that we establish for
$\blockformula{\graphstd}$ will hold also for
$\cliqueformula{\graphstd}$ as stated in the following proposition.
}

\begin{proposition}
  \label{stm:indexdegreeblock}
  Consider a $\cliquesize$-partite graph $\graphstd$, where 
  $V(\graphstd)=V_{1}\disjointunion V_{2} \disjointunion \cdots
  \disjointunion V_{\cliquesize}$. 
    If \cliqueformula{\graphstd} has an \SOS refutation in \indexdegree
  $\sosdegreestd$, then \blockformula{\graphstd} has an \SOS
  refutation in \indexdegree $\sosdegreestd$.
\end{proposition}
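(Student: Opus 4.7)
The plan is to construct a substitution $\sigma$ sending the variables of $\cliqueformula{\graphstd}$ to polynomials in the variables of $\blockformula{\graphstd}$ such that (i)~$\sigma$ descends to a ring homomorphism on the multilinear quotient, and hence maps SOS refutations to SOS refutations; (ii)~the image $\sigma(A)$ of every axiom~$A$ of $\cliqueformula{\graphstd}$ is SOS-derivable from $\blockformula{\graphstd}$ in \indexdegree~$1$; and (iii)~$\sigma$ does not increase the \indexdegree of any polynomial. Given these three properties, applying $\sigma$ to an SOS refutation of $\cliqueformula{\graphstd}$ of \indexdegree~$\sosdegreestd$ and then replacing each image of an axiom by its short derivation from $\blockformula{\graphstd}$ yields an SOS refutation of $\blockformula{\graphstd}$ of \indexdegree at most~$\sosdegreestd$.

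The concrete choice is $\sigma(\cliquevar_{\domainidx,v}) = \cliquevar_v$ when $v \in V_{\domainidx}$ and $\sigma(\cliquevar_{\domainidx,v}) = 0$ otherwise, together with $\sigma(\cliqueaux_{\domainidx,j}) = \prod_{\ell \leq j,\, v_{\ell} \in V_{\domainidx}}(1-\cliquevar_{v_{\ell}})$. Each image is a product of commuting idempotents ($\cliquevar_v$ and $1-\cliquevar_v$ are idempotent in the multilinear ring), hence itself idempotent, which gives~(i). Every variable appearing in $\sigma(x)$ lies in the single block $V_{\domainidx}$ mentioned by~$x$, which gives~(iii).

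For~(ii), one performs a case analysis on the axiom types. The non-edge axioms~\refeq{eq:edge_clause} map either to a direct $\blockformula{\graphstd}$ edge inequality (when both vertices lie in their claimed blocks) or to a weaker inequality such as $1-\cliquevar_v\geq 0$, which is derivable from the block-equality $\sum_{w \in V_{\domainidx}}\cliquevar_w = 1$ and the non-negativity facts $\cliquevar_w\geq 0$. The functional axioms~\refeq{eq:functional_clause} similarly map to $1-\cliquevar_u-\cliquevar_v = \sum_{w\in V_{\domainidx}\setminus\set{u,v}}\cliquevar_w$. Axiom~\refeq{eq:index_clause_cnfA} becomes the tautology $1\geq 1$; axiom~\refeq{eq:index_clause_cnfB} collapses, after expanding the substitution and cancelling, to $\cliquevar_{v_j}(1-P_{\domainidx,j-1})\geq 0$, which is SOS because its left-hand side is an idempotent polynomial~$p$ and any idempotent $p$ equals the square $p\cdot p$; and axiom~\refeq{eq:index_clause_cnfC} reduces to $-P_{\domainidx,N}\geq 0$, which equals $P_{\domainidx,N}\cdot(\sum_{v\in V_{\domainidx}}\cliquevar_v-1)$ in the multilinear ring (using $\cliquevar_v(1-\cliquevar_v)=0$ to show $P_{\domainidx,N}\cdot\sum_v\cliquevar_v = 0$), so it is a polynomial multiple of the block-equality axiom. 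All of these derivations involve only variables from one block $V_{\domainidx}$ and therefore have \indexdegree~$1$.

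The main technical subtlety is the axiom-by-axiom verification for the extension variables~$\cliqueaux_{\domainidx,j}$; the rest is routine bookkeeping. The product substitution $\prod(1-\cliquevar_{v_{\ell}})$ is the right choice here, rather than the more natural linear substitution $1-\sum\cliquevar_{v_{\ell}}$, because the product is genuinely idempotent in the multilinear ring without any need for side lemmas such as derived functional constraints $\cliquevar_u\cliquevar_v = 0$. Once~(ii) is in hand, each occurrence of $\sigma(A)$ in the substituted refutation is replaced by its derivation, SOS coefficients remain SOS (since $\sigma$ preserves squares and products of SOS polynomials are SOS), and the \indexdegree bound is maintained because every replacement introduces only indices already mentioned by the axiom being replaced, hence already inside the \indexdegree budget of the original proof.
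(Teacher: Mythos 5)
Your proof is correct and yields the same \indexdegree bound as the paper's, but by a genuinely different route. The paper's variable substitution sends $\cliqueaux_{\domainidx,\indexaux}$ to the linear form $\sum_{t > \indexaux}\cliquevar_{\domainidx,v_t}$ and then zeroes out the out-of-block variables; you instead send $\cliqueaux_{\domainidx,\indexaux}$ to the product $\prod_{\ell \leq \indexaux,\, v_\ell \in V_{\domainidx}}(1-\cliquevar_{v_\ell})$. Your product substitution buys a real structural advantage: every image under $\sigma$ is a product of commuting idempotents, hence idempotent, so $\sigma$ extends to a ring homomorphism of the multilinear quotient. That makes the passage from one multilinear SOS identity to another automatic, and it preserves the SOS structure of the coefficients without any appeal to the re-multilinearization device discussed in \refrem{rem:sos-multilinearity}, which the paper's degree-one substitution silently relies on. The cost is a different division of labour among the axioms: under your $\sigma$, axiom~\refeq{eq:index_clause_cnfA} collapses to the trivial $1\geq 1$ while \refeq{eq:index_clause_cnfB} and \refeq{eq:index_clause_cnfC} need the idempotency argument and the factorization through the block-equality axiom that you supply, whereas under the paper's substitution \refeq{eq:index_clause_cnfB} and \refeq{eq:index_clause_cnfC} become $1\geq 1$ and \refeq{eq:index_clause_cnfA} is the one that reduces to the block equality. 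The handling of the edge and functionality axioms \refeq{eq:edge_clause} and \refeq{eq:functional_clause} is essentially the same in both arguments, as is the (simple but necessary) observation that each derivation of a substituted axiom introduces only indices already present in that axiom, so the final \indexdegree budget is respected.
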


\begin{proof}
  The proof is by transforming
  a refutation of~\cliqueformula{\graphstd} 
  into a refutation of \blockformula{\graphstd} of the
  same \indexdegree.
    To give an overview, we
  start with a refutation of \cliqueformula{\graphstd} of
  \indexdegree~$\sosdegreestd$ and 
  replace
  its variables with
  polynomials of degree at most~$1$ mentioning only variables from
  \blockformula{\graphstd}.
    In this way we get an \SOS refutation of \indexdegree at most
  $\sosdegreestd$ from the substituted axioms of
  \cliqueformula{\graphstd}.
    The latter polynomials are not necessarily axioms of
  \blockformula{\graphstd}, 
  but we show that they have 
  \SOS derivations
  of \indexdegree~$1$ from the axioms of
  \blockformula{\graphstd}. This concludes the proof.

  The variable substitution has two steps: first we substitute every variable
  $\cliqueaux_{\domainidx,\indexaux}$ with the linear form
  $\sum^{\cliquevertexnum}_{t=\indexaux+1}
  \cliquevar_{\domainidx,v_t}$,
  where $\{v_{\indexaux}\}^{\cliquevertexnum}_{\indexaux=1}$ is the
  enumeration of $V(\graphstd)$ in 
  \refdef{def:cliqueformula}, and
  then we set $\cliquevar_{\domainidx,v_{\indexaux}}$ to $0$
  whenever $v_{\indexaux}\not\in V_{\domainidx}$.

  As mentioned above, we now need to give  
  \SOS derivations of   \indexdegree~$1$ of all
  transformed axioms in \cliqueformula{\graphstd}
  from~\blockformula{\graphstd}.
    For the axioms~\eqref{eq:index_clause_cnfA}--\eqref{eq:index_clause_cnfC}, 
  the \SOS encoding is
  \begin{subequations}
    \begin{align}
            &
        \cliqueaux_{\domainidx,0} \geq 1
      &&
        \domainidx\in[\cliquesize],
         \label{eq:index_clauseA}
      \\
      &
      \bigl( 1 - \cliqueaux_{\domainidx,(\indexaux-1)} \bigr) 
      + \cliquevar_{\domainidx,v_\indexaux} 
      + \cliqueaux_{\domainidx,\indexaux} \geq 1 
      &&
         \domainidx\in[\cliquesize], \indexaux\in[\cliquevertexnum],
         \label{eq:index_clauseB}
      \\
      &
      (1 - \cliqueaux_{\domainidx,\cliquevertexnum}) \geq 1
      &&
         \domainidx\in[\cliquesize].
         \label{eq:index_clauseC}
    \end{align}
  \end{subequations}
    
  After the first step of the
  substitution the
  inequalities~\eqref{eq:index_clauseA},~\eqref{eq:index_clauseB} 
  and~\eqref{eq:index_clauseC} become, respectively, 
    the inequality 
  $\sum^{\cliquevertexnum}_{\indexaux=1}
  \cliquevar_{\domainidx,v_{\indexaux}} \geq 1$,
    and 
  two occurrences of tautology $1 \geq 1$.
    Furthermore, after the second step of the substitution the
  inequality~\eqref{eq:index_clauseA} becomes
  $\sum_{v \in V_{\domainidx}} \cliquevar_{\domainidx,v} \geq 1$,
    which is subsumed by
  Equation~\eqref{eq:blockindex_clause}.
  Each of the axioms~\eqref{eq:edge_clause} and~\eqref{eq:functional_clause} 
  is encoded as 
  \begin{equation}
    \label{eq:subst-ineq}
    1 - \cliquevar_{\domainidx,u} -
    \cliquevar_{\domainidx',v} \geq 0     
  \end{equation}  
  for some pair of indices $\domainidx,\domainidx'$ and vertices~$u,v$.
    We assume that $u \in V_{\domainidx}$ and $v \in V_{\domainidx'}$,
  because otherwise the variable substitution 
  turns the inequality  into
  either a tautology or 
  into
  $1 - \cliquevar_{\domainidx,u} \geq 0$,
  where the latter 
  follows from
  ${(1 - \cliquevar_{\domainidx,u} )}^{2}\geq 0$ 
  by multilinearity.
  If $\domainidx\neq\domainidx'$ then the inequality 
  \refeq{eq:subst-ineq}
  is an axiom of 
 \blockformula{\graphstd}. If that is not the case, then we can 
 obtain
 $1 - \varstd_{\domainidx,u} - \varstd_{\domainidx,v}$
 in \indexdegree~$1$ using the derivation
  \begin{equation}
    \underbrace{1 - \sum_{v\in V_\domainidx}
      \varstd_{\domainidx,w}}_{\text{from
        Equation~\eqref{eq:blockindex_clause}}} 
    +
    \underbrace{\sum_{w\not\in\{u,v\}}
      {(\varstd_{\domainidx,w})}^{2}}_{{\text{sum of squares}}}
    =
    1 - \sum_{v\in V_\domainidx} \varstd_{\domainidx,w} 
    +
    \sum_{w\not\in\{u,v\}}\varstd_{\domainidx,w}
    = 1 - \varstd_{\domainidx,u} - \varstd_{\domainidx,v}
  \end{equation}
  where the first identity holds by multilinearity.
  The proposition follows.
                \end{proof}

What we want to do now is to prove a \indexdegree lower bound for
instances of \blockformula{\graphstd} 
where the graph~$\graphstd$ is obtained by a reduction from
(unsatisfiable) sets of \mbox{$\F_2$-linear} equations.
We rely on the version of Grigoriev's
\sosdegree lower
bound~\cite{Grigoriev01LinearLowerBound}
shown by Schoenebeck~\cite{Schoenebeck08LinearLevel},
which is conveniently stated for random $3$-XOR formulas as encoded
next.

\begin{definition}
  [Polynomial encoding of random \xorformula]
  \label{def:randomxor}
  A random \xorformula formula $\xorstd$ represents a system of
  $\Delta \xorvarnum$ 
  linear equations modulo~$2$ defined over $\xorvarnum$
  variables.
    Each equation is sampled at random among all equations of the form
  $\xorvarA \oplus \xorvarB \oplus \xorvarC = b$ as follows:
  $\xorvarA$, $\xorvarB$, $\xorvarC$ are sampled uniformily without
  replacement from the set of $\xorvarnum$ variables and $b$ is
  sampled uniformly in $\set{0,1}$.
    The polynomial encoding of any such linear equation modulo~$2$ 
  is
  \begin{subequations}
    \label{eq:xor_clauses}
    \begin{align}
      (1 - \xorvarA)   (1-\xorvarB)  \xorvarC     &=0\label{eq:xor_clauses_first}\\
      (1 - \xorvarA)   \xorvarB      (1-\xorvarC) &=0\\
      \xorvarA         (1-\xorvarB)  (1-\xorvarC) &=0\\
      \xorvarA         \xorvarB      \xorvarC     &=0\\
      \intertext{when $b = 0$ and}
      (1 - \xorvarA)  (1-\xorvarB)  (1-\xorvarC) &=0\\
      \xorvarA        \xorvarB      (1-\xorvarC) &=0\\
      \xorvarA        (1-\xorvarB)  \xorvarC     &=0\\
      (1 - \xorvarA)  \xorvarB      \xorvarC     &=0\label{eq:xor_clauses_last}
    \end{align}
  \end{subequations}
  when $b=1$.
\end{definition}

Fixing $\delta=1/4$ and $\Delta=8$ in \cite{Schoenebeck08LinearLevel}
we have the following theorem.

\begin{theorem}[\cite{Schoenebeck08LinearLevel}]
  \label{thm:3XOR}
  There exists an $\xordegreelb$, $0 <  \xordegreelb < 1$,
  such that for  every $\epsilon > 0$ 
  there exists an $\xorminvar\in \N$ such that a random
  \xorformula formula $\xorstd$ in $\xorvarnum\geq \xorminvar$
  variables and $8\xorvarnum$ constraints has the following properties
  with probability at least $1 - \epsilon$.
  \begin{enumerate}
  \item At most $6\xorvarnum$ parity constraints of $\xorstd$ can be
    simultaneously satisfied.
  \item Any \sumsofsquares refutation of $\xorstd$ requires degree
    $\xordegreelb \xorvarnum$.
  \end{enumerate}
\end{theorem}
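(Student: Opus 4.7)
The plan is to handle the two parts by different techniques: the satisfiability bound by a straightforward first-moment/union-bound argument, and the SOS degree lower bound by constructing a degree-$d$ pseudo-expectation operator via a pseudo-distribution over partial assignments. For the first part I would observe that a fixed assignment $\alpha \in \{0,1\}^{\xorvarnum}$ satisfies each of the $8\xorvarnum$ randomly sampled equations independently with probability $1/2$, so the expected number satisfied is $4\xorvarnum$. A Chernoff bound yields $\Pr[\text{more than }6\xorvarnum\text{ satisfied}] \leq \exp(-c\xorvarnum)$ for an absolute constant $c > \ln 2$, and a union bound over the $2^{\xorvarnum}$ possible assignments finishes the argument with probability $1 - \exp(-\Omega(\xorvarnum))$, which can be absorbed into any fixed $\epsilon$ for $\xorvarnum$ large enough.

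For the degree lower bound, the core idea is to exploit expansion of the $3$-uniform hypergraph $H$ whose hyperedges record the triples $\{\xorvarA,\xorvarB,\xorvarC\}$ occurring in the constraints. The first step is to show that with probability $\geq 1 - \epsilon/2$ the hypergraph $H$ satisfies a boundary-expansion property of the following form: there exists a constant $\beta > 0$ such that every set $S$ of at most $\beta\xorvarnum$ hyperedges has a set of ``unique-neighbor'' variables (variables appearing in exactly one hyperedge of $S$) of size at least, say, $|S|/10$. This is a standard first-moment calculation for random constant-degree hypergraphs: one bounds the expected number of non-expanding subsets of each size $s \leq \beta\xorvarnum$ and sums over $s$; the small number of equations (only $8\xorvarnum$ of them) and their uniform sampling make the expected count go to zero as $\beta$ is chosen small enough. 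From this expansion it follows that any subsystem $\Phi' \subseteq \Phi$ of at most $\beta \xorvarnum$ equations is satisfiable: iteratively peel off a unique-neighbor variable and set it to satisfy the single equation that mentions it.

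Given expansion, the second step is the Grigoriev/Schoenebeck pseudo-expectation construction. Let $d = \xordegreelb \xorvarnum$ for a sufficiently small $\xordegreelb < \beta/2$. For each set $T$ of variables with $|T| \leq d$ I define a distribution $\mu_T$ on $\{0,1\}^T$ as follows: take the minimal set of equations $\Phi_T \subseteq \Phi$ that ``touches'' $T$ under repeated closure (adding equations whose variables are forced by the ones already chosen), argue using expansion that $|\Phi_T| \leq d / \xordegreelb' $ remains in the ``satisfiable'' regime, and let $\mu_T$ be the uniform distribution over assignments on $\mathrm{Vars}(\Phi_T)$ satisfying $\Phi_T$, projected onto $T$. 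The pseudo-expectation is then $\tilde{\Expop}[\monomialstd] = \Expop_{\alpha \sim \mu_{\vars{\monomialstd}}}[\monomialstd(\alpha)]$ for each monomial of degree at most $d$, extended linearly. The expansion property guarantees consistency: if $T \subseteq T'$ are both of small enough size, then $\mu_{T'}$ projected onto $T$ equals $\mu_T$, so $\tilde{\Expop}$ is well-defined and linear.

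The hard part, and also the crux of the proof, is verifying that the moment matrix $M[p,q] = \tilde{\Expop}[pq]$ indexed by monomials of degree $\leq d/2$ is positive semi-definite and that $\tilde{\Expop}[q \cdot \polyequation_i] = 0$ for every constraint polynomial $\polyequation_i$ and every multiplier $q$ of degree $\leq d - 3$. The orthogonality identities follow because whenever the closure operation picks up the constraint $\polyequation_i$, the distribution $\mu$ assigns probability zero to violating assignments. Positive semi-definiteness is the delicate step: one shows that the moment matrix factors as $M = L^\top L$ where $L$ records the local distributions; equivalently, one exhibits a genuine probability distribution on assignments to each $O(d)$-sized variable subset that is compatible with $\tilde{\Expop}$, and invokes expansion to glue these local distributions together without inconsistency. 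Once $\tilde{\Expop}$ is shown to satisfy these properties, the degree lower bound follows immediately: if there were an SOS refutation $-1 = \sum_i g_i \polyequation_i + \sum_j s_j^2$ of degree $\leq d$, applying $\tilde{\Expop}$ to both sides gives $-1 = 0 + \tilde{\Expop}[\sum_j s_j^2] \geq 0$, a contradiction. Both $\epsilon/2$ failure probabilities then combine via a union bound to yield the overall statement.
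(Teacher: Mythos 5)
This theorem is not proved in the paper at all: it is imported verbatim (with $\Delta=8$, $\delta=1/4$) from Schoenebeck's paper, which in turn follows Grigoriev, and your outline is exactly the argument of those cited proofs --- a Chernoff-plus-union bound for part (1), whose numerics do check out since the large-deviation exponent $8\,D(3/4\,\|\,1/2)\approx 1.05$ exceeds $\ln 2$, and the expansion-based local-distribution/pseudo-expectation construction for part (2). The only caveat is that the positive-semidefiniteness of the moment matrix, which you correctly identify as the crux, is asserted rather than verified, so what you have is a faithful reconstruction of the cited proof strategy rather than a self-contained proof.
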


Now we are ready to describe how to transform a
\xorformula formula $\xorstd$ into a 
$\cliquesize$-partite graph~$\graphclausestd$ 
that has a clique of size $\cliquesize$ if and only
if $\xorstd$ is satisfiable.

\begin{definition}  [\xorformula graph]
  \label{def:reduction}
  Given 
  $\cliquesize \in \N$
  and
  a \xorformula formula $\xorstd$ with $8\xorvarnum$~constraints
  over $\xorvarnum$~variables,
  where we assume for simplicity that
  $\cliquesize$ divides
  $8\xorvarnum$,
  we construct a 
  \introduceterm{\xorformula graph $\graphclausestd$}
  as follows.

  We arbitrarily split the formula $\xorstd$ into $\cliquesize$ linear
  systems with ${8\xorvarnum} / {\cliquesize}$ 
  constraints each, denoted as 
  $\xorstd_{1}, \xorstd_{2}, \ldots \xorstd_{\cliquesize}$.
  For each $\xorstd_{\domainidx}$ we 
      let
  $V_{\domainidx}$ 
  be a set of at most 
  $\cliquevertexnum \leq  2^{{24\xorvarnum} / {\cliquesize}}$
  vertices labelled by all possible assignments to the
  at most   ${24\xorvarnum} / {\cliquesize}$ variables
  appearing in $\xorstd_{\domainidx}$.
  For two distinct vertices $u\in V_{\domainidx}$ and
  $v \in V_{\domainidx'}$
  there is an edge between $u$ and $v$ in~$\graphclausestd$ if
  the two assignments corresponding to $u$ and $v$ are compatible, \ie
  when  they assign the same values to the common variables, 
  and also the union of the two assignments 
  does  not violate any constraint in~$\xorstd$.
    (In particular, each $V_{\domainidx}$ is an independent set, since
  two distinct assignments to the same set of variables are not
  compatible.) 
\end{definition}

The key property of the reduction in \refdef{def:reduction} is that it
allows small \indexdegree refutations of
\Blockformula{\graphclausestd} to be converted into small \sosdegree
refutations of~$\xorstd$.

\begin{lemma}
  \label{lmm:reduction}
  If \Blockformula{\graphclausestd} has an \SOS refutation of
  \indexdegree $\sosdegreestd$, then $\xorstd$ has an \SOS
  refutation of \sosdegree
  ${24\sosdegreestd\xorvarnum} / {\cliquesize}$.
  \end{lemma}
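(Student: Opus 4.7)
The plan is to convert any \SOS refutation of $\Blockformula{\graphclausestd}$ of \indexdegree $\sosdegreestd$ into an \SOS refutation of $\xorstd$ by substituting each clique variable $\cliquevar_v$, for $v \in V_\domainidx$, with the multilinear indicator polynomial $I_v$ of the partial assignment $\sigma_v$ that labels $v$. Since $\sigma_v$ is an assignment to at most $24\xorvarnum/\cliquesize$ variables of the subsystem $\xorstd_\domainidx$, the polynomial $I_v$ has degree at most $24\xorvarnum/\cliquesize$ in the \xorformula variables. Every monomial of \indexdegree at most $\sosdegreestd$ is replaced by a product of at most $\sosdegreestd$ such indicators, so every polynomial in the substituted derivation has total degree at most $24\sosdegreestd\xorvarnum/\cliquesize$, matching the target bound; moreover, sums of squares remain sums of squares under substitution.

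It remains to derive the substituted axioms of $\Blockformula{\graphclausestd}$ from the axioms of $\xorstd$ in \SOS within the same degree budget. The block equation $\sum_{v\in V_\domainidx}\cliquevar_v = 1$ becomes the multilinear identity $\sum_\sigma I_\sigma = 1$ by \reffact{fact:indicators_completeness}, so its multiplier contribution vanishes after substitution. For each non-edge inequality $1-\cliquevar_u-\cliquevar_v\geq 0$, a direct multilinear computation gives
\begin{equation*}
1-I_u-I_v \;=\; (1-I_u-I_v)^2 \,-\, 2\,I_u I_v,
\end{equation*}
reducing the task to handling $I_u I_v$. If $\sigma_u$ and $\sigma_v$ are incompatible, either because $u,v$ lie in the same block (so they give different values to a common variable) or because they disagree on a shared variable across blocks, then $I_u I_v$ contains a $\varstd(1-\varstd)$ factor and therefore vanishes identically in the multilinear ring, so $1-I_u-I_v$ is itself a square. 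Otherwise $\sigma_u$ and $\sigma_v$ are compatible but their union falsifies some constraint of $\xorstd$; in that case $I_u I_v$ has as a syntactic factor exactly one of the forbidden monomials $p_{uv}$ that appear on the left-hand side of an axiom of $\xorstd$ (see \refdef{def:randomxor}), and writing $I_u I_v = p_{uv}\,t_{uv}$ yields
\begin{equation*}
1-I_u-I_v \;=\; (1-I_u-I_v)^2 \,+\, (-2\,t_{uv})\cdot p_{uv},
\end{equation*}
which is an \SOS derivation of $1-I_u-I_v \geq 0$ from the single axiom $p_{uv} = 0$ of $\xorstd$.

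Gluing these local derivations to the \SOS coefficients from the given refutation produces the desired \SOS refutation of $\xorstd$. The main technical step, and what I expect to be the principal obstacle, is the degree accounting for the mixed terms $s_j \cdot I_u I_v$ that arise when we substitute the identity above: naively their degree could exceed the budget by roughly $\deg(I_u I_v) \leq 48\xorvarnum/\cliquesize$. The saving observation is that whenever a monomial of $s_j$ already mentions the block of $u$ or of $v$, then after substitution its product with $I_u$ or $I_v$ contains two distinct indicators on the same block and therefore vanishes. Consequently, only monomials of $s_j$ whose blocks are disjoint from those of $u$ and $v$ can contribute to $s_j\cdot I_u I_v$, and the original \indexdegree bound of $\sosdegreestd$ on $s_j\cliquevar_u$ and $s_j\cliquevar_v$ inside $s_j(1-\cliquevar_u-\cliquevar_v)$ then forces these surviving monomials to have low enough \indexdegree that $s_j\cdot I_u I_v$ after substitution still fits within the $24\sosdegreestd\xorvarnum/\cliquesize$ budget. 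This bookkeeping is the only delicate point; the algebraic manipulations are otherwise direct.
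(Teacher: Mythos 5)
Your high-level plan matches the paper's: substitute the indicator polynomial $\indicatorbase_v$ for each clique variable, observe that the substituted block equations are syntactic tautologies via \reffact{fact:indicators_completeness}, and derive the substituted non-edge inequalities $1-\indicatorbase_u-\indicatorbase_v \geq 0$ from the $\xorstd$ axioms, splitting into the incompatible case (where $\indicatorbase_u\indicatorbase_v=0$ identically, so $1-\indicatorbase_u-\indicatorbase_v$ is its own square) and the violating case. The concrete derivation in the violating case, however, differs from the paper's, and your patch to the degree accounting has a flaw.

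For the violating case you use $1-\indicatorbase_u-\indicatorbase_v = (1-\indicatorbase_u-\indicatorbase_v)^2 + (-2\,t_{uv})\cdot p_{uv}$. This is a correct identity, but both the square and the multiplier $-2t_{uv}$ applied to the axiom $p_{uv}=0$ have degree up to $\deg(\indicatorbase_u\indicatorbase_v)\approx 48\xorvarnum/\cliquesize$. The paper instead factorizes the violated axiom as $p_{uv}=f_uf_v$ according to which XOR variables belong to block~$u$'s respectively block~$v$'s domain, so that $f_u$ divides $\indicatorbase_u$ and $f_v$ divides $\indicatorbase_v$, and uses
\begin{equation*}
(1-f_u-f_v)^2 + (f_u-\indicatorbase_u)^2 + (f_v-\indicatorbase_v)^2 + (-2)\cdot f_uf_v \;=\; 1-\indicatorbase_u-\indicatorbase_v,
\end{equation*}
where $(f_u-\indicatorbase_u)^2 = f_u-\indicatorbase_u$ and $(f_v-\indicatorbase_v)^2 = f_v-\indicatorbase_v$ follow from multilinearity and $f_u\indicatorbase_u=\indicatorbase_u$, $f_v\indicatorbase_v=\indicatorbase_v$. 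The point of this factorization is that the multiplier applied to the $\xorstd$ equation is now a \emph{constant}, and the squared terms have degree at most $24\xorvarnum/\cliquesize$ rather than $48\xorvarnum/\cliquesize$. Your version is simpler but pays a factor of two in the local derivation's degree.

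You are right that the degree accounting for the composed refutation is the delicate point, but your ``saving observation'' is not quite correct. You claim that a monomial of $s_j$ mentioning $u$'s block must, after substitution, yield a vanishing product with $\indicatorbase_u$. This fails exactly when that monomial contains the variable $\cliquevar_u$ itself: then the substituted product contains $\indicatorbase_u^2=\indicatorbase_u$, which does not vanish (though it also does not raise the degree). Hence the assertion that only monomials of $s_j$ whose blocks avoid those of $u$ and $v$ can survive is wrong. The correct observation is that each surviving $\tilde m\cdot\indicatorbase_u$ either absorbs $\indicatorbase_u$ (if $\tilde m$ already contains it) or gains at most one extra indicator. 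With that fix the composed refutation has degree $\Bigoh{\sosdegreestd\xorvarnum/\cliquesize}$, which is what matters downstream in \refth{thm:degreelowerbound}; the paper's tighter local identity is what keeps the constant close to the stated $24$.
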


\begin{proof}
  Again we start by giving an overview of the proof, which works by
  transforming a refutation of
  \Blockformula{\graphclausestd} of \indexdegree~$\sosdegreestd$
  into a refutation of $\xorstd$ of \sosdegree
    ${24\sosdegreestd\xorvarnum} / {\cliquesize}$. 
  
  Given a refutation 
  of \Blockformula{\graphclausestd} of
  \indexdegree $\sosdegreestd$,
  we
      replace every variable $\cliquevar_{v}$ with a polynomial
  over the variables of $\xorstd$.
  In this way we get an \SOS refutation from the polynomials
  corresponding to the substituted axioms of \Blockformula{\graphclausestd}.
    The latter polynomials need not be axioms of~$\xorstd$, 
    but we show that they can be efficiently derived in \SOS
  from~$\xorstd$. 
    We thus obtain
  an \SOS refutation of~$\xorstd$, the
  \sosdegree of which is easily verified to be as in the statement of
  the lemma.

  We now describe the substitution in 
    detail.
  Consider a block
  $V_{\domainidx}$
  and suppose that the corresponding
  \xorformula formula~$\xorstd_{\domainidx}$ mentions
  $\blockvarnum$~variables. Let us write $\myvec{\xorvar}$ to denote
  this set of variables.
    Then every 
  vertex $v \in V_{\domainidx}$ represents an assignment
  $\stringstd\in\{0,1\}^{\blockvarnum}$ 
  to~$\myvec{\xorvar}$.
    In what follows, we denote the indicator
  polynomial~$\indicator{\myvec{\xorvar}}{\stringstd}$ 
  in~\refeq{eq:indicator-poly}
  by 
  $\indicatorbase_{v}$ for
  brevity, and we substitute for each variable $\cliquevar_{v}$ the
  polynomial
  $\indicatorbase_{v}$ of degree
  $t \leq {24\xorvarnum} / {\cliquesize}$.

  Before the substitution
  each monomial in the original refutation
  has \indexdegree at most $\sosdegreestd$ 
  by assumption. 
    Two important observations are that 
  ${(\indicatorbase_{v})}^{2} = \indicatorbase_{v}$ 
  for
  every $v \in V_{\domainidx}$ and
  that
  $ \indicatorbase_{u}\indicatorbase_{v} =0 $ for every two distinct
  $u,v$ in the same block~$V_{\domainidx}$. Therefore, 
  after the substitution
  each monomial
  is either identically zero or the product of at most $\sosdegreestd$
  indicator polynomials, 
  and hence
  its \sosdegree is at most
  $ {24\sosdegreestd\xorvarnum} / {\cliquesize}$.
    \ifthenelse{\boolean{conferenceversion}}           
  {}
  {To verify these observations, note that the
  identity ${(\indicatorbase_{v})}^{2} = \indicatorbase_{v} $
  holds 
  by Proposition~\ref{stm:multilinear_representation}.
      The equality
  $\indicatorbase_{u}\indicatorbase_{v}=0$ holds because 
  $\indicatorbase_{u}$ and~$\indicatorbase_{v}$
  are the indicator polynomials
  of two incompatible assignments, 
      and so their product always evaluates to zero. 
  Applying
  Proposition~\ref{stm:multilinear_representation} 
  again, we conclude that
  the 
  (multilinear)
  polynomial
  $\indicatorbase_{u}\indicatorbase_{v}$ is identically zero.}

  In order to complete the proof outline above, we now need to 
  present
  \SOS derivations 
  starting
  from the 
  \xorformula constraints of~$\xorstd$ of all polynomial
  constraints resulting from the substitutions in the axioms
  of~\Blockformula{\graphclausestd} described above, and to do so in
  \sosdegree at \mbox{most ${24\xorvarnum} / {\cliquesize}$.}

  Let us first look at the 
  axioms~\eqref{eq:blockindex_clause}. 
    By Fact~\ref{fact:indicators_completeness}, the  identity
  \begin{equation}
    \sum_{v \in V_{\domainidx}}
    \indicatorbase_{v} \ =
    \sum_{\stringstd\in\{0,1\}^{\blockvarnum}}
    \indicator{\myvec{\xorvar}}{\stringstd}=1
  \end{equation}
  holds syntactically,
  so substitutions in axioms of the form~\eqref{eq:blockindex_clause}
  result in tautologies~$1=1$.

  The remaining axioms of \Blockformula{\graphclausestd} 
  in~\refeq{eq:block-edge} 
  have the form
  $\cliquevar_{u}+\cliquevar_{v} \leq 1$
  for non-edges $(u,v)$ between vertices in different blocks.
    By construction of $\graphclausestd$ 
  the reason $u$ and~$v$ are not connected is
  either 
  that
  the partial assignments corresponding to the two vertices
  are incompatible, or 
  that
  their union violates
  some constraint
  in
  $\xorstd$.

  In the first case,
  $1 - \indicatorbase_{u} - \indicatorbase_{v} \geq 0$ is an \SOS
  axiom because of the identity
    \begin{equation}\label{eq:simulate_axioms}
  {(1-\indicatorbase_{u}-\indicatorbase_{v})}^{2} =
    1 -\indicatorbase_{u}-\indicatorbase_{v}
  \eqcomma
  \end{equation}
    which
  follows from the observation
  that $\indicatorbase_{u}$ and~$\indicatorbase_{v}$ are the indicator
  polynomials of two incompatible assignments and
  cannot evaluate
  to~$1$ simultaneously, 
  and
  so
    $(1-\indicatorbase_{u}-\indicatorbase_{v})$ evaluates to either $0$
  or $1$ and is identical to its square by
  Proposition~\ref{stm:multilinear_representation}. The degree
  of~\eqref{eq:simulate_axioms} is ${24\xorvarnum} / {\cliquesize}$.

  In the second case, the two assignments corresponding to $u$ and~$v$ are
  compatible but their union violates some  
  initial equation
  $\polyequation=0$
  of the
  form~\eqref{eq:xor_clauses_first}--\eqref{eq:xor_clauses_last}.
  Any such $f$ is a \mbox{\sosdegree-$3$} indicator polynomial
  which
    evaluates to~$1$ 
  whenever the assignment satisfies the equations
  $\indicatorbase_{u}\indicatorbase_{v}=1$.
  This means that $\indicatorbase_{u}\indicatorbase_{v}$
  contains~$\polyequation$ as a factor.
  We factorize
  $\polyequation$ as $\polyequation_{u}\polyequation_{v}$ 
  so that 
  $\indicatorbase_{u} = \polyequation_{u}\indicatorbase'_{u}$ 
  and
  $\indicatorbase_{v} = \polyequation_{v}\indicatorbase'_{v}$.
    Given this notation, we can derive
  $0 \leq 1-\indicatorbase_{u}-\indicatorbase_{v}$ 
  using the indentity
  \begin{equation}
    \label{eq:second-case-block-xor-bound}
  {(1-\polyequation_{u}-\polyequation_{v})}^{2} 
  +
  {(\polyequation_{u}- \indicatorbase_{u})}^{2} 
  +
  {(\polyequation_{v}- \indicatorbase_{v})}^{2} 
      - 2\polyequation_{u}\polyequation_{v} =
    1 -\indicatorbase_{u}-\indicatorbase_{v}
  \end{equation}
  of degree at most ${24\xorvarnum} / {\cliquesize}$. 
    \ifthenelse{\boolean{conferenceversion}}           
  {}
  {To verify~\eqref{eq:second-case-block-xor-bound},
  observe that the 
  left-hand side  is the
  sum of some squared polynomials and
  $ -2 \polyequation_{u}\polyequation_{v}
  = -2\polyequation = 0$.
    Expanding the squared polynomials and using
  \refpr{stm:multilinear_representation} repeatedly 
  we have that
  $(\polyequation_{u})^{2} = \polyequation_{u}$,
  $(\polyequation_{v})^{2} = \polyequation_{v}$, 
  $(\indicatorbase_{u})^{2} = \indicatorbase_{u}$,
  and 
  $(\indicatorbase_{v})^{2} = \indicatorbase_{v}$, 
  from which we also conclude that
  \begin{equation}
    \polyequation_{u}\indicatorbase_{u} =
    \polyequation_{u}\bigl(\polyequation_{u}\indicatorbase'_{u}\bigr) =
    {\bigl(\polyequation_{u}\bigr)}^{2}\indicatorbase'_{u} =
    \polyequation_{u}\indicatorbase'_{u} = \indicatorbase_{u}
  \end{equation}
  and
  \begin{equation}
    \polyequation_{v}\indicatorbase_{v} =
    \polyequation_{v}\bigl(\polyequation_{v}\indicatorbase'_{v}\bigr) =
    {\bigl(\polyequation_{v}\bigr)}^{2}\indicatorbase'_{v} =
    \polyequation_{v}\indicatorbase'_{v} = \indicatorbase_{v}
  \end{equation}
  which establishes that~\eqref{eq:second-case-block-xor-bound} holds.}
  The lemma follows.
\end{proof}

Now we can put together all the material in this section to prove a
formal version of
\refth{th:degreelowerbound-informal} 
as stated next.

\begin{theorem}
  \label{thm:degreelowerbound}
  There are universal constants 
  $\cliquevertexbound\in\Nplus$ 
  and 
  $\cliquedegreebound$, 
  $0 < \cliquedegreebound < 1$,
  such that 
  for every $\cliquesize \geq 1$ there exists a graph 
  $\graphstd_\cliquesize$ 
  with at most
  $
  \cliquesize \cliquevertexbound = 
  \bigoh{\cliquesize}
  $ 
  vertices 
  and a $3$-CNF formula
  $\cliqueformula[\cliquesize]{\graphstd_{\cliquesize}}$
  of size polynomial 
  in~$\cliquesize$    
    with the following properties:
  \begin{enumerate}
    \item 
      Resolution can refute 
      \cliqueformula{\graphstd_\cliquesize} in
      \ressize $2^{\bigoh{\cliquesize\log\cliquesize}}$
      and \reswidth $\cliquesize+1$.
    \item
      Any \SOS refutation of \cliqueformula{\graphstd_\cliquesize} requires
    \indexdegree $\cliquedegreebound \cliquesize$.
  \end{enumerate}
\end{theorem}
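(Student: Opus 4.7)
The plan is to combine the ingredients already developed in this section into a clean parameter choice. Start from Theorem~\ref{thm:3XOR} applied with any fixed $\epsilon$ (e.g.\ $\epsilon = 1/2$) to obtain a random \xorformula formula $\xorstd$ on $\xorvarnum$ variables with $8\xorvarnum$ constraints that simultaneously (i) is unsatisfiable, since at most $6\xorvarnum$ constraints can be jointly satisfied, and (ii) requires \sumsofsquares refutation \sosdegree at least $\xordegreelb\xorvarnum$. The key design choice is to pick $\xorvarnum$ proportional to $\cliquesize$, say $\xorvarnum = c\cliquesize$ for a sufficiently small constant~$c$, so that the \xorformula graph $\graphclausestd$ from Definition~\ref{def:reduction} has at most $\cliquesize \cdot 2^{24c} = \Bigoh{\cliquesize}$ vertices. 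This will set $\cliquevertexbound = 2^{24c}$.

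For the lower bound I would argue by contradiction: assume $\cliqueformula{\graphclausestd}$ has an \SOS refutation of \indexdegree strictly less than $\xordegreelb\cliquesize / 24$. By Proposition~\ref{stm:indexdegreeblock}, the same \indexdegree bound would carry over to a refutation of $\blockformula{\graphclausestd}$. Feeding this refutation into Lemma~\ref{lmm:reduction} would then produce an \SOS refutation of $\xorstd$ in \sosdegree strictly less than $24 \cdot (\xordegreelb\cliquesize / 24) \cdot \xorvarnum / \cliquesize = \xordegreelb\xorvarnum$, contradicting property~(ii) above. So one can take $\cliquedegreebound = \xordegreelb / 24$.

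For the upper bound, unsatisfiability of $\xorstd$ together with the construction in Definition~\ref{def:reduction} immediately gives that $\graphclausestd$ contains no $\cliquesize$-clique: such a clique would pick one compatible assignment per block $V_{\domainidx}$, and the edge conditions would force the union of these assignments to satisfy every constraint of $\xorstd$. Proposition~\ref{stm:cliqueupperbound} then yields a resolution refutation of $\cliqueformula{\graphclausestd}$ of \reswidth $\cliquesize + 1$ and \ressize $\Bigoh{\setsize{V}^{\cliquesize}} = \Bigoh{(\cliquesize \cdot 2^{24c})^{\cliquesize}} = 2^{\bigoh{\cliquesize\log\cliquesize}}$.

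The only real friction is bookkeeping around the parameters: Definition~\ref{def:reduction} requires $\cliquesize$ to divide $8\xorvarnum$, and Theorem~\ref{thm:3XOR} requires $\xorvarnum \geq \xorminvar$. For $\cliquesize$ above some threshold one simply takes $\xorvarnum$ to be the smallest multiple of $\cliquesize$ (or of $\cliquesize/\gcd(\cliquesize,8)$) in the interval $[c\cliquesize, c\cliquesize + \cliquesize]$, which stays $\Theta(\cliquesize)$ and leaves the asymptotics unchanged. For the finitely many small values of $\cliquesize$ below the threshold, one can exhibit any fixed graph with no $\cliquesize$-clique and absorb the bound into the constants $\cliquevertexbound$ and the additive slack in $\cliquedegreebound\cliquesize$. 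This is the step that requires care but no new ideas.
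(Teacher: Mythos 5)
Your proposal is correct and follows essentially the same route as the paper: fix an unsatisfiable \xorformula instance guaranteed by \refth{thm:3XOR}, build $\graphclausestd$ via \refdef{def:reduction}, get the upper bound from \refpr{stm:cliqueupperbound}, and get the degree lower bound by chaining \refpr{stm:indexdegreeblock} and \reflem{lmm:reduction} with $\cliquedegreebound = \xordegreelb/24$. The only difference is that the paper sets $\xorvarnum = \cliquesize\xorminvar$ outright, which makes $8\xorvarnum$ divisible by $\cliquesize$ and ensures $\xorvarnum \geq \xorminvar$ for every $\cliquesize \geq 1$, so the bookkeeping in your final paragraph disappears.
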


\begin{proof}
  Fix any positive $\epsilon < 1$ and let
  $\cliquevertexbound 
  = 
  2^{24\xorminvar}$, 
  $\cliquedegreebound 
  =
  \frac{\xordegreelb}{24}$ 
  and
  $\xorvarnum = \cliquesize\xorminvar$,
  where
  $\xorminvar$ and $\xordegreelb$ are the universal constants 
  from \refth{thm:3XOR}.
  To build the graph $\graphstd_\cliquesize$ we
  take a \xorformula formula $\xorstd$ on $\xorvarnum$ variables and
  $8\xorvarnum$ equations from the distribution in
  \refdef{def:randomxor}.
    Since $\xorvarnum \geq \xorminvar$, Theorem~\ref{thm:3XOR}
  implies that there is a formula in the support of the distribution
  that is unsatisfiable and that requires \sosdegree
  $\xordegreelb\xorvarnum$ to be refuted 
  in \SOS. 
    We fix~$\xorstd$~to be that formula and
  let
  $\graphstd_\cliquesize$ be the graph $\graphclausestd$ constructed
  as in \refdef{def:reduction}.
    Then
  $\graphclausestd$ is $\cliquesize$-partite,
  with 
  each part having at   most 
  $2^{{24\xorvarnum} / {\cliquesize}} = \cliquevertexbound$
  vertices, and
  the graph
  has no $\cliquesize$-clique because otherwise
  $\xorstd$ would be satisfiable.

  Suppose that there is an \SOS refutation of
  \Cliqueformula{\graphclausestd} of \indexdegree~$\sosdegreestd$.
  We want to argue that
  $\sosdegreestd\geq \cliquedegreebound \cliquesize$.
    Since $\graphclausestd$ is $\cliquesize$-partite, 
    by
  \refpr{stm:indexdegreeblock} 
  the
  formula
  \Blockformula{\graphclausestd} 
  also has 
  an \SOS refutation 
  in
  \indexdegree $\sosdegreestd$.
  By ~\reflem{lmm:reduction},
  this in turn yields an \SOS refutation of $\xorstd$ in \sosdegree~  ${24\sosdegreestd\xorvarnum} / {\cliquesize}$. 
  Now \refth{thm:3XOR} implies
  that
  ${24\sosdegreestd\xorvarnum} / {\cliquesize} \geq
  \xordegreelb\xorvarnum$,
  and hence
  $\sosdegreestd \geq 
  \frac{\xordegreelb}{24}\cliquesize
  = \cliquedegreebound\cliquesize$.

  To conclude the proof, we can just observe that 
  the resolution \reswidth and \ressize upper bounds are a direct
  application of \refpr{stm:cliqueupperbound}.
\end{proof}
 
\section{Size Lower Bounds from Relativization}
\label{sec:size-lower-bound-relativization}

Using the material developed in \refsec{sec:degree-lower-bound}, we
can now describe how to \introduceterm{relativize} formulas in order
to to amplify degree lower bounds to size lower bounds in \SOS\@. This
method works for formulas that are ``symmetric'' in a certain sense,
and so we start by explaining exactly what is meant by this.

\begin{definition}[Symmetric formula]
  \label{def:symmetric_formulas}
Consider a CNF formula $\formulastd$ on variables
$\varstd_{\domainidx,\myvec{\indexaux}}$,
where $\domainidx$ is an index
in some domain $\domainset$ and 
$\myvec{\indexaux}$ denotes a collection of other indices.
For every subset of indices
$
\signaturename = \signature{\indexdegreerun}
\subseteq \domainset
$  
we identify the
subformula $\formulastd_{\signaturename}$ of $\formulastd$ such that
each clause $\clc \in \formulastd_{\signaturename}$
mentions \emph{exactly} 
the indices in~$\signaturename$,
so that a formula~$\formulastd$ of \indexwidth~$\indexdegreestd$ can
be written as  
\begin{equation}
  \label{eq:symmetric_decomposition}
  \formulastd = 
  \Land^{\indexdegreestd}_{\indexdegreerun=0} 
  \Land_{\substack{\signaturename \subseteq \domainset \\
      \setsize{\signaturename} = \indexdegreerun}}
  \formulastd_{\signaturename}
  \eqperiod 
\end{equation}
We say that $\formulastd$ is 
\introduceterm{symmetric \wrt~$\domainset$} 
if it is
invariant with respect to permutations of $\domainset$,
\ie if
for every
 $\formulastd_{\signaturename} \subseteq \formulastd$ 
it also holds that
$
\formulastd_{\pi({\signaturename})}
\subseteq
\formulastd
$,
where $\pi$ is any permutation on~$\domainset$ and
$\pi\left({\signaturename}\right)$ is the 
set of images of the indices in~$\signaturename$.
Phrased differently, $\formulastd$~is symmetric \wrt~$\domainset$ if
for any permutation~$\pi$ on~$\domainset$ the \emph{syntactic} equality
$
\formulastd = 
\Land_{\signaturename \subseteq \domainset
}
\formulastd_{\pi(\signaturename)}
$
holds (where we recall that we treat CNF formulas as sets of clauses).
We apply this terminology for systems of polynomial equations and
inequalities in the same way.
\end{definition}

\ifthenelse{\boolean{conferenceversion}}{}{Let us illustrate \refdef{def:symmetric_formulas} by giving perhaps
the most canonical example of a formula that is symmetric in this
sense.

\begin{example}  \label{ex:php}
  Recall that
  the CNF encoding of the pigeonhole principle 
  with a set of pigeons $\domainset$
  and holes~$[n]$ claims
  that there is a mapping from pigeons in $\domainset$
  to holes such that no hole gets two pigeons. For every
  pigeon $\domainidx\in \domainset$ there is a clause 
  $\Lor_{\indexaux\in[n]} \varstd_{\domainidx,\indexaux}$ 
  and for every two distinct pigeons
  $\domainidx,\domainidx'$ 
  and hole
  $\indexaux$ there is a clause
    $\olnot{\varstd}_{\domainidx,\indexaux} \lor 
  \olnot{\varstd}_{\domainidx',\indexaux}$.  
  Since any permutation of the set of pigeons~$\domainset$ gives us
  back exactly the same set of clauses (only listed in a
  different order) the pigeonhole principle formula
  is symmetric \wrt~$\domainset$.
\end{example}

By now, the reader will already have guessed that another example of a
symmetric formula, which will be more interesting to us in the currect
context, is the $\cliquesize$-clique formula discussed in
\refsec{sec:degree-lower-bound}.
}

\begin{observation}
  The \cliqueformula{\graphstd} formula 
  in
  \refdef{def:cliqueformula}
  over variables
  $\cliquevar_{\domainidx,v}$ 
  is symmetric with respect to the
  indices $\domainidx \in [\cliquesize]$.
\end{observation}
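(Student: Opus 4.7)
The plan is to verify the symmetry condition from \refdef{def:symmetric_formulas} directly by inspecting each of the five clause families in \refdef{def:cliqueformula} and checking that the set of clauses is invariant under any permutation $\pi$ of the index set $[\cliquesize]$. Since the formula is given as a set (so order and repetition do not matter), it suffices to show that applying $\pi$ to the index $\domainidx$ in every clause maps the formula setwise to itself.

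First I would observe that each clause family in~\refdef{def:cliqueformula} is obtained by quantifying \emph{universally} over either a single index $\domainidx \in [\cliquesize]$ (families~\eqref{eq:functional_clause}, \eqref{eq:index_clause_cnfA}, \eqref{eq:index_clause_cnfB}, \eqref{eq:index_clause_cnfC}) or over all ordered pairs of distinct indices $\domainidx, \domainidx' \in [\cliquesize]$ (family~\eqref{eq:edge_clause}), while the non-index data (vertices $u, v$, the enumeration $v_1, \ldots, v_\cliquevertexnum$, the graph~$\graphstd$) are independent of $\domainidx$. Hence for any permutation $\pi$ on $[\cliquesize]$ and any clause $\clc$ indexed by $\domainidx$ (or $(\domainidx, \domainidx')$), the clause $\pi(\clc)$ is also a legal instance of the same family, since $\pi(\domainidx) \in [\cliquesize]$ and $\pi(\domainidx) \neq \pi(\domainidx')$ whenever $\domainidx \neq \domainidx'$.

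Then I would note that because $\pi$ is a bijection on $[\cliquesize]$, the induced map on each clause family is itself a bijection from the family to itself; combining over the five families yields a bijection of $\cliqueformula{\graphstd}$ with itself, which is exactly the syntactic equality of sets required by \refdef{def:symmetric_formulas}. The argument is entirely bookkeeping and the only thing to be mildly careful about is that $\domainidx$ appears both as an index of the main variables $\cliquevar_{\domainidx, \cdot}$ and of the auxiliary variables $\cliqueaux_{\domainidx, \cdot}$ introduced for the $3$-CNF encoding; one must permute both consistently, which causes no difficulty since neither family imposes any constraint linking different indices~$\domainidx$ via the $\cliqueaux$-variables. No step here presents a real obstacle; the observation is essentially a definition-unwinding verification, and can be presented in a few lines.
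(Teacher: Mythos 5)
Your verification is correct: each of the five clause families in \refdef{def:cliqueformula} is quantified uniformly over $\domainidx\in[\cliquesize]$ (or over ordered pairs of distinct indices), so any permutation $\pi$ of $[\cliquesize]$ induces a bijection of the clause set onto itself, which is exactly the syntactic invariance required by \refdef{def:symmetric_formulas}; your remark that the $\cliqueaux$-variables must be permuted consistently with the $\cliquevar$-variables is the right point to be careful about. The paper states this observation without proof, and your definition-unwinding argument is precisely the justification it implicitly relies on.
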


Starting with any formula~$\formulastd$
symmetric \wrt a domain~$\domainset$, we can build a
family of similar formulas by varying the size of the domain.
If $\formulastd$ has \indexwidth~$\indexdegreestd$, 
then for each~$\indexdegreerun$, 
$0 \leq \indexdegreerun \leq \indexdegreestd$,
the subformulas
$\formulastd_{\signaturename}$ 
with $\setsize{\signaturename} = \indexdegreerun$
in~\eqref{eq:symmetric_decomposition}
are the same up to renaming of the domain indices in~$\signaturename$.
Hence, we can arbitrarily pick one such subformula to represent them
all, and denote it as $\formulastd_{\indexdegreerun}$.
The formulas~$\set{
  \formulastd_{\indexdegreerun}
}^{\indexdegreestd}_{\indexdegreerun=0}$
are completely determined by~$\formulastd$, and together
with~$\domainset$ they in turn completely determine~$\formulastd$. 
Using this observation, we can generalize the formula~$\formulastd$
over domain~$\domainset$ to any domain~$\domainset'$ with
$\setsize{\domainset'} \geq \indexdegreestd$
by defining 
$\formulaparam{\domainset'}$
to be the formula
\begin{equation}
  \label{eq:symmetric_generalization}
  \formulaparam{\domainset'}
    =
  \Land^{\indexdegreestd}_{\indexdegreerun=0} 
    \Land_{\substack{\signaturename \subseteq \domainset \\
      \setsize{\signaturename} = \indexdegreerun}}
  \formulastd_{\signaturename}
  \eqcomma
\end{equation}
where each $\formulastd_{\signaturename}$ for
$\setsize{\signaturename} = \indexdegreerun$ 
is an isomorphic
copy of $\formulastd_{\indexdegreerun}$ with its domain indices
renamed according to~$\signaturename$. 
Let us state some simple but useful facts that can be read off
directly from~\eqref{eq:symmetric_generalization}:
\begin{enumerate}
  \item 
    For any formula $\formulastd$ of \indexwidth~$\indexdegreestd$
    symmetric \wrt domain~$\domainset$, it holds that
    $\formulaparam{\domainset}$ is 
    (syntactically)
    equal to $\formulastd$.

  \item 
    For any domains $\domainset',\domainset''$ with
    $\setsize{\domainset'}=\setsize{\domainset''} \geq \indexdegreestd$, 
    the two formulas $\formulaparam{\domainset'}$ and
    $\formulaparam{\domainset''}$ are isomorphic.
    
  \item 
    For any $\domainset'' \supsetneq \domainset'$ with
    $\setsize{\domainset'} \geq \indexdegreestd$,
    the formula
    $\formulaparam{\domainset''}$ contains many isomorphic copies
    of~$\formulaparam{\domainset'}$. 
\end{enumerate}

When we want to emphasize the domain~$\domainset$ of a
formula~$\formulastd$ in what follows, we will denote the
formula~$\formulastd$ as~$\formulaparam{\domainset}$. 
When the domain is 
$\domainset=[\domainsize]$, 
we abuse notation slightly and write
$\formulaparam{\domainsize}$ instead
of~$\formulaparam{[\domainsize]}$.
As discussed above, from a symmetric formula~$\formulastd$ of
\indexwidth~$\indexdegreestd$
we can obtain a well-defined sequence of formulas
$\formulaparam{\domainsize}$ for all $\domainsize \geq \indexdegreestd$.
We say that the \introduceterm{unsatisfiability threshold}
of such a sequence of formulas
is the least
$\domainsize$ such that 
$\formulaparam{\domainsize}$ is unsatisfiable.
\ifthenelse{\boolean{conferenceversion}}           
{}
{For instance, 
  the pigeonhole principle formula in \refex{ex:php} has
  unsatisfiability threshold $n+1$.}

\subsection{\Relativization of Symmetric Formulas}

Given a formula
$\formulastd = \formulastd[\domainlarge]$
symmetric \wrt $[\domainlarge]$
and a parameter~$\domainsmall < \domainlarge$,
we now want to define
the \introduceterm{$\domainsmall$-\relativization}
of~$\formulastd[\domainlarge]$,
which is intended to encode the claim that that there
exists a subset $\domainset \subseteq [\domainlarge]$
of size
$\setsize{\domainset} \geq \domainsmall$
such that the subformula
$\formulaparam{\domainset} \subseteq \formulaparam{\domainlarge}$ is
satisfiable. 
We remark that a CNF formula encoding such a claim will be
unsatisfiable when $\domainsmall$ is at least the 
unsatisfiability threshold of~$\formulastd$.

In order to express the existence of the subset $\domainset$ we use
\introduceterm{\guardvariable{}s}
$\relguard_{1}, \relguard_{2}, \ldots, \relguard_{\domainlarge}$ 
as indicators of membership in the subset and encode the constraint
on the subset size
$
\setsize{\domainset} =
\sum^{\domainlarge}_{\domainidx=1}\relguard_{\domainidx} \geq
\domainsmall
$
as described in the next definition.

\begin{definition}
  \label{def:threshold_formula}
  The \introduceterm{threshold-$\domainsmall$ formula} for variables
  $
  \myvec{\relguard}= 
  \set{\relguard_{1},\ldots,\relguard_{\domainlarge}}
  $ 
  is the $3$-CNF formula 
  $\THR{\domainsmall}{\myvec{\relguard}}$ 
  that consists of the clauses
  \begin{subequations}\label{eq:thrformula}
  \begin{align}
      &
      \thrvaraux_{\thridxaux,0}
    && 
       \thridxaux\in [\domainsmall],
       \label{eq:thrformula_first}
    \\
    &
      \olnot{\thrvaraux}_{\thridxaux,\domainidx-1} \lor
      \thrvarstd_{\thridxaux,\domainidx} \lor
      \thrvaraux_{\thridxaux,\domainidx}  
    &&
       \thridxaux\in[\domainsmall], \domainidx\in [\domainlarge],
       \label{eq:thrformula_second}
    \\
    &
      \olnot{\thrvaraux}_{\thridxaux,\domainlarge} 
    && 
       \domainidx\in [\domainlarge],
       \label{eq:thrformula_third}
    \\
    &
      \olnot{\thrvarstd}_{\thridxaux,\domainidx} \lor
      \olnot{\thrvarstd}_{\thridxaux',\domainidx}
    &&
      \thridxaux,\thridxaux'\in [\domainsmall],
      \thridxaux \neq \thridxaux',
      \domainidx\in[\domainlarge],
      \label{eq:thrformula_injectivity}
    \\
    &
      \olnot{\thrvarstd}_{\thridxaux,\domainidx} \lor
      \relguard_{\domainidx}
    &&
       \thridxaux\in [\domainsmall],
       \domainidx\in[\domainlarge]\eqperiod
       \label{eq:thrformula_count}
  \end{align}
  \end{subequations}
\end{definition}

To see that
$\THR{\domainsmall}{\myvec{\relguard}}$ 
indeed enforces a cardinality constraint, note that the
variables 
$\thrvarstd_{\thridxaux,\domainidx}$ 
encode a mapping
between $[\domainsmall]$ and $[\domainlarge]$
(with
$\thrvarstd_{\thridxaux,\domainidx}$ 
being true if and only if
$\thridxaux$ maps to~$\domainidx$).
\ifthenelse{\boolean{conferenceversion}}{}{The clauses~\eqref{eq:thrformula_first}--\eqref{eq:thrformula_third}
force every $\thridxaux\in[\domainsmall]$ to have an image in
$[\domainlarge]$, since they form the $3$-CNF representation of clauses
$\Lor_{\domainidx} \thrvarstd_{\thridxaux,\domainidx}$.
The clauses~\eqref{eq:thrformula_injectivity} forbid two distinct
elements of $[\domainsmall]$ to have the same image,
so there must be at least $\domainsmall$ elements in the range of the
map, and for each of them the corresponding \guardvariable{} must be
true because of the clauses~\eqref{eq:thrformula_count}.
}We will need the following properties of the threshold formula.

\begin{observation}
  \label{obs:thrupperbound}
  The formula $\THR{\domainsmall}{\myvec{\relguard}}$ 
  in
  \refdef{def:threshold_formula}
  has the following properties:
    \begin{enumerate}
    \item 
      $\THR{\domainsmall}{\myvec{\relguard}}$ 
      has size polynomial 
      in both $\domainsmall$ and $\domainlarge$.
      
    \item
      For any partial assignment to $\myvec{\relguard}$ with at
      least $\domainsmall$ ones there is an assignment to the extension
      variables that satisfies $\THR{\domainsmall}{\myvec{\relguard}}$.
    \item 
      There is a resolution refutation of the set of clauses
      $
      \THR{\domainsmall}{\myvec{\relguard}}
      \union
      \Setdescr
      {\Lor_{\domainidx\in \domainset} \olnot{\relguard}_{\domainidx}}
      {\domainset \subseteq [\domainlarge],\, \setsize{\domainset}=k}
      $
      of \ressize
      $\Bigoh{k\domainlarge^{\domainsmall}}$ and width $\domainsmall+1$.
  \end{enumerate}
\end{observation}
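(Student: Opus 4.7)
The proof splits into three independent parts, so I would treat them in order, with Parts 1 and 2 being routine bookkeeping and Part 3 the only substantive step.

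For Part 1, I would simply count: the families~\eqref{eq:thrformula_first}--\eqref{eq:thrformula_third} and~\eqref{eq:thrformula_count} each contribute $\Bigoh{\domainsmall \domainlarge}$ clauses, while the injectivity clauses~\eqref{eq:thrformula_injectivity} contribute $\Bigoh{\domainsmall^2 \domainlarge}$, giving a total size polynomial in $\domainsmall$ and $\domainlarge$. For Part 2, suppose $\relguard_{j_1}, \ldots, \relguard_{j_\domainsmall}$ are all set to~$1$ for some distinct $j_1, \ldots, j_\domainsmall$; I would set $\thrvarstd_{\thridxaux, j_\thridxaux} = 1$ and $\thrvarstd_{\thridxaux, \domainidx} = 0$ for $\domainidx \neq j_\thridxaux$, and then set $\thrvaraux_{\thridxaux, \domainidx} = 1$ exactly when $\domainidx < j_\thridxaux$. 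A direct verification shows that each clause family~\eqref{eq:thrformula_first}--\eqref{eq:thrformula_count} is satisfied.

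The heart of the argument is Part 3, which I would reduce to \reflem{lmm:bruteforce}. After the renaming $\thridxaux \leftrightarrow i$, $\thrvarstd \leftrightarrow x$, $\thrvaraux \leftrightarrow y$, with $m_i = \domainlarge$, the clauses~\eqref{eq:thrformula_first}--\eqref{eq:thrformula_third} match the first three families of \reflem{lmm:bruteforce} verbatim. What is missing is the family of ``long'' clauses $\olnot{\thrvarstd}_{1, j_1} \lor \cdots \lor \olnot{\thrvarstd}_{\domainsmall, j_\domainsmall}$ indexed by tuples $(j_1, \ldots, j_\domainsmall) \in [\domainlarge]^\domainsmall$. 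I would derive these in two cases. When some $j_\thridxaux = j_{\thridxaux'}$, the clause is a weakening of an injectivity axiom~\eqref{eq:thrformula_injectivity}. When the tuple is injective, the set $\domainset = \set{j_1, \ldots, j_\domainsmall}$ has size $\domainsmall$, so the hypothesis set contains $\Lor_{\domainidx \in \domainset} \olnot{\relguard}_\domainidx$; resolving this in turn against each guard axiom $\olnot{\thrvarstd}_{\thridxaux, j_\thridxaux} \lor \relguard_{j_\thridxaux}$ from~\eqref{eq:thrformula_count} eliminates each $\olnot{\relguard}_{j_\thridxaux}$ and produces the desired long clause in $\domainsmall$ resolution steps, all of width at most $\domainsmall$.

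Feeding these long clauses into \reflem{lmm:bruteforce} yields a refutation of \reswidth $\domainsmall + 1$ and \ressize $\Bigoh{\domainlarge^{\domainsmall}}$; adding the $\Bigoh{\domainsmall}$ derivation cost per injective tuple gives total \ressize $\Bigoh{\domainsmall \domainlarge^{\domainsmall}}$, matching the stated bound (identifying the stray $k$ in the claim with $\domainsmall$). I do not foresee a real obstacle here, the only mildly delicate point being to keep the derivations of the long clauses within \reswidth $\domainsmall + 1$, which the above template achieves on the nose.
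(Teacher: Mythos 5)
Your proposal is correct and matches the paper's own argument: the paper also derives the long clauses $\olnot{\thrvarstd}_{1,\domainidx_{1}} \lor \ldots \lor \olnot{\thrvarstd}_{\domainsmall,\domainidx_{\domainsmall}}$ by resolving the hypothesis clauses $\Lor_{\domainidx\in \domainset} \olnot{\relguard}_{\domainidx}$ with the clauses~\eqref{eq:thrformula_count} and then invokes \reflem{lmm:bruteforce}. In fact you are slightly more careful than the paper: the paper's one-line proof only spells out the derivation for injective tuples, whereas \reflem{lmm:bruteforce} needs \emph{all} tuples in $[\domainlarge]^{\domainsmall}$, and your explicit appeal to the injectivity axioms~\eqref{eq:thrformula_injectivity} for the repeat case fills in a step the paper glosses over. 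Your bookkeeping on width ($\domainsmall+1$) and size (noting that $k$ in the statement is just the macro for $\domainsmall$, so $\Bigoh{\domainsmall\domainlarge^{\domainsmall}}$ is exactly the claimed bound) is correct.
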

\begin{proof}
  The first two items are immediate.  
    In order to show the third item we can first derive each clause
  $\olnot{\thrvarstd}_{1,\domainidx_{1}} \lor \ldots \lor
  \olnot{\thrvarstd}_{\domainsmall,\domainidx_{\domainsmall}}$ 
  by resolving   
  $\olnot{\relguard}_{\domainidx_{1}} \lor \ldots \lor
  \olnot{\relguard}_{\domainidx_{\domainsmall}}$ 
  with clauses of the form~\eqref{eq:thrformula_count}, and then apply
  \reflem{lmm:bruteforce}.
\end{proof}

Using the formula in \refdef{def:threshold_formula} to encode
cardinality constraints on subsets, we can now define formally what we
mean by the relativization of a symmetric formula.

\begin{definition}  [Relativization]
  \label{def:relativization}
  Given a CNF formula $\formulastd$ symmetric \wrt a
  domain~$[\domainlarge]$ and a parameter
  $\domainsmall < \domainlarge$, 
  the
  \introduceterm{$\domainsmall$-relativization} 
  (or \introduceterm{$\domainsmall$-relativized formula})
  $\relstd$ is the
  formula consisting of
  \begin{enumerate}
    \item 
      \label{item:rel_threshold}
      the threshold formula $\THR{\domainsmall}{\myvec{\relguard}}$
      over \guardvariable{}s
            $\myvec{\relguard}= \set{\relguard_{1},\ldots,\relguard_{\domainlarge}}$;
    \item 
      \label{item:rel_guard}
      a \introduceterm{\guardedclause{}} 
            $\olnot{\relguard}_{\domainidx_{1}} 
      \lor
      \ldots
      \lor
      \olnot{\relguard}_{\domainidx_{\indexdegreerun}}
      \lor
      \clc$
            for each clause $\clc\in \formulaparam{\domainlarge}$, 
      where $\signature{\indexdegreerun}$ are the indices mentioned by $\clc$. 
  \end{enumerate}
\end{definition}

Since we are dealing with refutations of unsatisfiable formulas, it
will always be the case that the parameter~$\domainsmall$ in
\refdef{def:relativization} is at least the unsatisfiability threshold
of~$\formulastd$.
An important property of relativized formulas is that the hardness of
$\relstd$ scales nicely with~$\domainlarge$. 
In particular, if 
$\formulaparam{\domainsmall}$ is not too hard, then the
relativization~$\relstd$ also is not too hard.

\begin{proposition}
  \label{stm:relupperbound}
  If $\formulaparam{\domainsmall}$ has a resolution refutation of 
  \ressize~$\ressizestd$ and \reswidth~$\reswidthstd$, then $\relstd$
  has a resolution refutation of \ressize 
  $\ressizestd\cdot
  \binom{\domainlarge}{\domainsmall}+\Bigoh{k\domainlarge^\domainsmall}$ 
  and \reswidth $\reswidthstd+\domainsmall$. 
\end{proposition}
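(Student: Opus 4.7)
The plan is to combine the hypothetical refutation of $\formulaparam{\domainsmall}$ (applied to many isomorphic copies inside $\formulaparam{\domainlarge}$) with the refutation machinery for the threshold formula supplied by Observation~\ref{obs:thrupperbound}.

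First, fix any subset $\domainset \subseteq [\domainlarge]$ with $\setsize{\domainset} = \domainsmall$, and consider the partial assignment $\partassign_{\domainset}$ that sets $\relguard_{\domainidx} = 1$ for every $\domainidx \in \domainset$. By \refdef{def:relativization}, every \guardedclause{} of $\relstd$ whose mentioned indices all lie inside $\domainset$ restricts to a clause of $\formulaparam{\domainset}$, and since $\setsize{\domainset} = \domainsmall$ and $\formulastd$ is symmetric, $\formulaparam{\domainset}$ is isomorphic to $\formulaparam{\domainsmall}$. Hence $\restrict{\relstd}{\partassign_{\domainset}}$ contains an isomorphic copy of $\formulaparam{\domainsmall}$ and admits a resolution refutation of \ressize $\ressizestd$ and \reswidth $\reswidthstd$. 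Applying \reffact{fact:proofweakening} with $\ell = \domainsmall$ and with the unique clause falsified by $\partassign_{\domainset}$ equal to $\cla_{\domainset} = \Lor_{\domainidx \in \domainset} \olnot{\relguard}_{\domainidx}$, we obtain a resolution derivation of $\cla_{\domainset}$ from $\relstd$ of \reswidth at most $\reswidthstd + \domainsmall$ and \ressize at most $\ressizestd + 1$.

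Second, repeating this construction once for each of the $\binom{\domainlarge}{\domainsmall}$ subsets $\domainset$ of size $\domainsmall$ produces all of the clauses $\Setdescr{\Lor_{\domainidx \in \domainset} \olnot{\relguard}_{\domainidx}}{\domainset \subseteq [\domainlarge],\, \setsize{\domainset} = \domainsmall}$ in a combined derivation of \ressize at most $\binom{\domainlarge}{\domainsmall} \cdot (\ressizestd + 1)$ and \reswidth at most $\reswidthstd + \domainsmall$. Feeding these clauses, together with the threshold clauses $\THR{\domainsmall}{\myvec{\relguard}} \subseteq \relstd$, into the refutation promised by item~3 of \refobs{obs:thrupperbound} yields a resolution refutation in \ressize $\Bigoh{k\domainlarge^\domainsmall}$ and \reswidth $\domainsmall + 1$. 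Concatenating the two pieces gives a refutation of $\relstd$ of \ressize $\ressizestd \cdot \binom{\domainlarge}{\domainsmall} + \Bigoh{k\domainlarge^\domainsmall}$ and \reswidth $\max(\reswidthstd + \domainsmall,\, \domainsmall + 1) \leq \reswidthstd + \domainsmall$, as required.

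The only nonroutine ingredient is matching the unique-clause hypothesis of \reffact{fact:proofweakening} with the guard literals that appear negated in $\relstd$, so that the weakening really produces exactly the clauses $\cla_{\domainset}$ needed to feed the threshold refutation; this is why $\relstd$ is defined with $\olnot{\relguard}_{\domainidx}$ (rather than $\relguard_{\domainidx}$) in front of each original clause. Everything else is bookkeeping of sizes and widths across the two stages.
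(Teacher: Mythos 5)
Your proof is correct and follows the paper's own argument essentially verbatim: restrict by $\partassign_\domainset$, observe that the guarded clauses over $\domainset$ collapse to a copy of $\formulaparam{\domainsmall}$, apply Fact~\ref{fact:proofweakening} to lift the given refutation to a derivation of $\Lor_{\domainidx\in\domainset}\olnot{\relguard}_{\domainidx}$, then aggregate over all $\binom{\domainlarge}{\domainsmall}$ subsets and invoke item~3 of Observation~\ref{obs:thrupperbound}. The size and width bookkeeping is also the same (the extra additive $\binom{\domainlarge}{\domainsmall}$ from the per-subset ``$+1$'' is absorbed into the $\Bigoh{k\domainlarge^{\domainsmall}}$ term), so there is nothing substantive to add.
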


\begin{proof}
  For every set $\domainset \subseteq [\domainlarge]$ with
  $\setsize{\domainset}=\domainsmall$
  we
  show how to derive
  \begin{equation}
   \label{eq:large_domain}
   \Lor_{\domainidx\in\domainset }\olnot{\relguard}_{\domainidx}
  \end{equation}
  in \ressize $\ressizestd+1$ and \reswidth
  $\reswidthstd+\domainsmall$ from $\relstd$.
    Without loss of generality 
  (because of symmetry)
  we assume that $\domainset=[k]$, so that
  we want to derive  
  $\olnot{\relguard}_{1} \lor 
  \cdots
  \lor \olnot{\relguard}_{\domainsmall}$.
    Consider the assignment
  $\partassign=\{\relguard_{1}=1,\ldots,\relguard_{\domainsmall}=1\}$.
    In the restricted formula $\restrict{\relstd}{\partassign}$ the
  \guardedclause{}s in \refdef{def:relativization},
  item~\ref{item:rel_guard},
  with all indices in $[\domainsmall]$
  become the clauses of $\formulaparam{\domainsmall}$, which has a
  refutation of \ressize $\ressizestd$ and \reswidth $\reswidthstd$.
    Thus the clause
  $\olnot{\relguard}_{1} \lor
  \cdots
  \lor
  \olnot{\relguard}_{\domainsmall}$
  can be derived in \ressize~$\ressizestd+1$ and
  \reswidth~$\reswidthstd+\domainsmall$
  from $\relstd$ 
  by
  Fact~\ref{fact:proofweakening}.
      After we have derived all clauses
  of the form~  \eqref{eq:large_domain} in this way, 
  we can obtain
  the empty clause in \reswidth $\domainsmall+1$ and in \ressize at
  most $\Bigoh{k\domainlarge^\domainsmall}$ using
  Observation~\ref{obs:thrupperbound}.
  \end{proof}

\subsection{Random Restrictions and Size Lower Bounds}
\label{sec:size-lower-bounds}

To prove size lower bounds on refutations of relativized
formulas~$\relstd$ we use random restrictions sampled as follows. 

\begin{definition}  [Random restrictions for relativized formulas]
  \label{def:restriction}
  Given a relativized formula~$\relstd$, we define a distribution
  $\padistribution$ of partial assignments over the variables of this
  formula by the following process.
  \begin{enumerate}
    \item
      Pick uniformly at random a set
      $\domainset\subseteq[\domainlarge]$ of size $\domainsmall$.

    \item 
      Fix $\relguard_{\domainidx}$ to $1$ if
      $\domainidx\in\domainset$ and to $0$ otherwise.
    
    \item
      Extend this to any assignment to the remaining variables of the
      formula~$\THR{\domainsmall}{\myvec{\relguard}}$ that
      satisfies~      this threshold formula.
      
    \item
      For every variable
      $\varstd_{\indexstd,\myvec{\indexaux}}$
            that has index
      $\domainidx\not\in \domainset$, 
      fix
      $\varstd_{\indexstd,\myvec{\indexaux}}$
      to $0$ or~$1$ uniformly and independently at random.
    \item
      All remaining variables
      $\varstd_{\indexstd,\myvec{\indexaux}}$
            for the indices
      $\domainidx\in \domainset$
      are left unset.
  \end{enumerate}
\end{definition}

It is straightforward to verify that 
the distribution
$\padistribution$ is constructed
in such a way as to give us back 
$\formulaparam{\domainsmall}$ 
from~$\relstd$.

\begin{observation}
  \label{obs:restriction-yields-Fk}
  For any relativized formula~$\relstd$ and any
  $\partassign\in\padistribution$ it holds that
  $\restrict{\relstd}{\partassign}$ is equal to
  $\formulaparam{\domainsmall}$ 
  up to renaming of variables.
\end{observation}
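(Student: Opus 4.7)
The plan is to verify the observation by a straightforward case analysis of what happens to each clause in $\relstd$ under a restriction $\partassign$ drawn from $\padistribution$. Recall from \refdef{def:relativization} that $\relstd$ decomposes into two disjoint groups of clauses: the threshold formula $\THR{\domainsmall}{\myvec{\relguard}}$ and the collection of \guardedclause{}s, one for each clause of $\formulaparam{\domainlarge}$.

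First I would observe that the threshold part is fully eliminated. By construction (steps 2 and 3 of \refdef{def:restriction}) the restriction $\partassign$ sets the guard variables $\myvec{\relguard}$ so that exactly $\domainsmall$ of them are $1$, and then extends this to the variables $\thrvarstd_{\thridxaux,\domainidx}$ and $\thrvaraux_{\thridxaux,\domainidx}$ in such a way that $\THR{\domainsmall}{\myvec{\relguard}}$ is satisfied --- this is possible by item~2 of \refobs{obs:thrupperbound}. Hence every clause of $\THR{\domainsmall}{\myvec{\relguard}}$ is removed by $\partassign$.

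Next I would inspect a typical \guardedclause{} $\olnot{\relguard}_{\domainidx_{1}} \lor \cdots \lor \olnot{\relguard}_{\domainidx_{\indexdegreerun}} \lor \clc$ associated with some $\clc\in\formulaparam{\domainlarge}$ mentioning exactly the indices $\signature{\indexdegreerun}$. If $\set{\domainidx_{1},\ldots,\domainidx_{\indexdegreerun}}\not\subseteq\domainset$, then at least one guard literal is set to true and the clause disappears. Otherwise $\set{\domainidx_{1},\ldots,\domainidx_{\indexdegreerun}}\subseteq\domainset$, every guard literal $\olnot{\relguard}_{\domainidx_{i}}$ is false and so drops out, leaving behind the restriction of $\clc$ to $\partassign$. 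But the variables of $\clc$ all carry indices in $\domainset$, and step~5 of \refdef{def:restriction} leaves every such variable unassigned, so the surviving clause is literally $\clc$.

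Consequently, the clauses of $\restrict{\relstd}{\partassign}$ are exactly those $\clc\in\formulaparam{\domainlarge}$ whose indices lie entirely in $\domainset$, i.e.\ the subformula $\formulaparam{\domainset}$ in the decomposition~\refeq{eq:symmetric_generalization}. Since $\formulastd$ is symmetric with respect to $[\domainlarge]$, $\formulaparam{\domainset}$ is the same as $\formulaparam{\domainsmall}$ up to the renaming induced by any bijection $\domainset\to[\domainsmall]$ (carried over to the variables $\varstd_{\indexstd,\myvec{\indexaux}}$). This is precisely the claimed conclusion, and no real obstacle arises --- the only point requiring some care is to check that the threshold formula is genuinely satisfiable under the partial assignment specified in steps 1--3, which is exactly the content of item~2 of \refobs{obs:thrupperbound}.
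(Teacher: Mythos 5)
Your proof is correct and is exactly the argument the paper has in mind (the paper omits the proof, merely asserting that the claim is ``straightforward to verify''). The case analysis on the guarded clauses, the appeal to item~2 of \refobs{obs:thrupperbound} to kill the threshold part, and the final identification of the surviving clauses with $\formulaparam{\domainset}\cong\formulaparam{\domainsmall}$ via symmetry are precisely the details being elided.
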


The key technical ingredient in the size lower bound on \sumsofsquares
proofs is
the following property of the distribution $\padistribution$,
which was proven in~\cite{AMO13LowerBounds,ALN14NarrowProofsECCC}
but is rephrased below using the notation and terminology 
in this paper.
\ifthenelse{\boolean{conferenceversion}}           
{}
{We also provide a brief proof sketch just to give the reader a sense
  of how the argument goes.}

\begin{lemma}[    \ifthenelse{\boolean{conferenceversion}}    {\cite{ALN14NarrowProofsECCC,AMO13LowerBounds}}    {\cite{AMO13LowerBounds,ALN14NarrowProofsECCC}}  ]
  \label{lmm:StrongRestriction}
  Let $\domainsmall,\genericint,\domainlarge$ be
  positive integers
  such that 
  $ \domainlarge\geq 16$ and 
  $\genericint \leq \domainsmall \leq \domainlarge/(4\log \domainlarge)$.
    Let $\monomialstd$ be a monomial over the variables of $\relstd$
  and let $\partassign$ be a random restriction
  sampled from
  the distribution $\padistribution$
  in
  \refdef{def:restriction}.
    Then the \indexdegree of 
  $\restrict{\monomialstd}{\partassign}$ 
  is less than
  $\genericint$ 
  with probability at least
  $1 - \RestrictionBoundInline{\genericint}$.
\end{lemma}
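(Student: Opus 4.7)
The strategy is a direct probabilistic analysis of how $\partassign \sim \padistribution$ acts on a fixed monomial~$M$, following the switching-lemma-style approach of \cite{AMO13LowerBounds,ALN14NarrowProofsECCC}; the plan proceeds in two stages.

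\textbf{Stage 1 (structural dichotomy).} Let $I \subseteq [\domainlarge]$ be the set of those domain indices $i$ such that some $x$-variable $\varstd_{i,\vec\indexaux}$ appears in $M$, and set $t = \setsize{I}$. By inspection of \refdef{def:restriction}, after the restriction the only variables left unset are $x$-variables whose first index lies in the random set $D$: the \guardvariable{}s $\relguard_i$ and all auxiliary variables of the threshold formula are fixed in steps~2--3, while $x$-variables with first index outside $D$ are fixed to independent fair bits in step~4. Hence the event ``$\text{index-deg}(\restrict{M}{\partassign}) \geq \genericint$'' forces simultaneously (A)~$\setsize{I \cap D} \geq \genericint$ and (B)~$\restrict{M}{\partassign} \not\equiv 0$. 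Condition~(B) in turn demands that every $x$-variable of $M$ with first index in $I \setminus D$ be assigned the value~$1$ in step~4; by multilinearity of~$M$ there is at least one such distinct $x$-variable per index of $I \setminus D$, so conditionally on~$D$ event~(B) has probability at most $2^{-\setsize{I \setminus D}}$. Combining the two gives the basic estimate
\begin{equation*}
  \Pr\bigl[\text{index-deg}(\restrict{M}{\partassign}) \geq \genericint\bigr]
  \;\leq\; \sum_{j = \genericint}^{\min(t,\domainsmall)}
  \Pr\bigl[\setsize{I \cap D} = j\bigr] \cdot 2^{-(t-j)}.
\end{equation*}

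\textbf{Stage 2 (case analysis on $t$).} The plan is to bound this sum by splitting on the ``support size''~$t$. In the \emph{wide} regime $t \geq \genericint + \domainsmall\log\domainlarge$, every summand satisfies $2^{-(t-j)} \leq \domainlarge^{-\domainsmall}$ for $j \leq \domainsmall$, so the trivial bound $\sum_j \Pr[\setsize{I \cap D} = j] \leq 1$ already gives a total of at most $\domainlarge^{-\domainsmall} \leq \domainlarge^{-\genericint}$, which sits comfortably inside the target. In the \emph{narrow} regime $t < \genericint + \domainsmall\log\domainlarge \leq 4\domainsmall\log\domainlarge$ (using $\genericint \leq \domainsmall$ and $\domainlarge \geq 16$), I plan to combine the standard hypergeometric bound $\Pr[\setsize{I \cap D} = j] \leq \binom{t}{j}(\domainsmall/\domainlarge)^j$ with the elementary identity
\begin{equation*}
  \sum_{j \geq \genericint} \binom{t}{j} x^j \;\leq\; \binom{t}{\genericint} x^\genericint (1+x)^{t-\genericint},
\end{equation*}
applied at $x = 2\domainsmall/\domainlarge$, to peel off an exponential factor $(\domainsmall/\domainlarge)^\genericint$. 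The residual product $((1+2\domainsmall/\domainlarge)/2)^{t-\genericint}$ is bounded by~$1$ by the hypothesis $\domainsmall \leq \domainlarge/(4\log\domainlarge)$, while the combinatorial factor $\binom{t}{\genericint}$ is bounded via $\binom{t}{\genericint} \leq \binom{t}{\domainsmall} \leq \binom{4\domainsmall\log\domainlarge}{\domainsmall} \leq (4\domainsmall\log\domainlarge)^\domainsmall$, using the unimodality of binomial coefficients (valid since $\genericint \leq \domainsmall \leq 2\domainsmall\log\domainlarge$ for $\domainlarge \geq 4$, with the corner case $t < 2\domainsmall$ handled separately by the trivial bound $\binom{t}{\genericint} \leq 2^{t} \leq 4^\domainsmall \leq (4\domainsmall\log\domainlarge)^\domainsmall$).

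\textbf{Main obstacle.} The delicate step is the choice of how to bound $\binom{t}{\genericint}$ in the narrow regime so as to reproduce the asymmetric exponent pattern ($\domainsmall$ in the numerator, $\genericint$ in the denominator) of the final estimate: the naive bound $\binom{t}{\genericint} \leq t^\genericint$ would yield only the symmetric expression $(4\domainsmall\log\domainlarge)^\genericint/\domainlarge^\genericint$. Replacing $\binom{t}{\genericint}$ by the larger $\binom{t}{\domainsmall}$ via unimodality is precisely what produces the $(4\domainsmall\log\domainlarge)^\domainsmall$ factor, and this trade-off genuinely requires the two hypotheses $\genericint \leq \domainsmall$ and $\domainlarge \geq 16$. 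Once this choice is made, the binomial-theorem manipulation and the hypergeometric estimates combine cleanly across the two regimes to give the announced bound $(4\domainsmall\log\domainlarge)^\domainsmall/\domainlarge^\genericint$.
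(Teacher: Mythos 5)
Your Stage~1 is sound and correctly identifies the key mechanism from \cite{AMO13LowerBounds,ALN14NarrowProofsECCC}: conditioned on~$D$, every index in $I\setminus D$ contributes a fresh random bit in step~4 of \refdef{def:restriction}, which yields the survival factor $2^{-(t-j)}$ in the basic estimate. This matches the structural decomposition behind the paper's own proof sketch.

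Stage~2, however, is off by a factor of $\domainsmall^{\genericint}$, and this is a genuine gap. After the binomial-sum identity and the (valid) bound $\bigl((1+2\domainsmall/\domainlarge)/2\bigr)^{t-\genericint}\leq 1$, you are left with $\binom{t}{\genericint}(\domainsmall/\domainlarge)^{\genericint}$, so to reach the target $\RestrictionBoundInline{\genericint}$ you must establish $\binom{t}{\genericint}\cdot\domainsmall^{\genericint}\leq(4\domainsmall\log\domainlarge)^{\domainsmall}$ --- the $\domainsmall^{\genericint}$ in the numerator of the hypergeometric factor also has to be absorbed. Your chain $\binom{t}{\genericint}\leq\binom{t}{\domainsmall}\leq(4\domainsmall\log\domainlarge)^{\domainsmall}$ spends the whole budget before paying off this $\domainsmall^{\genericint}$, so what you actually prove is $(4\domainsmall\log\domainlarge)^{\domainsmall}\,(\domainsmall/\domainlarge)^{\genericint}$, which exceeds the claimed bound by $\domainsmall^{\genericint}$. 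The same slip is present in your ``main obstacle'' discussion: $\binom{t}{\genericint}\leq t^{\genericint}$ together with $t<4\domainsmall\log\domainlarge$ yields $(4\domainsmall^{2}\log\domainlarge)^{\genericint}/\domainlarge^{\genericint}$, not the symmetric $(4\domainsmall\log\domainlarge)^{\genericint}/\domainlarge^{\genericint}$ you state. To close the gap, do not enlarge $\binom{t}{\genericint}$ to $\binom{t}{\domainsmall}$: instead use $\binom{t}{\genericint}\leq(et/\genericint)^{\genericint}$, so that with the cutoff $t\leq\domainsmall+\genericint\log\domainlarge$ one gets $\binom{t}{\genericint}\domainsmall^{\genericint}\leq\bigl(e\domainsmall(\domainsmall/\genericint+\log\domainlarge)\bigr)^{\genericint}$; a short derivative check shows this expression is increasing in $\genericint$, hence at most its value at $\genericint=\domainsmall$, namely $\bigl(e\domainsmall(1+\log\domainlarge)\bigr)^{\domainsmall}\leq(4\domainsmall\log\domainlarge)^{\domainsmall}$ once $\log\domainlarge\geq 4$. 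Alternatively, retain the factor $\bigl((1+2\domainsmall/\domainlarge)/2\bigr)^{t-\genericint}\leq(9/16)^{t-\genericint}$ instead of bounding it by~$1$: this exponential decay is exactly the ``monomials shrink'' effect the introduction credits to~\cite{AMO13LowerBounds}, and discarding it is what leaves $\binom{t}{\genericint}$ unpaid for.
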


\ifthenelse{\boolean{conferenceversion}}{}{\begin{proof}[Proof sketch]
  Ley $\genericint'$ be the \indexdegree of $\monomialstd$. 
  The restriction $\partassign$ will set independently and
  uniformly at random at least $\genericint'-\domainsmall$ of its
  variables, so if
    $(\genericint'-\domainsmall)$ is larger than
  ${\genericint \log \domainlarge}$, the restricted monomial
  $\restrict{\monomialstd}{\partassign}$ is non zero with probability
  at most $1/\domainlarge^{\genericint}$.
    Otherwise we upper bound the probability that
  $\restrict{\monomialstd}{\partassign}$ has \indexdegree $\genericint$
  with the probability that the $\genericint'$
  indices in $\monomialstd$ contain $\genericint$ of the
  $\domainsmall$ surviving indices.
    By a union bound
  this probability is at \mbox{most $\RestrictionBoundInline{\genericint}$}.
\end{proof}
}
Using
\reflem{lmm:StrongRestriction}, it is now straightforward to show that
relativization amplifies degree lower bounds to size lower bounds.

\begin{theorem}
  \label{thm:size-lower-bounds}
  Let $\domainsmall,\genericint,\domainlarge$ be 
  positive integers
  such that 
  $ \domainlarge\geq 16$ and 
  $\genericint \leq \domainsmall \leq \domainlarge/(4\log \domainlarge)$.
    If 
  the CNF formula~  $\formulaparam{\domainsmall}$ requires \sumsofsquares
  refutations of \indexdegree $\genericint$,
    then the relativized formula~  $\relstd$ requires \sumsofsquares refutations of \sossize
  $\InverseRestrictionBoundInline{\genericint}$.
\end{theorem}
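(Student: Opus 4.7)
The plan is to argue by contradiction via a standard random restriction argument, using \reflem{lmm:StrongRestriction} and \refobs{obs:restriction-yields-Fk} as the main ingredients. Assume towards contradiction that $\relstd$ has an \SOS refutation~$\pi$ of \sossize $S < \InverseRestrictionBoundInline{\genericint}$. Sample a restriction~$\partassign$ from the distribution $\padistribution$ in \refdef{def:restriction} and consider the restricted derivation $\restrict{\pi}{\partassign}$.

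First I would check that $\restrict{\pi}{\partassign}$ remains a legal \SOS refutation of $\restrict{\relstd}{\partassign}$. This is routine: substituting $0$/$1$-values commutes with multiplication and, after multilinearization, maps a sum of squares to a sum of squares (since $\restrict{(\polysoscomp^{2})}{\partassign} = (\restrict{\polysoscomp}{\partassign})^{2}$ modulo the multilinearity relations). Moreover the number of monomials in each polynomial of the derivation can only decrease, and by \refobs{obs:restriction-yields-Fk} the restricted axiom set is (up to renaming) exactly $\formulaparam{\domainsmall}$. Thus $\restrict{\pi}{\partassign}$ is an \SOS refutation of $\formulaparam{\domainsmall}$ of \sossize at most~$S$.

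Next I would bound its \indexdegree. Apply \reflem{lmm:StrongRestriction} to each monomial~$\monomialstd$ appearing in any of the polynomials $\polymultiplier_{\clauseidx}\polyequation_{\clauseidx}$, $\polysos_{\clauseidx}\polyinequality_{\clauseidx}$, or~$\polysos_{0}$ of~$\pi$: the probability that $\restrict{\monomialstd}{\partassign}$ has \indexdegree at least $\genericint$ is at most $\RestrictionBoundInline{\genericint}$. Taking a union bound over the (at most)~$S$ monomials of~$\pi$, the probability that any monomial in $\restrict{\pi}{\partassign}$ has \indexdegree $\geq \genericint$ is at most
\begin{equation}
S\cdot \RestrictionBoundInline{\genericint} \ < \ \InverseRestrictionBoundInline{\genericint}\cdot \RestrictionBoundInline{\genericint} \ = \ 1\eqperiod
\end{equation}
Hence there is some $\partassign\in\padistribution$ for which every monomial in $\restrict{\pi}{\partassign}$ has \indexdegree strictly less than $\genericint$, and \refobs{obs:restriction-yields-Fk} then yields an \SOS refutation of $\formulaparam{\domainsmall}$ in \indexdegree less than $\genericint$, contradicting the hypothesis. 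Therefore no such refutation~$\pi$ exists and the lower bound $\InverseRestrictionBoundInline{\genericint}$ on \sossize follows.

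The only subtle point I anticipate is the bookkeeping required to make sure that ``\sossize'' really allows the union bound in the form used above, since \refdef{def:sums-of-squares} counts monomials across all polynomials $\polymultiplier_{\clauseidx}\polyequation_{\clauseidx}$, $\polysos_{\clauseidx}\polyinequality_{\clauseidx}$, and~$\polysos_{0}$ as they appear when expanded as linear combinations of distinct monomials; the restriction might merge monomials that were previously distinct, but this can only decrease the total count and preserves the bound $S$ on the number of surviving monomials to which \reflem{lmm:StrongRestriction} is applied. Beyond that, the argument is just a direct combination of the already established restriction lemma with the definition of \sossize, so no further work is needed.
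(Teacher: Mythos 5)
Your proof is correct and takes essentially the same approach as the paper: restrict a hypothetical small refutation of $\relstd$ by $\partassign \sim \padistribution$, apply \reflem{lmm:StrongRestriction} with a union bound over the $\sossizestd$ monomials, and combine with \refobs{obs:restriction-yields-Fk} to contradict the \indexdegree lower bound on $\formulaparam{\domainsmall}$. The only difference is that you make explicit two facts the paper takes for granted (that restriction preserves the SOS structure, and that merging of monomials under restriction can only help), which is fine and arguably clearer.
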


\begin{proof}
            Suppose that there is a \sumsofsquares refutation of $\relstd$
  in size~$\sossizestd$, \ie containing $\sossizestd$~monomials. 
              For $\partassign$ sampled from~$\padistribution$, 
  we see that the probability
  that some monomial in the refutation restricted by~$\partassign$
  has \indexdegree 
  at least
  $\genericint$ is at most
  \begin{equation}
    \label{eq:unionbound_lower_bound}
    \sossizestd \cdot \RestrictionBoundDisplay{\genericint}
  \end{equation}
  by appealing to \reflem{lmm:StrongRestriction} and taking a union bound.

  As noted in \refobs{obs:restriction-yields-Fk},
  the formula~$\restrict{\relstd}{\partassign}$ is equal to
  $\formulaparam{\domainsmall}$ 
  up to renaming of variables, and so it cannot
  have a refutation of \indexdegree~$\genericint$ or less.
    This implies that the bound on the
  probability~\eqref{eq:unionbound_lower_bound} is greater than one,
  and thus we obtain
  \begin{equation}
    \sossizestd > \InverseRestrictionBoundDisplay{\genericint}
    \eqcomma
  \end{equation}
  which proves the theorem.
\end{proof}

\ifthenelse{\boolean{conferenceversion}}{}{\subsection{Statement of Main Result and Discussion of Possible Improvements}
\label{sec:mainresult}
}

Putting everything together, we can establish the formal version of our
main results in
\refth{th:main-result-informal}
as follows.

\begin{theorem}
  \label{thm:main-results}
  Let $\domainsmall = \domainsmall(\domainlarge)$ be any monotone non-decreasing
  integer-valued function such that 
  $\domainsmall(\domainlarge) \leq \domainlarge/(4 \log \domainlarge)$. 
    Then there is a family of $4$-CNF formulas
  $\{\targetformula\}_{\domainlarge\geq 1}$
  with   $\Bigoh{\domainsmall\domainlarge^{2}}$ clauses over
  $\bigoh{\domainsmall\domainlarge}$ variables 
  such that:
  \begin{enumerate}
    \item
      Resolution can refute
      $\targetformula$
      in \ressize
      $\domainsmall^{\bigoh{\domainsmall}}m^{\domainsmall}$ and
      \reswidth $2\domainsmall+1$.
    \item 
      Any \sumsofsquares refutation of $\targetformula$ requires
      \sossize 
      $\Bigomega{\domainlarge^{\cliquedegreebound\domainsmall} /
        (4\domainsmall\log \domainlarge)^{\domainsmall}}$,
      where $\cliquedegreebound$ is a universal constant.
  \end{enumerate}
\end{theorem}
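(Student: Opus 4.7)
The plan is to take as $\targetformula$ the $\domainsmall$-relativization of the $\cliquesize$-clique formula from \refth{thm:degreelowerbound}, extended to a larger index domain. Concretely, I would instantiate \refth{thm:degreelowerbound} with $\cliquesize = \domainsmall$ to obtain a graph $\graphstd_{\domainsmall}$ on $\bigoh{\domainsmall}$ vertices together with the $3$-CNF formula $\cliqueformula{\graphstd_{\domainsmall}}$, which is symmetric with respect to the index domain $[\domainsmall]$ since permuting the indices $\domainidx \in [\domainsmall]$ just reshuffles the clauses in~\refdef{def:cliqueformula}. I would then read this as $\formulaparam{[\domainsmall]}$, extend it to domain $[\domainlarge]$ via~\refeq{eq:symmetric_generalization} to obtain $\formulaparam{\domainlarge}$, and finally define $\targetformula$ as the $\domainsmall$-relativization $\relstd$ according to \refdef{def:relativization}.

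Once this formula is fixed, everything else is essentially a matter of plugging into the machinery built in the preceding sections. For the resolution upper bound, I would invoke \refth{thm:degreelowerbound} to get a resolution refutation of $\formulaparam{\domainsmall} = \cliqueformula{\graphstd_{\domainsmall}}$ in width $\domainsmall+1$ and size $2^{\bigoh{\domainsmall \log \domainsmall}} = \domainsmall^{\bigoh{\domainsmall}}$, and then apply \refpr{stm:relupperbound} to transfer these bounds to $\relstd$, obtaining width $(\domainsmall+1)+\domainsmall = 2\domainsmall+1$ and size $\domainsmall^{\bigoh{\domainsmall}} \binom{\domainlarge}{\domainsmall} + \bigoh{\domainsmall \domainlarge^{\domainsmall}} = \domainsmall^{\bigoh{\domainsmall}} \domainlarge^{\domainsmall}$. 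For the SOS size lower bound, I would feed the degree lower bound $\genericint = \cliquedegreebound \domainsmall$ from \refth{thm:degreelowerbound} into \refth{thm:size-lower-bounds}, whose hypothesis $\genericint \leq \domainsmall \leq \domainlarge/(4\log \domainlarge)$ is satisfied since $\cliquedegreebound < 1$ and the upper bound on $\domainsmall$ is a hypothesis of the theorem. This yields the desired $\Bigomegacompact{\domainlarge^{\cliquedegreebound \domainsmall}/(4\domainsmall \log \domainlarge)^{\domainsmall}}$ lower bound on \sossize.

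The remaining verification is purely syntactic counting to confirm the bounds on the number of variables and clauses of $\targetformula$. The variables are the $\bigoh{\domainsmall \domainlarge}$ variables $\cliquevar_{\domainidx,v}$ indexed by $(\domainidx,v) \in [\domainlarge] \times V(\graphstd_{\domainsmall})$, the $\bigoh{\domainlarge}$ \guardvariable{}s $\relguard_{\domainidx}$, and the $\bigoh{\domainsmall \domainlarge}$ extension variables from the threshold formula. The clauses of $\relstd$ are the \guardedclause{}s (one per clause of $\formulaparam{\domainlarge}$) plus the clauses of $\THR{\domainsmall}{\myvec{\relguard}}$. Checking that each type of clause from~\refdef{def:cliqueformula} contributes at most $\bigoh{\domainsmall \domainlarge^2}$ guarded clauses (the dominant term coming from the $\bigoh{\domainlarge^2}$ pairs of indices combined with the $\bigoh{\domainsmall^2}$ pairs of vertices\ldots actually since the graph has $\bigoh{\domainsmall}$ vertices this still needs a small sanity check of the constants), and that the resulting clauses have width at most $4$ after accounting for the at most two negated guard literals, finishes the argument.

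The only step that requires any genuine thought, rather than assembly, is making sure the relativization framework applies cleanly to $\cliqueformula{\graphstd_{\domainsmall}}$ even though its clauses have \indexwidth only $1$ or $2$ (so some sub-formulas in the decomposition~\refeq{eq:symmetric_decomposition} are empty), and making sure that the constant $\cliquedegreebound$ from \refth{thm:degreelowerbound} is indeed universal and independent of $\domainsmall$, so that the assertion in the final statement about $\cliquedegreebound$ being a universal constant is warranted. Both are immediate from the definitions and from our choice of $\epsilon$ in proving \refth{thm:degreelowerbound}, so I expect no real obstacle.
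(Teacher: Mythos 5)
Your proposal is correct and follows essentially the same route as the paper's own proof: instantiate \refth{thm:degreelowerbound} with $\cliquesize=\domainsmall$, take $\formulaparam{\domainsmall}=\cliqueformula{\graphstd}$, relativize to get $\targetformula=\relstd$, pull the resolution upper bound through \refpr{stm:relupperbound}, and feed the \indexdegree lower bound $\cliquedegreebound\domainsmall$ into \refth{thm:size-lower-bounds}. The one closing observation you flag (that \indexwidth-$0$ or \indexwidth-$1$ levels in the decomposition~\eqref{eq:symmetric_decomposition} may be empty, and that $\cliquedegreebound$ is a universal constant) is indeed immediate, and the variable and clause counts check out as you indicate, so there is nothing missing.
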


\begin{proof}
  Let $\graphstd$ be a graph with properties as in
  \refth{thm:degreelowerbound} 
  and let $\formulaparam{\domainsmall}$ be the CNF formula
  \cliqueformula{\graphstd} 
  in \refdef{def:cliqueformula}.
  Since $\formulaparam{\domainsmall}$ is symmetric,
  we can relativize it as in
  \refdef{def:relativization}
  to obtain
  $\relstd$, which will be our $4$-CNF formula~$\targetformula$.
  \Refth{thm:degreelowerbound} says that
  $\formulaparam{\domainsmall}$ has a resolution refutation of
  \ressize~$\domainsmall^{\bigoh{\domainsmall}}$ and
  \reswidth~$\domainsmall+1$, and appealing to
  \refpr{stm:relupperbound} we get 
  a resolution refutation of~$\targetformula$ in size 
  $\domainsmall^{\bigoh{\domainsmall}}m^{\domainsmall}$ and \reswidth
  $2\domainsmall+1$. 
  Since we have a \indexdegree lower bound
  of $\cliquedegreebound\domainsmall$
  for refuting
  $\formulaparam{\domainsmall}$ 
  according to
  \refthm{thm:degreelowerbound},
  we can use
  \refth{thm:size-lower-bounds} to deduce that the required  \sossize
  to refute $\targetformula$ in \sumsofsquares  is at least
  $\Bigomega{
    \domainlarge^{\cliquedegreebound\domainsmall}  / 
    (4\domainsmall \log \domainlarge)^{\domainsmall}
  }$.
  The theorem follows.
\end{proof}

We remark that 
straightforward calculations show that when
$\domainsmall(\domainlarge) = 
\Bigoh{\domainlarge^{\lowerboundlimit}}$ for
$\lowerboundlimit<\cliquedegreebound$
the upper bound 
in   \refthm{thm:main-results} is 
$\domainlarge^{\bigoh{\domainsmall}}$ 
and the lower bound is
$\domainlarge^{\bigomega{\domainsmall}}$.

\ifthenelse{\boolean{conferenceversion}}{}{Let us now discuss a couple of the parameters in
\refth{thm:main-results} 
and how they could be improved slightly.
We stated our main theorem for $4$-CNF formulas, since that is the
clause size that results naturally from our construction. However, if
one wants to minimize the clause width and obtain an analogous result
for \mbox{$3$-CNF} formulas this is also possible to achieve, just as
was done in~\cite{ALN14NarrowProofsECCC} for other proof systems.
To prove a version of \refth{thm:main-results} for $3$-CNF formulas
we need a simple but rather ad-hoc variation of the relativization
argument presented above. 
Let us briefly describe what modifications are needed.

The way we presented the construction above, we started with the
$3$-CNF formula
\cliqueformula{\graphstd} and then applied relativization, which
turned the 
clauses~\eqref{eq:index_clause_cnfA}--\eqref{eq:index_clause_cnfC}
into the $4$-CNF formula
\begin{subequations}
  \label{eq:index_clause_relativized}
  \begin{align}
    \label{eq:index-clause-relativized-a}
    &
      \olnot{\relguard}_{\domainidx} \lor \cliqueaux_{\domainidx,0}
    &&
       \text{$\domainidx\in[\cliquesize]$,}
    \\
    \label{eq:index-clause-relativized-b}
    &
      \olnot{\relguard}_{\domainidx} \lor 
      \olnot{\cliqueaux}_{\domainidx,{(\indexaux-1)}} \lor
      \cliquevar_{\domainidx,v_\indexaux} \lor
      \cliqueaux_{\domainidx,\indexaux}
    &&
       \text{$\domainidx\in[\cliquesize]$, $\indexaux\in[\cliquevertexnum]$,}
    \\
    \label{eq:index-clause-relativized-c}
    &
      \olnot{\relguard}_{\domainidx} \lor \olnot{\cliqueaux}_{\domainidx,\cliquevertexnum}
    &&
       \text{$\domainidx\in[\cliquesize]$.}
  \end{align}
\end{subequations}
An alternative approach 
would be to first encode 
\cliqueformula{\graphstd} with
wide clauses
$\Lor^{\cliquevertexnum}_{\indexaux=1}
\cliquevar_{\domainidx,v_{\indexaux}}$ instead of 
clauses
of the form~\eqref{eq:index_clause_cnfA}--\eqref{eq:index_clause_cnfC},
relativize this new, wide formula, and then convert the
relativized formula into $3$-CNF using extension variables.
Instead of
clauses~\refeq{eq:index-clause-relativized-c}--\refeq{eq:index-clause-relativized-c},
this would yield the collection of clauses
\begin{subequations}
  \label{eq:wide_clause_relativized}
  \begin{align}
  \label{eq:wide-clause-relativized-a}
    &
      \olnot{\relguard}_{\domainidx} \lor \cliqueaux_{\domainidx,0}
    &&
       \text{$\domainidx\in[\cliquesize]$,}
    \\
  \label{eq:wide-clause-relativized-b}
    &
      \olnot{\cliqueaux}_{\domainidx,{(\indexaux-1)}} \lor
      \cliquevar_{\domainidx,v_\indexaux} \lor
      \cliqueaux_{\domainidx,\indexaux}
    &&
       \text{$\domainidx\in[\cliquesize]$, $\indexaux\in[\cliquevertexnum]$,}
    \\
  \label{eq:wide-clause-relativized-c}
    &
      \olnot{\cliqueaux}_{\domainidx,\cliquevertexnum}
    &&
       \text{$\domainidx\in[\cliquesize]$.}
  \end{align}
\end{subequations}

This causes a small technical problem in that 
some of these clauses mention
$\domainidx\in[\domainlarge]$ but lack the literal
$\olnot{\relguard}_{\domainidx}$, and so a random restriction sampled
as in \refdef{def:restriction} may actually falsify these clauses.
The solution to this is to change the random assignment so that 
when $\relguard_{\domainidx}=0$,
we fix each $\cliquevar_{\domainidx,v_\indexaux}$ uniformly at random
in~$\set{0,1}$, set each $\cliqueaux_{\domainidx,{(\indexaux-1)}} $
equal to the value assigned to~$\cliquevar_{\domainidx,v_\indexaux}$,
and finally fix $\cliqueaux_{\domainidx,\cliquevertexnum}$ to~$0$.
The new restriction satisfies all
clauses~\refeq{eq:wide-clause-relativized-a}--\refeq{eq:wide-clause-relativized-c},
and the proof of
\reflem{lmm:StrongRestriction}
still goes through.

Another parameter in \refth{thm:main-results} that could be improved
is the value of~$\cliquedegreebound$, which determines how tightly the
size lower bound matches the upper bound implied by width/degree and
also 
how 
high we can push $\cliquesize(\domainlarge)$.
In our reduction from a \xorformula  formula~$\xorstd$ to the clique
formula \Cliqueformula{\graphclausestd} we start by splitting the
$8\xorvarnum$~constraints into $\cliquesize$ blocks.
The vertices in each block correspond to assignments to
$24\xorvarnum/\cliquesize$~variables, and because of this an \SOS
refutation in \indexdegree~$\sosdegreestd$ of
\Cliqueformula{\graphclausestd} can be converted to a refutation in
degree~$24\sosdegreestd\xorvarnum/\cliquesize$ of~$\xorstd$.

If we want to obtain a more efficient reduction, we could instead split
the \emph{$\xorvarnum$~variables}, rather than the
$8\xorvarnum$~constraints,
into $\cliquesize$~parts.
In this way each vertex in $\graphclausestd$ would correspond to an
assigment to $\xorvarnum/\cliquesize$~variables, and an \SOS
refutation in \indexdegree $\sosdegreestd$ would translate to a
refutation of~$\xorstd$ in degree~$\sosdegreestd\xorvarnum/\cliquesize$.
But now we cannot reduce to the clique problem anymore.
Splitting with respect to constraints allows us to enforce pairwise
consistency between vertices in different blocks referring to common
variables. 
When splitting with respect to variables, the vertices in different
blocks correspond to partial assigments on disjoint domains and so are
always pairwise compatible. However, we must still require that these
partial assignments are consistent with the constraints in~$\xorstd$.
Each such constraint refers to up to three blocks. Thus, any satisfying
assignment to~$\xorstd$ corresponds to $\cliquesize$~vertices such
that no triple of vertices violates an \xorformula constraint.
This reduces to the problem of finding a
\mbox{$\cliquesize$-hyperclique} in a \mbox{$3$-uniform} hypergraph.
The rest of the reduction can be made to work as in
\reflem{lmm:reduction}.
In the end we get an analogous result of that in
\refth{thm:degreelowerbound} but with 
$\cliquedegreebound$ equal to $\xordegreelb$ instead
of~$\frac{\xordegreelb}{24}$, which also improves 
\refth{thm:main-results}.
In this paper we instead presented a reduction to the $\cliquesize$-clique
problem for standard graphs, partly because we believe that a
\sosdegree lower bound for this problem can be considered to be of
independent interest. 
} 
\section{Concluding Remarks}
\label{sec:conclusion}

In this paper, we show that using Lasserre semidefinite programming
relaxations to find degree-$\sosdegreestd$ \sumsofsquares proofs is
optimal up to  
constant factors in the exponent of the running time. More precisely,
we show that there are constant-width CNF formulas on~$\varnum$ variables
that are refutable in sums-of-squares in degree~$\sosdegreestd$ but require
proofs of size $\varnum^{\bigomega{\sosdegreestd}}$.

As for so many other results for the \sumsofsquares proof system, in
the end our proof boils down to a reduction from \xorformula using
Schoenebeck's version~\cite{Schoenebeck08LinearLevel}
of Grigoriev's degree lower bound~\cite{Grigoriev01LinearLowerBound}.
It would be very interesting to obtain other SOS degree lower bounds
by different means than by reducing from Grigoriev's results for
\xorformula and knapsack.  

Another interesting problem would be to prove 
average-case
SOS degree lower bound
for \mbox{$\cliquesize$-clique} formulas over Erd\H{o}s--Rényi random graphs,
or size lower bounds for (non-relativized) \mbox{$\cliquesize$-clique}
formulas over any graphs.  In this context, it might be worth to
point out that the problem of establishing proof size lower bounds
for \mbox{$\cliquesize$-clique} formulas for constant~$\cliquesize$,
which has been discussed, for instance, in~\cite{BGLR12Parameterized},
still remains open even for the resolution proof system
(although lower bounds have been shown for
tree-like resolution in~\cite{BGL13ParameterizedDPLL}
and for full resolution 
for a version of clique formulas
using a different encoding more amenable
to lower bound techniques in~\cite{LPRT13ComplexityRamsey}).
 
\section*{Acknowledgements}

We are grateful to Albert Atserias for numerous discussions about (and
explanations of) Lasserre/SOS and other LP and SDP hierarchies, as
well as for help with correcting some references in a preliminary
version of this manuscript.  We thank Per Austrin for valuable
suggestions and feedback during the initial stages of this work, and
Michael Forbes for comments on an early version of the overall proof
construction.  Finally, we are thankful for the comments from the
anonymous reviewers, which helped improve the exposition in this paper
considerably.

The authors were funded by the
European Research Council under the European Union's Seventh Framework
Programme \mbox{(FP7/2007--2013) /} ERC grant agreement no.~279611.
\TheauthorJN 
was also supported by
Swedish Research Council grants 
\mbox{621-2010-4797}
and
\mbox{621-2012-5645}.

\bibliography{ArXiv-SizeSumsOfSquares}

\bibliographystyle{alpha}

\end{document}